\newcommand{\algorithmfootnote}[2][\footnotesize]{%
  \let\old@algocf@finish\@algocf@finish
  \def\@algocf@finish{\old@algocf@finish
    \leavevmode\rlap{\begin{minipage}{\linewidth}
    #1#2
    \end{minipage}}%
  }%
}
\DeclareMathOperator*{\argmax}{arg\,max}
\def\cN{\mathcal{N}}
\def\cX{\mathcal{X}}
\def\cV{\mathcal{V}}
\def\cE{\mathcal{E}}
\def\bbS{\mathbb{S}}
\def\bbR{\mathbb{R}}
\def\bbN{\mathbb{N}}
\def\ind{\mathbbm{1}}   
\def\pimin{\pi_{\mathrm{min}}}
\def\tmix{\tau}
\def\N{M}
\def\R{R}
\def\T{\top} 
\def\bK{\mathsf{K}}
\def\bP{\mathsf{P}}
\def\bX{\bm{X}}
\def\bI{\bm{I}}
\def\by{\bm{y}}
\def\bz{\bm{z}}
\def\bbeta{\bm{\beta}}
\def\bA{\mathsf{A}} 
\def\PR{\mathsf{P}_0}
\def\lbda{\alpha} 
\def\mvn{\mathrm{MVN}}
\def\d{\mathrm{d}}
\def\barE{\bar{E}}
\def\din{d_{\mathrm{in}}}              
\def\dout{d_{\mathrm{out}}}            
\def\Pa{\mathrm{Pa}} 
\def\Ch{\mathrm{Ch}}  
\newcommand{\lmin}{\lambda_{\mathrm{min}}}
\newcommand{\lmax}{\lambda_{\mathrm{max}}}
\def\vmin{\underline{\nu}}
\def\vmax{\overline{\nu}}
\theoremstyle{remark}
\newtheorem{remark}{Remark}
\newtheorem{proposition}{Proposition}
\newtheorem{theorem}{Theorem}
\newtheorem{corollary}{Corollary}
\newtheorem{lemma}{Lemma}
\newtheorem{example}{Example}
\newtheorem{definition}{Definition}
\title{Dimension-free Relaxation Times of Informed MCMC Samplers on Discrete Spaces} 
\author{Hyunwoong Chang$^1$ and  Quan Zhou$^{2,}$\thanks{Corresponding author: quan@stat.tamu.edu}}
\date{}
\affil{$^1$Department of Mathematical Sciences, The University of Texas at Dallas\\
$^2$Department of Statistics, Texas A\&M University}
\begin{document}
\maketitle

\begin{abstract}
Convergence analysis of Markov chain Monte Carlo methods in high-dimensional statistical applications is increasingly recognized. In this paper, we develop general mixing time bounds for Metropolis-Hastings algorithms on discrete spaces by building upon and refining some recent theoretical advancements in Bayesian model selection problems. 
We establish sufficient conditions for a class of informed Metropolis-Hastings algorithms to attain relaxation times that are independent of the problem dimension. 
These conditions are grounded in the high-dimensional statistical theory and allow for possibly multimodal posterior distributions. 
We obtain our results through two independent techniques: the multicommodity flow method and single-element drift condition analysis; we find that the latter yields a  slightly  tighter mixing time bound. 
Our results  are readily applicable to a broad spectrum of statistical problems with discrete parameter spaces,  as we demonstrate using both theoretical and numerical examples. 
\end{abstract}
\noindent
\textit{Keywords:}
{
\small
Drift condition; Finite Markov chains; Informed Metropolis-Hastings; Mixing time; Model selection; Multicommodity flow; Random walk Metropolis-Hastings; Restricted spectral gap
}

\section{Introduction}\label{sec:intro}

\subsection{Convergence of MCMC algorithms} \label{sec:intro-mcmc}
Approximating probability distributions with unknown normalizing constants is a ubiquitous challenge in data science. 
In Bayesian statistics, the posterior probability distribution can be obtained through the Bayes rule, but without conjugacy, calculating its normalizing constant is a formidable challenge.   
In such scenarios, Markov chain Monte Carlo (MCMC) methods are commonly employed to simulate a Markov chain whose stationary distribution coincides with the target distribution. The convergence assessment of MCMC methods usually relies on an empirical diagnosis of Markov chain samples, such as effective sample size~\citep{robert1999monte, gong2016practical} and Gelman-Rubin statistic~\citep{gelman1992inference}, 
but the results can sometimes be misleading due to the intrinsic difficulty in detecting the non-convergence~\citep{cowles1996markov}.   
Meanwhile, theoretical analyses of the samplers’ convergence rates offer a complementary perspective: they not only illuminate how MCMC performance scales with the problem dimensionality but also guide the selection of suitable algorithms for complex or high-dimensional settings.  
A widely used metric for assessing the convergence is the mixing time, which indicates the number of iterations required for the MCMC sampler to get sufficiently close to the target posterior distribution in  total variation distance.   
Most existing studies on the mixing of MCMC algorithms consider Euclidean spaces. 
In particular, a very rich theory for Metropolis-Hastings (MH) algorithms with log-concave target distributions has been developed~\citep{Dwivedi2019log, durmus2019high, cheng2018underdamped, dalalyan2017theoretical, chewi2021optimal}, providing useful insights into the sampling complexity and guidance on the tuning of algorithm parameters. 
These results encompass a wide range of statistical models  owing to the Bernstein-von Mises theorem~\citep{belloni2009computational, tang2022computational}, which suggests that the posterior distribution becomes approximately normal (and thus log-concave) as the sample size tends to infinity.  
For Gibbs sampling on continuous spaces, model-specific complexity bounds have been obtained for various Bayesian problems, including probit regression~\citep{qin2019convergence, qin2022wasserstein}, linear mixed models~\citep{jin2022dimension, yang2023complexity}, and general hierarchical models~\citep{ascolani2024dimension}.    
In contrast, the literature on the complexity of MCMC samplers for discrete statistical models is limited. Further, since every discrete space comes with its own combinatorial structure, researchers often choose to investigate each problem on a case-by-case basis rather than formulating a unified theory for arbitrary discrete spaces.  
Notable examples  include variable selection~\citep{yang2016computational},  community detection~\citep{zhuo2021mixing}, structure learning~\citep{zhou2021complexity}, 
and classification and regression trees (CART)~\citep{kim2023mixing}.  

In an earlier study from the authors~\citep{zhou2021complexity}, we proposed a general theoretical framework for studying the complexity of MCMC sampling  for high-dimensional statistical models with discrete parameter spaces. By assuming a mild unimodal condition on the posterior distribution, 
we derived mixing time bounds for random walk MH algorithms that were sharper than existing ones.  
Here ``unimodal'' means that other than the global maximizer, every state has a ``neighbor'' with strictly larger posterior probability, where two states are said to be neighbors if the proposal probability from one to the other is nonzero. 
Selecting this neighborhood relation often entails a trade-off, and the result of \citet{zhou2021complexity} shows that rapid mixing can be achieved if (i) the neighborhood is large enough so that the posterior distribution becomes unimodal, and (ii) the neighborhood is not too large so that the proposal probability of each neighboring state is not exceedingly small. 
Techniques from high-dimensional statistical theory can be used to establish the unimodality in a way similar to how posterior consistency is proved, but the analysis is often highly complicated and tailored to individual problems. 
We will not explore these statistical techniques in depth in this paper; instead, our focus will be the mixing time analysis under the unimodal condition or  a more general condition allowing for mulitmodality  that will be introduced later. 

\subsection{Main contributions of this work}\label{sec:intro-contributions}
Our first objective is to extend the result of~\citet{zhou2021complexity} to more sophisticated MH schemes that do not use random walk proposals. We consider the informed MH algorithm proposed in~\citet{zanella2020informed},  which emulates the behavior of gradient-based MCMC samplers on Euclidean spaces~\citep{duane1987hybrid, roberts2002langevin, girolami2011riemann} and has gained increasing attention among the MCMC practitioners~\citep{grathwohl2021oops, zhang2022langevin}.  
The proposal schemes used in the informed algorithms always assess the local posterior landscape surrounding the current state and then tune the proposal probabilities to prevent the sampler from visiting states with low posterior probabilities. 
However, such informed schemes do not necessarily result in faster mixing, because the acceptance probabilities can be extremely low for proposed states, and it was shown in~\citet{zhou2022dimension} that, for Bayesian variable selection, naive informed MH algorithms can even mix much more slowly than random walk MH algorithms. 
We will show that this acceptance probability issue can be overcome by generalizing the ``thresholding'' idea introduced in~\cite{zhou2022dimension}.  
Moreover, if the posterior distribution is unimodal and tails decay sufficiently fast, we prove that there always exists an informed MH scheme whose relaxation time can be bounded by a  constant, independent of the problem dimension, which immediately leads to a nearly optimal bound on the mixing time.
Our result significantly generalizes the finding of~\cite{zhou2022dimension}, which only considered the high-dimensional variable selection problem.  

Delving into the more technical aspects, we investigate and compare two different approaches to obtaining  sharp mixing time bounds on discrete spaces: the multicommodity flow method~\citep{gine1996lectures, sinclair1992improved}, and the single-element drift condition~\citep{jerison2016drift}. 
The former bounds the spectral gap of the transition matrix by identifying likely paths connecting any two distinct states. Compared to another path argument, known as ``canonical path ensemble'' and commonly used in the statistical literature~\citep{yang2016computational, zhuo2021mixing, kim2023mixing, chang2022rapidly}, the multicommodity flow method is more flexible and can yield tighter bounds.  
The drift condition method is based on a coupling argument and stands as the most popular technique for deriving the convergence rates of MCMC algorithms on general state spaces~\citep{rosenthal1995minorization, roy2007convergence, fort2003geometric, johndrow2020scalable}. 
A notable difference from the path method is that this approach directly bounds the total variation distance from the stationary distribution without analyzing the spectral gap. 
The existing literature suggests that both methods have their own unique strengths, and which method yields a better mixing time bound depends on the problem~\citep{jerrum1996markov, guruswami2000rapidly, anil2001coupling}.   
To our knowledge, the two approaches have never been compared for analyzing MCMC algorithms for discrete-space statistical models. Indeed, the only works we are aware of that use drift conditions in these contexts are \citet{zhou2022dimension} and \citet{kim2023mixing}, which considered variable selection and CART, respectively. 
We will demonstrate how  the two methods can be applied under the general framework considered in this paper, and it will be shown that the drift condition approach  yields a slightly sharper bound.  

As the last major contribution of this work, we further extend our general theory beyond unimodal settings. 
For multimodal target distributions, while various mixing time bounds have been obtained~\citep{guan2007small, woodard2009conditions, zhou2022rapid}, it is generally impossible to obtain rapid mixing results where the mixing time grows only polynomially with respect to the problem dimension. 
This is because the definition of mixing time considers the worst-case scenario regarding the choice of the initial distribution, and the chain can easily get stuck if it is initialized at a local mode. When these local modes (other than the global one) possess only negligible posterior mass and are unlikely to be visited by the sampler given  a ``warm'' initialization, mixing time may provide an overly pessimistic estimate for the convergence rate.  
One possible remedy was described in the recent work of \citet{atchade2021approximate}, who proposed to study initial-state-dependent mixing times using restricted spectral gap, a notion that generalizes the spectral gap of a transition matrix, and derived a rapid mixing result for Bayesian variable selection. 
We extend this technique to our setting and show that it can be integrated with the multicommodity flow method to produce sharp mixing time bounds for both random walk and informed MH schemes.

A summary of the mixing time bounds obtained in this work is  presented in Table~\ref{table:overview}. It is important to note that these bounds do not account for the per-iteration computational cost. In practice, an informed MH iteration is typically much more time-consuming; see Remark~\ref{rmk:wall time}.

\begin{table}[t]
\centering
\caption{ Mixing time bounds obtained in this paper. 
Here, $\pi$ is the target posterior distribution,  $\pi_{\min} = \min_x \pi(x)$, $M$ is the maximum neighborhood size, and $L$ is a parameter of the informed proposal scheme. 
Mixing times $\tau$ and $\tau_x$ are defined in~\eqref{eq:def-mix} and~\eqref{eq:def-mix-x} respectively, where  $\epsilon$ is treated as fixed. 
See theorem statements for the required assumptions.  Note that the computational cost per iteration is not considered. 
} 
\resizebox{\columnwidth}{!}{
\begin{tabular}{ccccc}
\hline
\multirow{2}{*}{Setting} & \multirow{2}{*}{Proof techniques}  & \multirow{2}{*}{Random walk MH}     & \multirow{2}{*}{Informed MH} & Require   \\  
& & & & warm start \\ \hline 
 \multirow{3}{*}{Unimodal}   & Path method &   Thm~\ref{th:rwmh-1}: \, $\tau = O\left(\N \log \frac{1}{ \pi_{\min } } \right)$    &   Thm~\ref{th:mix-imh}: \,  $\tau = O\left(\log  \frac{1}{ \pi_{\min } } \right)$  & No \\ 
\cmidrule{2-5}
 & \multirow{2}{*}{Drift condition}  &  \multirow{2}{*}{Not considered (Remark~\ref{rmk:rwmh_drift})
 }     &    \multirow{2}{*}{Thm~\ref{th:mix-imh-drift}: \, $\tau = O\left( \frac{\log (1/ \pi_{\min }) }{\log( L / M)}  \right)$} &  \multirow{2}{*}{No} \\ 
 & & &  &  \\ 
 \hline 
Beyond        & Path method  +       & \multirow{2}{*}{Thm~\ref{th:mix-rwmh-approx}: \, $\tau_x = O\left(\N \log \frac{1}{\pi(x)} \right)$} &  \multirow{2}{*}{Thm~\ref{th:mix-imh-approx}: \, $\tau_x  = O\left(\log \frac{1}{\pi(x)} \right)$ } & \multirow{2}{*}{Yes} \\
unimodal  & restricted spectral gap      &     &         \\ \hline
\end{tabular}
}

\label{table:overview}
\end{table}

\subsection{Organization of the paper}\label{sec:intro-structure}
In Section~\ref{sec:bvs}, we review some recent results about the Bayesian variable selection problem, which will be used as an illustrative example throughout this paper; this section can be skipped for knowledgeable readers.  
Section~\ref{sec:prelim} presents the setup for our theoretical analysis and reviews the mixing time bounds for random walk MH algorithms. 
In Section~\ref{sec:imh}, we consider unimodal target distributions and prove   mixing time bounds for informed MH algorithms via both the path method and drift condition analysis. 
Section~\ref{sec:beyond} generalizes our results to a potentially multimodal setting. 
Simulation studies are conducted in Section~\ref{sec:simul}, and Section~\ref{sec:discussion} concludes the paper with some further discussion. 
Proofs for all results presented in the main text and additional technical details are deferred to the appendix.

\section{Working example: variable selection}\label{sec:bvs}
We review in this section some recent advancements in understanding the complexity of MCMC sampling for high-dimensional spike-and-slab variable selection, one of the most representative examples for discrete-space models in Bayesian statistics~\citep{tadesse2021handbook}.  

\subsection{Target posterior distribution}\label{sec:bvs-background}
We consider the standard linear regression model where a design matrix $\bX \in \bbR^{n \times p}$ and a response vector $\by \in \bbR^{n}$ are assumed to satisfy 
\begin{align}\label{eq:ln.model}
    \by = \bX \bbeta^* + \bz, \quad  \bz \sim \mvn(0, \sigma^2 \bI_n),
\end{align}
where $\bbeta^* \in \bbR^p$ is the unknown vector of regression coefficients and $\mvn$ denotes the multivariate normal distribution.  
Introduce the indicator vector $\delta  = (\delta_1, \dots, \delta_p) \in \{0, 1\}^p$ such that $\delta_j = 1$ indicates that the $j$-th variable has a non-zero effect on the response. 
Variable selection is the task of identifying the true value of $\delta$, that is, finding the subset of  variables with  nonzero regression coefficients.   
Given $\delta$, we can write $\by = \bX_\delta \bbeta_{\delta} + \bz$ with $\bz \sim \mvn(0, \sigma^2 \bI_n)$, where $\bX_\delta$ and $\bbeta_\delta$, respectively, denote the submatrix and subvector corresponding to those variables selected in $\delta$. 
We will also call $\delta$ a model, and $\delta = (0, 0, \dots, 0)$ will be referred to as the empty model. 
Consider the following two choices for the space of allowed models: 
\begin{equation}\label{eq:def-vs}
        \cV = \{0, 1\}^p,  \text{ or }  \cV_s = \{ \delta \in \cV \colon ||\delta||_1 \leq s \}, 
\end{equation} 
where $|| \cdot ||_1$ denotes the $L^1$-norm and   $s$ is a positive integer.  
The set $\cV$ is the unrestricted space, while $\cV_s$ represents a restricted space with some sparsity constraint. 
Let $|\cdot|$ denote the cardinality of a set. 
It is typically assumed in the high-dimensional literature that the sparsity parameter $s$ increases to infinity with $p$, in which case both $|\cV| = 2^p$ and $|\cV_s| = O(p^s)$ grow super-polynomially with $p$. 

Spike-and-slab variable selection is a Bayesian procedure for constructing a posterior distribution over $\cV$ or $\cV_s$, which we denote by $\pi(\delta)$. 
In Section~\ref{subsec:bvs}, we recall one standard approach to specifying the prior distribution for $(\delta, \bbeta, \sigma^2)$, which leads to the following  closed-form expression for $\pi$ up to a normalizing constant: 
\begin{equation}\label{eq:post}
     \quad \pi(\delta) \propto \frac{1}{p^{\kappa ||\delta||_1}} \cdot \frac{(1+g)^{-(||\delta||_1 / 2)}}{\left\{1+g\left( 1 - \mathsf{r}^2(\delta) \right)\right\}^{n / 2}}.  
\end{equation}
In~\eqref{eq:post}, $\kappa, g$ are prior hyperparameters and $\mathsf{r}^2(\delta)= \by^\T (\bX_\delta(\bX_\delta^\T \bX_\delta)^{-1} \bX_\delta^\T) \by / (\by^\T \by)$ is the coefficient of determination.  
Exact calculation of the normalizing constant is usually impossible,  because it would require a summation over the entire parameter space, which involves super-polynomially many evaluations of $\pi$. 

\subsection{MH algorithms for variable selection}\label{sec:bvs-random-walk}
To find posterior probabilities of models of interest or evaluate integrals with respect to $\pi$, the most commonly used method is to use an MH algorithm to generate samples from $\pi$. 
The transition probability from $\delta$ to $\delta'$  in an MH scheme can be expressed by 
\begin{equation}
    \bP(\delta, \delta') = \begin{cases}
    \bK(\delta, \delta') \bA(\delta, \delta'), \quad \quad \quad \text{ if } \delta \neq \delta',\\ 
    1 - \sum\nolimits_{\tilde{\delta}: \tilde{\delta}  \neq \delta} \bP(\delta, \tilde{\delta}), \quad \text{ if } \delta = \delta',
\end{cases} 
\end{equation}
where $\bK(\delta, \delta')$ is the probability of proposing to move from $\delta$ to $\delta'$, and the associated acceptance probability $\bA(\delta, \delta')$ is given by
\begin{equation}\label{eq:accept-prob}
\bA(\delta,\delta') = \min \left\{ 1, \frac{\pi(\delta')\bK(\delta', \delta)}{\pi(\delta)\bK(\delta, \delta')}\right\}, 
\end{equation}
ensuring that $\pi$ is the stationary distribution of $\bP$.

The efficiency of the MH algorithm depends on the choice of $\bK$, and there are many strategies for selecting the proposal neighborhood and assigning the proposal probabilities. Recall that the neighborhood of $\delta$  refers to the support of the  distribution $\bK(\delta, \cdot)$, which we denote by $\cN(\delta)$. Since most algorithms used in practice do not re-propose the current state, we assume $\delta \notin \cN(\delta)$. 
Let us begin by considering random walk proposals such that $\bK(\delta, \cdot)$ is simply the uniform distribution on $\cN(\delta)$.  
To illustrate the importance of selecting a proper proposal neighborhood, we first present two naive choices that are bound to result in slow convergence.

\begin{example}\label{ex:vs1.imh}
Let the state space be $\cV$ and the proposal $\bK(\delta, \cdot)$ be a uniform distribution on $\cN(\delta) = \cV$ for each $\delta$. 
In this case, we get an independent MH algorithm, which is clearly ergodic but usually mixes very slowly. 
For example, suppose the true model is $\delta^* = (1, 0, 0, \dots, 0)$ and the data is extremely informative. 
Even if the chain is initialized at the empty model, it takes on average $2^p$ iterations to propose moving to $\delta^*$.
\end{example} 

\begin{example}\label{ex:vs2.binary} 
Let $T \colon \cV \rightarrow \{0, 1, \dots, 2^{p}-1\}$ be a one-to-one mapping defined by $T(\delta) = \sum_{j = 1}^p \delta_j 2^{j - 1}$. 
Define $\cN(\delta) = T^{-1}(\{ T(\delta) + 1, T(\delta) - 1 \})$. 
In words, we number all elements of $\cV$ from $0$ to $2^p - 1$, and the proposal is a simple random walk on $\{0, 1, \dots, 2^{p}-1\}$. 
The resulting MH algorithm is also ergodic, but again the mixing is slow, since it requires at least $2^p - 1$ steps to move from the empty model to  the full model $(1, 1, \dots, 1)$. 
\end{example}

The neighborhood size is exponential in $p$ in Example~\ref{ex:vs1.imh} and is a fixed constant in Example~\ref{ex:vs2.binary}. Both are undesirable, and it is better to use a neighborhood with size polynomial in $p$. The following choice is common in practice: 
\begin{equation}\label{eq:vs.N1}
    \cN_1(\delta) = \{\delta' \in \cV \colon  ||\delta-\delta'||_1 =1 \}. 
\end{equation} 
The set $\cN_1(\delta)$ contains all the models that can be obtained from $\delta$ by either adding or removing a variable.  
When the design matrix $\bX$ contains highly correlated variables, it is generally considered that $\cN_1$ is too small and introducing ``swap'' moves is beneficial, which means to remove one variable and add another one at the same time. 
The resulting random walk MH algorithm is often known as the add-delete-swap sampler; denote its neighborhood by $\cN_{\rm{ads}}$, which is defined by 
\begin{equation}\label{eq:def-vs-ads}
\begin{aligned}
    \cN_{\rm{ads}}(\delta) = \cN_1(\delta) \cup \cN_{\mathrm{swap}} (\delta), 
    \quad \cN_{\mathrm{swap}} (\delta) = \left\{ \delta' \in \cV \colon  ||\delta' - \delta||_1 = 2,  ||\delta'||_1 = ||\delta||_1 \right\}. 
\end{aligned}
\end{equation} 
We will treat $\cN_{\rm{ads}}$ and $\cN_1$ as neighborhood relations defined on $\cV$. When implementing the add-delete-swap sampler on the restricted space $\cV_s$, one can still propose $\delta'$ from $\cN_{\rm{ads}}(\delta)$ and simply reject the proposal if $\delta' \notin \cV_s$.  
Note that $| \cN_{\rm{ads}}(\delta) \cap \cV_s| = O(ps)$. 

We end this subsection with two remarks. 
First, a popular alternative to MH algorithms is Gibbs sampling. 
Consider a random-scan Gibbs sampler that randomly picks $j \in \{1, 2, \dots, p\}$ and updates $\delta_j$ from its conditional posterior distribution given the other coordinates. 
It is not difficult to see that this updating is equivalent to randomly proposing $\delta' \in \cN_1(\delta)$ and accepting $\delta'$ with probability $\pi(\delta')/ (\pi(\delta) + \pi(\delta'))$.\footnote{ 
In a general sense, this random-scan Gibbs sampler is also an MH algorithm according to the original construction in \citet{hastings1970monte}. 
} 
By Peskun's ordering~\citep{mira2001ordering}, this Gibbs sampler is less efficient than the random walk MH algorithm with proposal neighborhood $\cN_1$~\citep{george1997approaches}.  

Second, for simplicity, we assume in this work that $\bK(\delta, \cdot)$ is a uniform distribution on $\cN(\delta)$ for random walk proposals.  
But in practice, these proposals can be implemented in a more complicated, non-uniform fashion.  
For example, one can first choose randomly whether to add or remove a variable, and then given the type of move, a proposal of that type is generated with uniform probability. 
This distinction has minimal impact on all theoretical results we will develop.

\subsection{Rapid mixing of a random walk MH algorithm}\label{sec:bvs-yang} 
The seminal work of \citet{yang2016computational} considered a high-dimensional setting with $n, p, s \rightarrow \infty$ and $s \log p = o(n)$, and they proved that, under some mild assumptions, the mixing time of the add-delete-swap sampler on the restricted space $\cV_s$ has order $O(p s^2 (n + \kappa s)\log p)$, polynomial in $(n, p, s)$; in other words, the sampler is rapidly mixing. 
To provide intuition about this result and  proof techniques, which will be crucial to the understanding of the theory developed in this work, 
we construct a detailed illustrative example. 

\begin{example}\label{ex:vs3}
Let $p=3$ and $\bX$ be such that $||\bX_j||_2^2 = n$ for $j= 1, 2, 3$, $\bX_1^T \bX_2 = -0.8 n$, $\bX_2^T \bX_3 = -0.6 n$ and $\bX_1^T \bX_3 = 0.9 n$; the three explanatory variables are highly correlated.  
Let $\by$ be generated by 
\begin{align*}
    \by = 1.25 \bX_1 + \bX_2 + \bz,
\end{align*}
where $\bz$ is orthogonal to each explanatory variable, i.e., $\bX_j^\T \bz = 0$ for each $j$, and $||\bz||_2^2 = n$. 
We calculate the un-normalized posterior probabilities by~\eqref{eq:post}  for all models with $n=1,000$, $g=p^3 = 27$ and $\kappa =1$; the values are given in Table~\ref{table:bvs}.

\begin{table}[!b]
\centering
\caption{Log-posterior probabilities of all possible models in Example~\ref{ex:vs3} with $n = 1,000$, $ g = p^3 = 27$ and $\kappa =1 $. 
For the third column, we  set the constant $C = - \log \pi((0, 0, 0))$. For each $\delta$, we indicate if it is a local mode with respect to the given search space and neighborhood relation. }
{
\footnotesize
\begin{tabular}{cccccc}
\hline
\multirow{2}{*}{$\delta$} & \multirow{2}{*}{$1-\mathsf{r}^2(\delta)$}  & \multirow{2}{*}{$C + \log \pi(\delta) $} & Local mode & Local mode  & Local mode \\
 &   &   & w.r.t.  $(\cV, \cN_1)$  &  w.r.t. $(\cV_2, \cN_1)$ &  w.r.t. $(\cV_2, \cN_{\mathrm{ads}})$ \\ \hline \vspace{2mm} 
$(0,0,0)$ & 1 &  0  & No  & No  & No  \\\vspace{2mm}
$(1,0,0)$ & 0.8704  & 63.98  &No  & No  & No   \\\vspace{2mm}
$(0,1,0)$ & 1    & -2.76     &No   & No  & No   \\\vspace{2mm}
$(0,0,1)$  & 0.8236  &   90.46   & No    & No  & No  \\\vspace{2mm}
$(1,1,0)$   & 0.64      &  207.70   &Global mode  & Global mode   & Global mode  \\\vspace{2mm}
$(1,0,1)$   &  0.8219  &  88.69   &No   & No    & No   \\\vspace{2mm}
$(0,1,1)$   &  0.7243    & 148.95   & No  &  \textbf{Yes}   & No  \\\vspace{2mm}
$(1,1,1)$ & 0.64       &  204.90   & No  & --    & --  \\\hline            
\end{tabular}
}
\label{table:bvs}
\end{table}

The true model, $\delta^* = (1, 1, 0)$, is the global mode of $\pi$, which is expected since  $p$ is small but $n$ is large.  
Further, we indicate in Table~\ref{table:bvs} if each model is a local mode with respect to the given search space and neighborhood relation. 
For example, we say $\delta \in \cV_2$ is a local mode with respect to $(\cV_2, \cN_{\rm{ads}})$ if $\pi(\delta) > \pi(\delta')$ for every $\delta' \in \cV_2 \cap \cN_{\rm{ads}}(\delta)$. 
When there is only one local mode (which must be $\delta^*$ in this example), we say $\pi$ is unimodal.    
Table~\ref{table:bvs} shows that $\pi$ is unimodal with respect to $(\cV_2, \cN_{\mathrm{ads}})$, but it is multimodal with respect to $(\cV_2, \cN_{1})$ and the other local mode is $(0, 1, 1)$. 
A graphical illustration is given in  Figure~\ref{fig:nbd}. 
Another interesting observation is that $\pi(\delta)$ does not necessarily increase as $\delta$ gets closer to $\delta^*$. We have  $\pi((0, 0, 1)) > \pi((0, 0, 0)) > \pi((0, 1, 0))$, while the $L^1$ distance from each of the three models to $\delta^*$ is strictly decreasing. This happens due to the correlation structure among the three variables, and in high-dimensional settings, such collinearity is very likely to occur between some variables.  
We will revisit this example in later discussions.  
\end{example}

\begin{figure}[t]
    \centering
    \includegraphics[height=4cm]{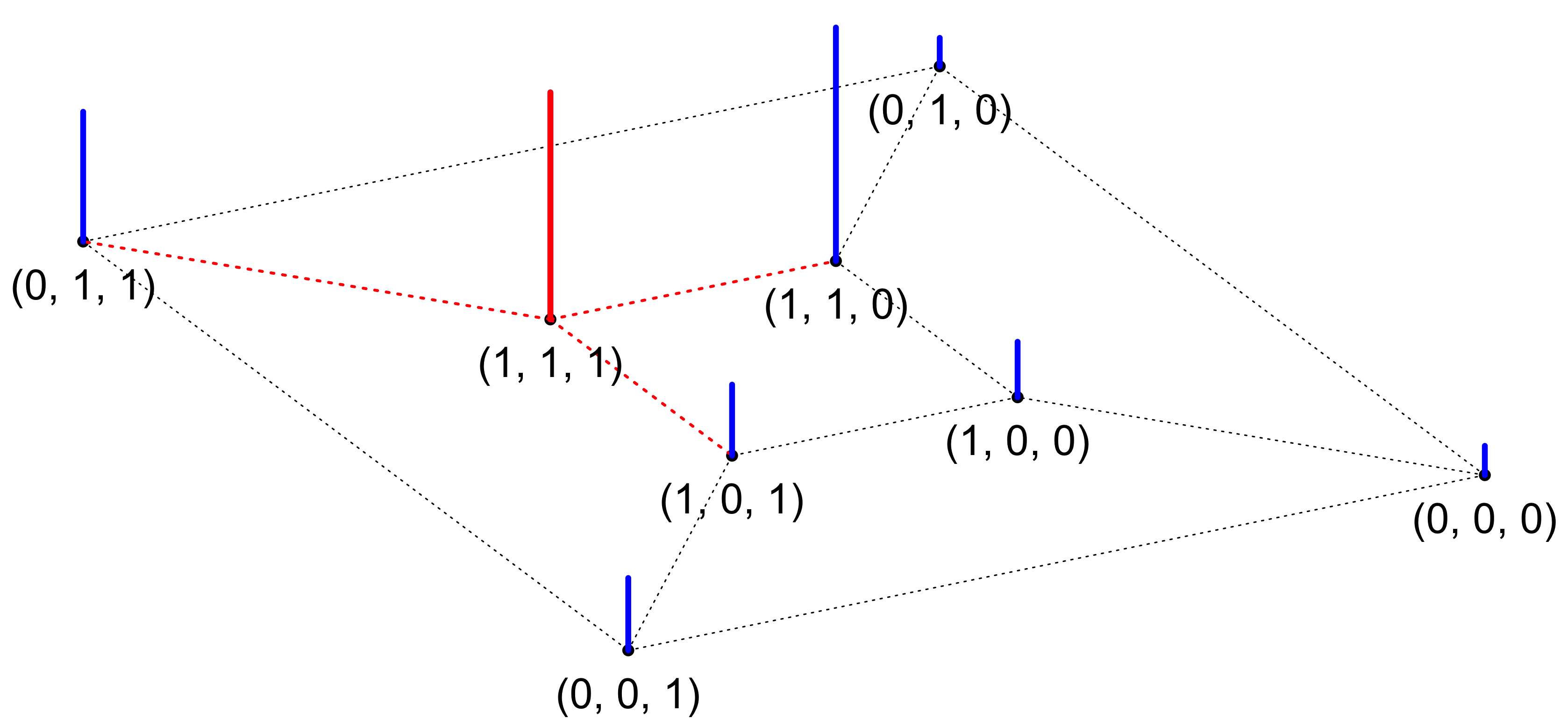} 
    \caption{Visualization of $\pi$ on $(\cV, \cN_1)$ in Example~\ref{ex:vs3}. For every $\delta \in \cV$, the neighboring states $\delta' \in \cN_1(\delta)$ are connected by  dotted lines. 
    The height of each bar is equal to $\sqrt[100]{\pi(\delta)}$.  
    The red bar and dotted lines are removed if the underlying space is  $(\cV_2, \cN_1)$. }
    \label{fig:nbd}
\end{figure}

In Example~\ref{ex:vs3}, $\pi$ is unimodal with respect to $(\cV_2, \cN_{\mathrm{ads}})$. An important intermediate result of~\cite{yang2016computational} was that, with high probability, such a unimodality property still holds in the high-dimensional regime they considered, and the global mode coincides with the true model $\delta^*$. This implies that the add-delete-swap sampler is unlikely to get stuck at any $\delta \neq \delta^*$, since there always exists some neighboring state $\delta'$ such that $\bA(\delta, \delta') = 1$ and thus $\bP(\delta, \delta') = \Omega(p^{-1} s^{-1})$.  
This is the main intuition behind the rapid mixing proof of~\cite{yang2016computational}. 
Interestingly, it was shown in~\cite{yang2016computational} that the unimodality is unlikely to hold on the unrestricted space $\cV$, since local modes can easily occur on the set $\cV \setminus \cV_s$, though  it has negligible posterior mass. 
Since these local modes can easily trap the sampler for a huge number of iterations, they had to consider the restricted space $\cV_s$ in order to obtain a rapid mixing result.  
They also found that without using swap moves, local modes are likely to occur at the boundary of the restricted space. (In Example~\ref{ex:vs3}, $(0, 1, 1)$ is such a local mode.) This is exactly the reason why swaps are required for their rapid mixing proof. 
More generally, for the MCMC convergence analysis of high-dimensional model selection problems with sparsity constraints, examining the local posterior landscape near the boundary seems often a major technical challenge~\citep{zhou2021complexity}. 
Nevertheless, the practical implications of this theoretical difficulty is largely unclear. Even if we choose $\cV$ as the search space, in practice, the sampler is typically initialized within $\cV_s$, and if $s$ is large enough, we will probably never see the chain leave  $\cV_s$, and thus the landscape of $\pi$ on $\cV \setminus \cV_s$ is unimportant.

The above discussion naturally leads to the following question:   can we obtain a   rapid mixing result on the unrestricted space $\cV$ by conditioning on a warm initialization?   
Here, ``warm'' means that the posterior probability of the model is not too small, and we use $\hat{\delta}$ to denote such an estimator, which can often be obtained by a frequentist variable selection algorithm. An affirmative result was given in~\citet{pollard2019rapid}, where a random walk MH algorithm on $\cV$ is constructed such that it proposes models from $\cN_1(\delta)$ and is allowed to immediately jump back to $\hat{\delta}$  whenever $||\delta||_1$ becomes too large.  
They assumed $\hat{\delta}$ was obtained by the thresholded lasso method~\citep{zhou2010thresholded}. 
An even stronger result was obtained recently in~\cite{atchade2021approximate}, who showed that by only using addition and deletion proposals, the random walk MH algorithm on $\cV$ is still rapidly mixing given a warm start. The proof of~\citet{atchade2021approximate} utilizes a novel technique based on restricted spectral gap, which we will discuss in detail later.

\subsection{Rapid mixing of an informed MH algorithm}\label{sec:bvs-informed} 
Now consider informed proposal schemes~\citep{zanella2020informed}, which will be the focus of our theoretical analysis. Unlike random walk proposals which draw a candidate move indiscriminately from the given neighborhood, informed schemes compare the posterior probabilities of all models in the neighborhood and then assign larger proposal probabilities to those with larger posterior probabilities. 
One might conjecture that such an informed MH sampler may behave similarly to a greedy search algorithm and can quickly find the global mode if the target distribution is unimodal. 
But a naive informed proposal scheme may lead to performance even much worse than a random walk MH sampler, as illustrated in the following example.   

\begin{example}\label{ex4:informed_fail}
Consider Example~\ref{ex:vs3} and Table~\ref{table:bvs} again.  Let the informed proposal be given by 
\begin{equation}\label{eq:inform-prop-example-vs}
    \bK(\delta, \delta') =  \pi(\delta')\ind_{\cN_1(\delta)}(\delta') / Z(\delta),
\end{equation}
where $Z(\delta)$ is the normalizing constant; that is, the probability of proposing $\delta' \in \cN_1(\delta)$ is proportional to $\pi(\delta')$.   
Suppose that the chain is initialized at the empty model $\delta_0 = (0,0,0)$. 
A simple calculation yields $\bK(\delta_0, \delta_1) \approx 1 - 3 \times 10^{-12}$ where $\delta_1 = (0,0,1)$. 
But the acceptance probability is 
    \begin{align*} 
    \bA(\delta_0, \delta_1) &= e^{90.46} \times \frac{1 / (1 + e^{148.95} + e^{88.69})}{e^{90.46} /  (e^{90.46} + e^{-2.76} + e^{63.98}))} 
        \approx e^{-58.49} \approx 4 \times 10^{-26}. 
    \end{align*}
Hence, we will probably never see the chain leave $\delta_0$. 
In contrast,  for the random walk MH algorithm with  proposal neighborhood $\mathcal{N}_1$, the probability that the chain stays at $\delta_0$ is less than 1/3.  
\end{example}

In order to avoid scenario similar to Example~\ref{ex4:informed_fail}, 
\citet{zhou2022dimension} proposed to use informed schemes with bounded proposal weights. They constructed an informed add-delete-swap sampler and showed that, under the high-dimensional setting considered by~\cite{yang2016computational}, the mixing time on the restricted space is $O(n)$, thus independent of the problem dimension parameter $p$. 
Since each iteration of informed proposal requires evaluating the posterior probabilities of all models in the current neighborhood, the actual complexity of their algorithm should be $O( p s n )$, comparable to the mixing time of the random walk MH sampler. 
The intuition behind the proof of~\citet{zhou2022dimension}, which was based on a drift condition argument, is similar to that of~\citet{yang2016computational}. 
The unimodality of $\pi$ implies that any $\delta \neq \delta^*$ has one or more neighboring models with larger posterior probabilities, and they showed that, for their sampler, the transition probability to  such a ``good'' neighbor is bounded from below by a universal constant.  

In the remainder of this paper, we develop general theories that extend the results discussed in this section  to arbitrary discrete spaces.

\section{Theoretical setup and preliminary results} \label{sec:prelim}

\subsection{Notation and setup for theoretical analysis}\label{subsec:setup} 
Let $\pi$ denote a probability distribution on a finite space $\cX$. 
Let $\cX$ be endowed with a neighborhood relation $\cN\colon \cX \rightarrow 2^\cX $, and we say $x'$ is a neighbor of $x$ if and only if $x' \in \cN(x)$.
We assume $\cN$ satisfies three conditions: (i) $x \notin \cN(x)$ for each $x$, (ii) $x \in \cN(x')$ whenever $x' \in \cN(x)$, and (iii) $(\cX, \cN)$ is connected.\footnote{``Connected'' means that for any $x, x' \in \cX$, there is a sequence $(x_0 = x, x_1,   \dots, x_k = x')$ such that $x_{i} \in \cN(x_{i-1})$ for  $i = 1, \dots, k$.}  
For our theoretical analysis, we  assume the triple $(\cX, \cN, \pi)$ is given and analyze the convergence of MH algorithms that propose states from $\cN$.  
Define 
\begin{equation}\label{eq:def-neighbor-size}
    \N = \N(\cX, \cN) \coloneqq \max_{x \in \cX} |\cN(x)|. 
\end{equation}
For most high-dimensional statistical problems, $|\cX|$ grows at a super-polynomial rate with respect to some complexity parameter $p$, while $\N$ only has a polynomial growth rate. 
Below are some examples. 
\begin{enumerate}[label=(\roman*)]
    \item \textit{Variable selection.}  As discussed in Section~\ref{sec:bvs}, 
    we can let $\cX = \cV$  and $\cN = \cN_{\mathrm{ads}}$, in which case we have $\N(\cV, \cN_{\rm{ads}}) = O(p^2)$. 
    For the restricted space $\cV_s$, we have $\N(\cV_s, \cN_{\rm{ads}}) = O(ps)$.  
    \item \textit{Structure learning of Bayesian networks.}  
    Given $p$ variables, one can let $\cX$ be the collection of all $p$-node Bayesian networks (i.e.,  labeled directed acyclic graphs). 
    Let $\cN(x)$ be the collection of all Bayesian networks that can be obtained from $x$ by adding, deleting or reversing an edge~\citep{madigan1995bayesian}. 
    We have $\N = O(p^2)$ since the graph has at most $O(p^2)$ edges, while $|\cX|$ is super-exponential in $p$ by Robinson's formula~\citep{robinson1977counting}. 
    \item \textit{Ordering learning of Bayesian networks.}  
    Every Bayesian network is consistent with at least one total ordering of the $p$ nodes such that node $i$ precedes node $j$ whenever there is an edge from $i$ to $j$. 
    To learn the ordering, we can let $\cX$ be all possible orderings of $p$, which is the symmetric group with degree $p$. Clearly, $|\cX| = p!$ is super-exponential in $p$. 
    We may define $\cN(x)$ as the set of all orderings that can be obtained from $x$ by a random transposition, which interchanges any two elements of $x$ while keeping the others unchanged~\citep{chang2022order, diaconis1981generating}. This yields $\N = p(p-1)/2$. 
    \item \textit{Community detection.} Suppose there are $p$ nodes forming $K$ communities. One can let $\cX$ be the collection of all label assignment vectors that specify the community label for each node, and define $\cN(x)$ as the set of all assignments that differ from $x$ by the label of only one node~\citep{zhuo2021mixing}.
    We have $|\cX| = O(K^p)$  and $\N = p(K-1)$. 
    \item \textit{Dyadic CART.}  
    Consider a classification or regression tree problem where the splits are selected from $p$ pre-specified locations. Assume $p = 2^K - 1$ for some integer $K \geq 1$, and consider the dyadic CART algorithm, a special case of CART, where splits always occur at midpoints~\citep{donoho1997cart, castillo2021uncertainty, kim2023mixing}.  
    The search space $\cX$ is the collection of all dyadic trees with depth less than or equal to $K$ (in a dyadic tree, every non-leaf node has 2 child nodes). 
    One can show that $|\cX|$ is exponential in $p$ using the fact that  there are $2^{K-2} = (p+1)/4$ possible splits at depth $K-1$. 
    For $x \in \cX$, we define $\cN(x)$ as the set of dyadic trees obtained by either a ``grow'' or ``prune'' operation; ``grow'' means to add two child nodes to one leaf node, and ``prune'' means to remove two leaf nodes with a common parent node. Then $\N = O(p)$.   
\end{enumerate}

Let $\bP$ be the transition matrix of an irreducible, aperiodic and reversible Markov chain with stationary distribution $\pi$. For all Markov chains we will analyze, $\bP$ moves on the graph $(\cX, \cN)$; that is, $\{x' \colon \bP(x, x') > 0\} = \cN(x) \cup \{x\}$. 
Define the total variation distance between $\pi$ and another distribution $\zeta$ by $||\zeta - \pi||_{\mathrm{TV}} = \sup_{A \subset \cX} | \zeta(A) - \pi(A) |$. 
For $\epsilon \in (0, 1/2)$, let
\begin{equation}\label{eq:def-mix-x}
    \tmix_x (\bP, \epsilon) = \min \{t \in \bbN: ||\bP^t(x, \cdot) - \pi ||_{\mathrm{TV}} \leq \epsilon \}, 
\end{equation} 
which can be seen as the ``conditional'' mixing time of $\bP$ given the initial state $x$. 
Define the mixing time of $\bP$ by 
\begin{equation}\label{eq:def-mix}
     \tmix  (\bP, \epsilon) =  \max_{x \in \cX}   \tmix_x (\bP, \epsilon). 
\end{equation} 
It is often assumed in the literature that $\epsilon = 1/4$, because one can show that $\tmix(\bP, \epsilon) \leq \lceil \log_2 \epsilon^{-1} \rceil \allowbreak \tmix(\bP, 1/4)$ for any $\epsilon \in (0, 1/2)$~\citep[Chap. 4.5]{levin2017markov}. 
However, since such an inequality does not hold for $\tau_x$, we will treat $\epsilon$ as an arbitrary constant in $(0, 1/2)$ in this work. 
The complexity of an MH algorithm can be quantified as the product of its mixing time and the complexity per iteration.

It is well known that mixing time can be bounded by the spectral gap~\citep{sinclair1992improved}. 
Our assumption on $\bP$ implies that it has real eigenvalues $1 = \lambda_1 > \dots \geq  \lambda_{|\cX|} > -1$~\citep[Lemma 12.1]{levin2017markov}. 
The spectral gap is defined as $\mathrm{Gap}(\bP) = 1 - \max \{ \lambda_2, |\lambda_{|\cX|}| \}$  and satisfies the inequality
\begin{equation}\label{eq:ineq-gap-mix}
    \tmix(\bP, \epsilon) \leq \mathrm{Gap}(\bP)^{-1} \log \left\{\frac{1}{\epsilon \, \pi_{\min }}\right\}, 
\end{equation} 
where $\pi_{\min } = \min_{x \in \cX} \pi(x)$. 
The quantity $\mathrm{Gap}(\bP)^{-1}$ is known as the relaxation time. 
In our mixing time bounds, we will always work with the ``lazy'' version of $\bP$,   which is defined by $\bP^{\rm{lazy}} = (\bP + \mathsf{I}) / 2$. 
That is, for any $x \neq x'$, we set $\bP^{\rm{lazy}}(x, x') = \bP(x, x') / 2$. 
Since all eigenvalues of $\bP^{\rm{lazy}}$ are non-negative, $1 - \mathrm{Gap}(\bP^{\rm{lazy}})$ always equals the second largest eigenvalue of $\bP^{\rm{lazy}}$.  
 
\subsection{Unimodal conditions}\label{sec:unimodal}
To characterize  the modality and tail behavior of $\pi$, we introduce another parameter $\R$:  
\begin{align}\label{eq:def-R}
    \R =  \R(\cX, \cN, \pi) \coloneqq  \min_{x \in \cX \setminus \{x^*\}}  \max_{y \in \cN(x)} \frac{\pi(y)}{\pi(x)},  \quad 
    \text{  where }    x^* = \argmax_{x \in \cX} \pi(x).
\end{align} 
If $\argmax$ is not uniquely defined, fix $x^*$ as one of the modes according to any pre-specified rule. 
If $R > 1$, we say $\pi$ is unimodal with respect to $\cN$, since for any $x \neq x^*$, there is some   $x' \in \cN(x)$ such that $\pi(x') > \pi(x)$.  
In our subsequent analysis, we will  consider unimodal targets with $\R > \N$ (for random walk MH) or $\R > \N^2$ (for informed MH).  
These conditions ensure that the tails of $\pi$ decay sufficiently fast.  
To see this, for $k \geq 1$, let $\mathrm{Tail}(k) = \{x \colon \bP^{k - 1}(x, x^*) = 0, \; \bP^{k}(x, x^*) > 0 \}$ denote the set of states that  need $k$ steps to reach $x^*$.  
The unimodal condition implies that for any $x \neq x^*$, there is a path $(x_0 = x, x_1,    \dots,   x^*)$ such that $x_i \in \cN(x_{i-1})$ and $\pi(x_{i-1}) / \pi(x_{i}) \geq R$ for each $i$.  
If $x \in \mathrm{Tail}(k)$, the length of this path is at least $k$, and thus $\pi(x) / \pi(x^*) \leq R^{-k}$. Since the number of states in $\mathrm{Tail}(k)$ is at most $\N^k$, we obtain that   $\pi(\mathrm{Tail}(k)) \leq  \pi(x^*) (\N / \R)^{k}$,  
which decreases to $0$ exponentially fast if $\R > \N$.

For high-dimensional variable selection, the unimodal property with $\R > \N$ has been rigorously established on $\cX = \cV_s$ with $\cN =  \cN_{\mathrm{ads}}$. A careful examination of the proof of~\citet{yang2016computational} reveals that
the proof for  $\R > \N$ and that for $\R > \N^2$ are essentially the same; see the discussion in Section S3 of~\cite{zhou2022dimension}.  
The unimodal condition has also been proved for other high-dimensional statistical problems, including structure learning of Markov equivalence classes~\citep{zhou2021complexity}, community detection~\citep{zhuo2021mixing},  and dyadic CART~\citep{kim2023mixing}.  
It should be noted that how to choose a proper $\cN$ so that $\R > \N$ or $\R > \N^2$ can be a very challenging question (e.g., for structure learning), which we do not elaborate on in this paper. 

In order to achieve rapid mixing,  such unimodal conditions are arguably necessary, since for general multimodal targets, the chain can get trapped at some local mode for an arbitrarily large amount of time.  
We further highlight two reasons why the unimodal analysis is important and not as restrictive as it may seem. 
First, mixing time bounds for unimodal targets are often the building blocks for more general results in multimodal scenarios. One strategy that will be discussed shortly is to use restricted spectral gap.  
Another approach is to apply state decomposition techniques~\citep{madras2002markov, jerrum2004elementary, guan2007small}, which works for  general multimodal targets; see, e.g.,~\citet{zhou2022rapid}.   
Second,  it is helpful to compare our condition $\R > \N$ with log-concavity on continuous spaces. Both conditions imply unimodality and exponentially decaying tails. However, for a log-concave distribution $\pi$ on $\bbR^d$, $\pi$ is increasing along any line segment from an arbitrary point $x$ towards $x^*$. In contrast,  for variable selection, Example~\ref{ex:vs3} shows that $\pi$ can be unimodal but decrease when we move towards the true model $\delta^*$ by flipping a coordinate different from  $\delta^*$. Our condition $\R > \N$ is designed to allow for such unimodal targets.

\begin{remark}\label{rmk:RlessM}
    In general, $R>1$ is not sufficient for ensuring fast convergence of a Markov chain reversible with respect to $\pi$. 
    When $1 < R \leq M$, $\pi$ is still unimodal but with heavier tails, in which case the sampler may need to explore a substantial portion of the whole state space to get sufficiently close to $\pi$ in  TV distance.  To illustrate this, we present a slow mixing example with $M = p + 1$ and $R = p$ in Section~\ref{sec:examples}.   
\end{remark}    
 
For the analysis beyond the unimodal setting, we consider a subset $\cX_0 \subset \cX$ and only impose  unimodality  on $\cX_0$. Let $\cN|_{\cX_0}$ denote $\cN$ restricted to $\cX_0$, that is, 
$\cN|_{\cX_0}(x) = \cN(x) \cap \cX_0$.   
Mimicking the definition of $\R$, we define $\R|_{\cX_0}$ by restricting ourselves to $\cX_0$:  
\begin{align}\label{eq:def-R0} 
    \R|_{\cX_0} =  \R(\cX_0, \cN|_{\cX_0}, \pi) =  \min_{x \in  \cX_0 \setminus \{x^*_0\}}  \max_{x' \in \cN|_{\cX_0}(x) } \frac{\pi(x')}{\pi(x)},  
\end{align}
where $x^*_0 = \argmax_{x \in \cX_0} \pi(x)$.
If $\R|_{\cX_0} > 1$,  we say $\pi$ is unimodal on $\cX_0$ with respect to $\cN$. 
Note that $\R|_{\cX_0} > 1$ also implies that $(\cX_0, \cN)$ is connected. 
In Section~\ref{sec:beyond}, we will study the mixing times of MH algorithms assuming $\R|_{\cX_0} > \N$
 (or $\R|_{\cX_0} > \N^2$) for some $\cX_0$ such that $\pi(\cX_0)$ is sufficiently large.

\subsection{Mixing times of random walk MH algorithms}\label{subsec:rwmh}
We first review the mixing time bound  for random walk MH algorithms obtained in~\citet{zhou2021complexity} under a unimodal condition. 
Recall that we assume the random walk proposal scheme can be expressed as   
$$\bK(x, x') = |\cN(x)|^{-1} \ind_{\cN(x)}(x').$$  
The resulting transition matrix of random walk MH can be written as  
\begin{align}\label{eq:P0}
    \PR (x, x^{\prime} )=\left\{\begin{array}{cc}  
     \min \left\{\frac{1}{|\cN(x)|}, \, \frac{\pi (x^{\prime} )   }{\pi(x)  |\cN(x^{\prime} )| }\right\}, & \text { if } x^{\prime} \in \cN(x), \smallskip \\ 
    1 - \sum_{\tilde{x} \in \cN(x)} \PR(x, \tilde{x}), & \text { if } x^{\prime}=x, \smallskip \\
    0,  & \text{ otherwise. } 
    \end{array}\right. 
\end{align} 

\citet{yang2016computational} used~\eqref{eq:ineq-gap-mix} and the ``canonical path ensemble'' argument to bound the mixing time of the add-delete-swap sampler for high-dimensional  variable selection. 
As observed in~\cite{zhou2022rapid} and~\cite{zhou2021complexity},  this method is applicable to the general setting we consider. 
The only assumption one needs is that the triple $(\cX, \cN, \pi)$ satisfies $\R > \N$,  which ensures that, for any $x \neq x^*$, we can identify a path $(x_0 = x, x_1,  \dots, x_{k-1}, x_k = x^*)$  such that $\pi(x_i) / \pi(x_{i - 1}) > \N$ for each $i$.  
A canonical path ensemble is a collection of such paths, one for each $x \neq x^*$, and then a spectral gap bound can be obtained by identifying  the maximum length of a canonical path and the edge subject to the most congestion. 
It was shown in~\cite{zhou2022rapid} that the bound of~\cite{yang2016computational} can be further improved by measuring the length of each path using a metric depending on $\pi$ (instead of counting the number of edges).  
The following result is a direct consequence of~\eqref{eq:ineq-gap-mix} and Lemma 3 of~\cite{zhou2022rapid}.  

\begin{theorem}\label{th:rwmh-1}
Assume $\rho = \R / \N > 1$. Then, we have $\pi(x^*) \geq 1 - \rho^{-1}$, $\mathrm{Gap}(\bP^{\rm{lazy}}_0)^{-1} \leq c(\rho) \N$ and 
\begin{equation}\label{eq:mix-rwmh-1}
    \tmix(\bP^{\rm{lazy}}_0, \epsilon) \leq    c(\rho) \N \log \left\{\frac{1}{\epsilon \, \pi_{\min }}\right\}, 
\end{equation}
where 
\begin{equation}\label{eq:def-c-rho}
    c(\rho) = \frac{4}{ (1 - \rho^{-1/2})^{3} }. 
\end{equation}
\end{theorem}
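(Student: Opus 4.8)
The statement bundles three claims: the posterior concentration bound $\pi(x^*) \geq 1 - \rho^{-1}$, the relaxation-time bound $\mathrm{Gap}(\PR^{\mathrm{lazy}})^{-1} \leq c(\rho)\N$, and the mixing-time bound \eqref{eq:mix-rwmh-1}. The plan is to treat these in order. The last is immediate from the second via the generic inequality \eqref{eq:ineq-gap-mix}, so the real work is (a) the concentration bound, which I would prove by a short averaging argument, and (b) the spectral-gap bound, which I would obtain from the canonical-path / multicommodity-flow machinery (Lemma 3 of \cite{zhou2022rapid}) specialized to greedy ascending paths.

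For the concentration bound I would exploit the definition of $\R$ directly. Since $\R = \min_{x \neq x^*} \max_{y \in \cN(x)} \pi(y)/\pi(x)$, every non-modal state satisfies $\pi(x) \leq \R^{-1} \max_{y \in \cN(x)} \pi(y) \leq \R^{-1} \sum_{y \in \cN(x)} \pi(y)$. Summing over $x \neq x^*$ and interchanging the order of summation,
\begin{align*}
    \sum_{x \neq x^*} \pi(x) \leq \R^{-1} \sum_{y \in \cX} \pi(y)\, \bigl| \{ x \neq x^* : y \in \cN(x) \} \bigr| \leq \R^{-1} \N \sum_{y \in \cX} \pi(y) = \frac{\N}{\R} = \rho^{-1},
\end{align*}
where the middle step uses the symmetry $y \in \cN(x) \iff x \in \cN(y)$ to bound the number of pre-images of $y$ by $|\cN(y)| \leq \N$. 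This gives $\pi(x^*) = 1 - \sum_{x \neq x^*}\pi(x) \geq 1 - \rho^{-1}$, which is slightly tighter than what one would get by summing the tail estimates $\pi(\mathrm{Tail}(k)) \leq \rho^{-k}$.

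For the spectral-gap bound I would build, for each $x \neq x^*$, a greedy ascending path $\gamma_x = (x_0 = x, x_1, \dots, x_k = x^*)$ in which every step increases the posterior by at least a factor $\R$, i.e.\ $\pi(x_j) \geq \R\, \pi(x_{j-1})$; such a path exists and must terminate at $x^*$ because $\R > 1$ forces $\pi$ to increase strictly along it while $\cX$ is finite. Routing the canonical path between an arbitrary ordered pair through $x^*$ (concatenating $\gamma_x$ with the reverse of $\gamma_{x'}$) and feeding the resulting flow ensemble into the Poincaré inequality yields the relaxation-time bound. The key quantitative input from Lemma 3 of \cite{zhou2022rapid} is that path length is measured with a $\pi$-weighted metric rather than by counting edges: because the posterior grows geometrically along $\gamma_x$, the $\pi$-weights of states at greedy distance $k$ from $x^*$ decay like $\rho^{-k}$, and the associated congestion sums collapse to geometric series in $\rho^{-1/2}$. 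Summing $\sum_{k} \rho^{-k/2} = (1 - \rho^{-1/2})^{-1}$ and tracking the powers produced by the edge-weight and length bookkeeping is what yields the constant $c(\rho) = 4(1-\rho^{-1/2})^{-3}$; the laziness of $\PR^{\mathrm{lazy}}$ makes all eigenvalues non-negative, so $\mathrm{Gap}$ is governed by $\lambda_2$ alone and the path bound applies without a separate treatment of the most negative eigenvalue.

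The mixing-time bound \eqref{eq:mix-rwmh-1} then follows by substituting $\mathrm{Gap}(\PR^{\mathrm{lazy}})^{-1} \leq c(\rho)\N$ into \eqref{eq:ineq-gap-mix}. I expect the main obstacle to be extracting the precise constant $c(\rho)$: the averaging argument for the concentration bound and the existence of greedy paths are routine, but verifying that the $\pi$-weighted congestion collapses to the clean geometric-series form with exponent $3$ requires the careful length/weight accounting of Lemma 3, which is the genuine technical content being imported here.
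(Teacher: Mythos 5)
Your proposal is correct and takes essentially the same route as the paper: the paper proves Theorem~\ref{th:rwmh-1} by combining the gap-to-mixing inequality~\eqref{eq:ineq-gap-mix} with Lemma~3 of \citet{zhou2022rapid}, whose content is precisely the $\pi$-weighted path/flow machinery you describe (reproduced in the paper's Appendix~\ref{sec:appx-prelim} as Lemmas~\ref{lm:def_flow} and~\ref{lm:bounding_A}, where the choice $S=\R$, $q = \log(\R/\N)/(2\log\R)$ and the geometric-series accounting yield exactly $c(\rho)\N$). Your self-contained averaging argument for $\pi(x^*) \geq 1 - \rho^{-1}$ is valid (the neighborhood symmetry bounds each state's pre-image count by $\N$) and is a nice explicit supplement to what the paper simply imports from the cited lemma.
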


\begin{remark}\label{rmk:how-to-prove-rapid-mix}
For  high-dimensional  statistical models, we say that a random walk MH algorithm is rapidly mixing if for fixed $\epsilon \in (0, 1/2)$,  $\tmix(\bP^{\rm{lazy}}_0, \epsilon)$ scales polynomially with some complexity parameter $p$. 
As discussed in Section~\ref{subsec:setup},   $\N$ is typically polynomial in $p$ by construction. 
Hence, to conclude rapid mixing from Theorem~\ref{th:rwmh-1}, it suffices to establish two conditions: 
(i) $\log \pi_{\min }$ is polynomial in $p$, and  
(ii) $\liminf \rho > 1$.  
This line of argument is used in most existing works on the complexity of MCMC algorithms for high-dimensional model selection problems; see~\cite{yang2016computational,zhou2021complexity,zhuo2021mixing} and \cite{kim2023mixing}.  In particular, under common high-dimensional assumptions,  $\rho$ grows to infinity at a rate polynomial in $p$,  which implies $c(\rho) \rightarrow 4$, and by Theorem~\ref{th:rwmh-1} we also have $\pi(x^*) \rightarrow 1$,  which is a consistency property of the underlying statistical model.  
\end{remark} 

\subsection{Application to variable selection and structure learning}\label{sec:vs-sl-one}

To illustrate the application of Theorem~\ref{th:rwmh-1} in high-dimensional Bayesian settings, we derive the mixing time bounds for the variable selection and structure learning problems studied in~\citet{yang2016computational} and~\citet{zhou2021complexity}. 
The posterior distribution for the variable selection problem is given in~\eqref{eq:post}. Detailed descriptions of both problems---including model specifications, neighborhood construction   and high-dimensional assumptions---are provided in Sections~\ref{sec:cor-settings-vs} and~\ref{sec:cor-settings-sl}. 
Here, we only comment on the parameters involved in the mixing time bounds. 
For both problems, $n$ denotes the sample size, $p$ the number of variables, and $\kappa$ the prior parameter controlling model sparsity: 
the prior probability of a model with $m$ variables (in variable selection) or a Markov equivalence class with $m$ edges (in structure learning) is proportional to $p^{-\kappa m}$.   
For variable selection, $s$ denotes the maximum model size on the restricted search space $\cV_s$, as discussed in Section~\ref{sec:bvs}. 
For structure learning, we consider directed acyclic graphs with in-degree bounded by $d_{\rm{in}}$ and out-degree bounded by $d_{\rm{out}}$, where $d_{\rm{in}} + d_{\rm{out}} \leq t_0 \log_2 p$ for some universal constant $t_0 > 0$. 
\begin{corollary}\label{cor:vs_thm1}
     Consider the variable selection problem described in Section~\ref{sec:cor-settings-vs}. 
     Suppose Assumptions (A1) to (A5) hold. For the random walk MH algorithm on   $(\cV_s, \cN_{\rm{ads}})$, 
     \begin{equation*}
        \tau(\mathsf{P}_0^{\text{lazy}}, \epsilon)   = O_p \left(   (  n + \kappa s ) p s  \log p  \right). 
     \end{equation*}
\end{corollary}
\begin{proof}
See Section~\ref{sec:proof-vs}. 
\end{proof} 
\begin{corollary}\label{cor:sl_thm1}
 Consider the structure learning problem described in Section~\ref{sec:cor-settings-sl}.  Suppose Assumptions (B1) to (B4) hold. For the random walk MH algorithm  on $(\mathcal{C}_p(d_{\text{in}}, d_{\text{out}}), \cN_{\rm{ads}})$, 
\begin{equation*}
        \tau(\mathsf{P}_0^{\text{lazy}}, \epsilon)   = O_p \left(  n  p^{t_0 + 3} \kappa  \log p  \right).   
\end{equation*} 
\end{corollary}
\begin{proof}
See Section~\ref{sec:proof-sl}. 
\end{proof} 
Both mixing time bounds are polynomial in $n, p, \kappa$. The bound for variable selection improves that in Theorem 2 of~\citet{yang2016computational} by a factor of $s$, while the bound for structure learning is essentially the same as that in Theorem 5 of~\citet{zhou2021complexity}.

\section{Dimension-free relaxation times of informed MH algorithms}\label{sec:imh} 
\subsection{Informed MH algorithms} \label{sec:informed-intro}
Informed MH algorithms generate proposal moves after evaluating the posterior probabilities of all neighboring states. 
Given a weighting function $h: \bbR^+ \rightarrow \bbR^+$, we define the informed proposal by 
\begin{align}
    \bK_h \left(x, x^{\prime}\right)=\frac{h\left( \frac{\pi\left(x^{\prime}\right) }{ \pi(x)} \right)}{Z_h(x) } \mathbbm{1}_{\mathcal{N}(x)}\left(x^{\prime}\right),  
    \text{ where } Z_h(x)= \sum_{\tilde{x} \in \mathcal{N}(x)} h\left(\frac{ \pi(\tilde{x}) }{ \pi(x)}\right); 
\end{align}
that is,  $\bK_h(x, \cdot)$ draws $x'$ from the set $\cN(x)$ with probability proportional to $h( \pi(x')/\pi(x) )$. 
Intuitively, one wants to let $h$ be non-decreasing so that states with larger posterior probabilities receive larger proposal probabilities. 
Choices such as $h(u) = 1 + u,   \sqrt{u}$ or $1 \wedge u$ were  analyzed in~\cite{zanella2020informed}. 
It was observed in~\cite{zhou2022dimension} and also illustrated by our Example~\ref{ex4:informed_fail}   that for problems such as high-dimensional variable selection, naive choices of $h$ could be problematic and lead to even worse mixing than random walk MH algorithms. 
To gain a deeper insight, we  write down the transition matrix of the induced MH algorithm: 
\begin{equation}\label{eq:informed-P}
    \begin{aligned} 
     \bP_h\left(x, x^{\prime}\right) =\left\{\begin{array}{cc}\bK_h\left(x, x^{\prime}\right) \min \left\{1, \frac{\pi\left(x^{\prime}\right) \bK_h\left(x^{\prime}, x\right)}{\pi(x) \bK_h\left(x, x^{\prime}\right)}\right\}, & \text { if } x^{\prime} \neq x, \\ 1-\sum_{\tilde{x} \neq x} \bP_h(x, \tilde{x}), & \text { if } x^{\prime}=x.\end{array}\right.
\end{aligned}
\end{equation}
We expect that $\bK_h(x, \cdot)$ proposes with high probability some $x'$ such that $\pi(x') \gg \pi(x)$. 
But $\bK_h(x', x)$, which depends on the local landscape of $\pi$ on $\cN(x')$, 
can be arbitrarily small if $h$ is unbounded, causing the acceptance probability of the proposal move from $x$ to $x'$ to be exceedingly small.

The solution proposed in~\cite{zhou2022dimension} was to use some $h$ that is bounded both from above and from below. In this work, we consider the following choice of $h$:
\begin{equation}\label{eq:def-optimal-clip}
    h(u) = \mathrm{clip}(u, \ell, L) \coloneqq \left\{\begin{array}{cc}
      \ell,   &  \text{ if } u < \ell, \\
        u,    & \quad  \text{ if } \ell \leq u \leq L, \\
        L,    &  \text{ if } u > L, 
    \end{array}
    \right.
\end{equation} 
where $\ell < L$ are some constants. Henceforth, whenever we write $\bK_h$ or $\bP_h$, it is understood that $h$ takes the form given in~\eqref{eq:def-optimal-clip}.
We now revisit Example~\ref{ex:vs3} and show that this bounded weighting scheme overcomes the issue of diminishing acceptance probabilities.

\begin{example}\label{ex5:clip}
Consider Example~\ref{ex:vs3}. 
It was shown in Example~\ref{ex4:informed_fail} that if an unbounded informed proposal is used, the algorithm can get stuck at $\delta_0 = (0, 0, 0)$, since it keeps proposing $\delta_1 = (0, 0, 1)$ of which the acceptance probability is almost zero.  
Now consider an informed proposal with $h$ defined in~\eqref{eq:def-optimal-clip} and $\ell = p = 3$, $L = p^2 = 9$. 
By Table~\ref{table:bvs}, this yields proposal probability $\bK_h \left(\delta_0, \delta_1\right) = 3/7$; thus, $\delta_1$ receives larger proposal probability than in the random walk proposal. In contrast to the scenario in Example~\ref{ex4:informed_fail},  the acceptance probability of this move equals $1$, since 
\begin{align*}
\frac{\pi\left(\delta_1\right) \bK_h \left(\delta_1, \delta_0\right)}{\pi(\delta_0) \bK_h \left(\delta_0, \delta_1\right)} 
=\;& e^{90.46} \frac{h(e^{-90.46}) / \{h(e^{-90.46}) + h(e^{58.49}) + h(e^{-1.77})\}}{h(e^{90.46})/\{h(e^{90.46}) + h(e^{-2.76}) + h(e^{63.98})\}  } \\
=\;& e^{90.46} \frac{3/(3+3^2+3) }{3^2/(3^2+3+3^2)} \approx 9 \times 10^{38} > 1. 
\end{align*}
We can also numerically calculate that the spectral gaps of $\bP_0$ (random walk MH) and $\bP_h$ are 0.334 and 0.582, respectively, which shows that the informed proposal accelerates mixing. 
Since $p$ is very small in this example, the advantage of the informed proposal is not significant. 
\end{example}

What we have observed in Example~\ref{ex5:clip} is not a coincidence. The following lemma gives simple conditions under which an informed proposal is guaranteed to have acceptance probability equal to $1$. 

\begin{lemma}\label{lm:accept-prob}
Let  $x' \in \cN(x)$. 
Assume that $Z_h(x) \geq L$ and $\pi(x') / \pi(x) \geq \ell \geq \N$. 
Then,   
\begin{align*}
    \frac{\pi\left(x'\right) \bK_h \left(x', x\right)}{\pi(x) \bK_h \left(x, x'\right)} \geq 1. 
\end{align*}
That is,  an informed proposal to move from $x$ to $x'$ has acceptance probability $1$. 
\end{lemma}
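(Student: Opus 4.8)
The plan is to expand the acceptance ratio, substitute the explicit clip form of $h$, and thereby reduce the whole statement to a comparison of the two normalizing constants $Z_h(x)$ and $Z_h(x')$. Writing $u = \pi(x')/\pi(x)$, and noting that the symmetry assumption on $\cN$ guarantees $x \in \cN(x')$ so that the reverse proposal $\bK_h(x',x) = h(1/u)/Z_h(x')$ is well defined, I would first record the identity
\begin{align*}
\frac{\pi(x')\bK_h(x',x)}{\pi(x)\bK_h(x,x')} = u \cdot \frac{h(1/u)}{h(u)} \cdot \frac{Z_h(x)}{Z_h(x')}.
\end{align*}
The entire task is then to show this product is at least $1$.

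Next I would evaluate the two weights using the hypothesis $u = \pi(x')/\pi(x) \geq \ell \geq \N \geq 1$. Since $u \geq \ell$, the lower clip is inactive at $u$, so $h(u) = \min\{u, L\}$, and hence $u/h(u) = \max\{1, u/L\} \geq 1$. Since $u \geq 1$ we have $1/u \leq 1 \leq \ell$, so the argument $1/u$ sits below the lower threshold and $h(1/u) = \ell$. Combining the two gives $u \cdot h(1/u)/h(u) = \ell \cdot \big(u/h(u)\big) \geq \ell$, so that the ratio above is at least $\ell\, Z_h(x)/Z_h(x')$.

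It remains to control the two normalizing constants, and this is where the remaining hypotheses enter. The numerator is handled directly by the assumption $Z_h(x) \geq L$. For the denominator I would use the boundedness of $h$ from above: every summand in $Z_h(x')$ is at most $L$, and there are at most $|\cN(x')| \leq \N$ of them, so $Z_h(x') \leq \N L$. Substituting both bounds yields
\begin{align*}
\frac{\pi(x')\bK_h(x',x)}{\pi(x)\bK_h(x,x')} \geq \ell \cdot \frac{Z_h(x)}{Z_h(x')} \geq \ell \cdot \frac{L}{\N L} = \frac{\ell}{\N} \geq 1,
\end{align*}
the last inequality being exactly $\ell \geq \N$. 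Since the acceptance probability is the minimum of $1$ and this quantity, it equals $1$.

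As for difficulty, there is no genuine obstacle: the claim reduces to a short deterministic calculation. The only place that demands care is the case analysis hidden inside the clip function — one must verify that the lower clip is active precisely at $1/u$ (because $1/u \leq 1 \leq \ell$) while being inactive at $u$ (because $u \geq \ell$), and observe that the upper clip at $u$ can only help, since it makes $u/h(u) \geq 1$. The conceptually informative point, rather than any technical hurdle, is which hypothesis does which job: the upper bound $L$ on $h$ produces $Z_h(x') \leq \N L$, the assumption $Z_h(x) \geq L$ supplies the matching lower bound, and $\ell \geq \N$ closes the final gap.
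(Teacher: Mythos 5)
Your proof is correct and follows essentially the same route as the paper's: lower-bound the forward proposal denominator by $Z_h(x)\geq L$, bound the reverse proposal via $Z_h(x')\leq \N L$ together with $h\geq \ell$, use $h(u)\leq u$ for $u\geq \ell$, and close with $\ell/\N\geq 1$. The only cosmetic difference is that you evaluate $h(1/u)=\ell$ exactly (noting $1/u\leq 1\leq \ell$) where the paper simply invokes the lower bound $h\geq \ell$; the substance of the argument is identical.
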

\begin{proof} 
    See Section~\ref{sec:appx-proof}. 
\end{proof}
The assumption $Z_h(x) \geq L$ used in Lemma~\ref{lm:accept-prob} is weak, and it will be satisfied if $x$ has one neighboring state $z$ such that $\pi(z)/\pi(x) \geq L$, which is true if $L \leq \R$. 

\subsection{Dimension-free relaxation time bound via the multicommodity flow method}\label{subsec:imh-path}

Using Lemma~\ref{lm:accept-prob} and Theorem 2 of~\cite{zhou2021complexity}, we can now prove a sharp mixing time bound for informed MH algorithms under the assumption $\R > \N^2$. 

\begin{theorem}\label{th:mix-imh} 
Assume $\R > \N^2$. 
Choose $\ell = \N$ and $\N^2 < L \leq \R$. 
Then,  $\mathrm{Gap}(\bP^{\rm{lazy}}_h)^{-1} \leq 2 c(\tilde{\rho}) $, and 
\begin{equation}\label{eq:mix-imh-1}
    \tmix(\bP^{\rm{lazy}}_h, \epsilon) \leq   2 c(\tilde{\rho})   \log \left\{\frac{1}{\epsilon \, \pi_{\min }}\right\}, 
\end{equation}
where $\tilde{\rho} = L / \N^2$ and $c(\rho)$ is given in~\eqref{eq:def-c-rho}. 
\end{theorem}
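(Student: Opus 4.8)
The plan is to apply the general path/flow bound of Theorem 2 of~\cite{zhou2021complexity} to the informed chain $\bP_h$, with the crucial new input being that clipping at $\ell = \N$ forces an $\Omega(1)$ fraction of the proposal mass onto accepted ``uphill'' moves, independently of $\N$. First I would record the elementary consequences of the parameter choices $\ell = \N$ and $\N^2 < L \le \R$. Since $\R > \N^2$, the definition~\eqref{eq:def-R} guarantees that every $x \neq x^*$ has a neighbor $y$ with $\pi(y)/\pi(x) \ge \R \ge L$, which contributes $h(\pi(y)/\pi(x)) = L$ to the normalizer, so $Z_h(x) \ge L$ for all $x \neq x^*$. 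Lemma~\ref{lm:accept-prob} then applies to every such uphill neighbor: any $x' \in \cN(x)$ with $\pi(x')/\pi(x) \ge \ell = \N$ has acceptance probability exactly $1$, so that $\bP_h(x,x') = \bK_h(x,x')$ along all these edges.

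The heart of the argument is a dimension-free lower bound on the total rate at which $\bP_h$ leaves $x$ through accepted uphill moves. Because $h = \mathrm{clip}(\cdot,\ell,L)$ with $\ell = \N$, every neighbor $v$ with $\pi(v)/\pi(x) < \N$ contributes exactly $h = \N$ to $Z_h(x)$, and there are at most $\N$ such neighbors, so their total contribution is at most $\N^2$. Combined with $Z_h(x) \ge L$, this gives
\begin{equation*}
\bP_h\big(x, \{x' : \pi(x')/\pi(x) \ge \N \}\big)
= 1 - \frac{\sum_{v : \pi(v)/\pi(x) < \N} h(\pi(v)/\pi(x))}{Z_h(x)}
\ge 1 - \frac{\N^2}{L} = 1 - \frac{1}{\tilde{\rho}},
\end{equation*}
a quantity depending only on $\tilde{\rho} = L/\N^2$. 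This is precisely where the dimension dependence vanishes: whereas a random walk proposal places mass only $1/\N$ on any single good neighbor (the source of the factor $\N$ in Theorem~\ref{th:rwmh-1}), the informed proposal concentrates a constant fraction of its mass on the set of accepted uphill moves.

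I would then feed this into the multicommodity flow machinery. Using $\R > \N^2$ to build increasing paths to $x^*$, I route the flow between any pair of states through $x^*$, splitting the uphill flow out of each intermediate state in proportion to $\bK_h(x,\cdot)$ over its accepted uphill neighbors. The advantage of multicommodity flow over the canonical-path ensemble underlying Theorem~\ref{th:rwmh-1} is exactly this freedom to split: the per-edge capacity $Q(e) = \pi(x)\bK_h(x,x')$ cancels against the proportionally split flow, so the congestion of each edge is governed by the aggregate uphill rate $1 - 1/\tilde{\rho}$ rather than by the probability assigned to one edge. With the $\pi$-weighted path-length accounting of~\cite{zhou2021complexity} handling the telescoping of $\pi$ along paths, the maximal congestion collapses to a dimension-free function of $\tilde{\rho}$, yielding $\mathrm{Gap}(\bP^{\rm{lazy}}_h)^{-1} \le 2 c(\tilde{\rho})$, and the mixing time bound~\eqref{eq:mix-imh-1} follows from~\eqref{eq:ineq-gap-mix}.

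I expect the main obstacle to be this final congestion computation — verifying that the proportional splitting genuinely removes the factor $\N$ while the increasing-path structure keeps the flow accumulated on each edge comparable to a local stationary mass. The delicate point is reconciling the two thresholds: a move must raise $\pi$ by a factor $\ge \N$ to be accepted with probability $1$, yet clean geometric control of the flow wants a factor $\ge \N^2$. This tension is resolved by the clipping at $\ell = \N$, which simultaneously caps the wasted downhill proposal mass at $\N^2$ and certifies acceptance. Matching the resulting constant exactly to $2c(\tilde{\rho})$, including the numerical factor $2$, will require tracking these factors carefully through Theorem 2 of~\cite{zhou2021complexity}.
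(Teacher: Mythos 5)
Your high-level route is the same as the paper's: show $Z_h(x)\ge L$ for all $x\neq x^*$, use Lemma~\ref{lm:accept-prob} to get acceptance probability one on uphill moves, feed a constant lower bound on the uphill transition mass of $\bP_h$ into the multicommodity flow machinery (Lemma~\ref{lm:bounding_A} together with Proposition~\ref{prop:gap_bound}), and finish with~\eqref{eq:ineq-gap-mix}. Your preliminary steps are correct. However, your central quantitative estimate is proved for the wrong edge set, and this is a genuine gap --- it is exactly the ``two thresholds'' tension you flag at the end, which your argument does not actually resolve.

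The flow construction with ratio threshold $S$ (Lemma~\ref{lm:def_flow} and Lemma~\ref{lm:bounding_A}) yields the constant $c(S/\N)$, which diverges as $S\downarrow \N$: the proof needs some $q\in(0,1)$ with both $S^{-q}<1$ (for the $w$-length sums) and $\N S^{-(1-q)}<1$ (for the ancestor/congestion sums), which forces $S>\N$ strictly, and to obtain the stated constant $c(\tilde\rho)$ one must take $S=L/\N$. Hence what the argument needs is a constant lower bound on $\bP_h(x,\cN_S(x))$ with $S=L/\N$, i.e., the mass on neighbors with ratio at least $L/\N$ --- a smaller set than your $\{x'\colon \pi(x')/\pi(x)\ge \N\}$. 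Your bound $1-\N^2/L$ does not transfer to this set: neighbors with ratio in $[\N, L/\N)$ are uphill and accepted, but cannot carry flow, and their $h$-contribution is \emph{not} capped at $\N$ (clipping acts only below $\ell=\N$ and above $L$), so each may contribute up to $L/\N$, hence up to order $L$ in total, comparable to the whole normalizer. Routing flow along edges that only guarantee a gain of $\N$ per step makes the congestion sums diverge, so the computation you defer to the end would fail. The fix, which is the paper's Lemma~\ref{lm:proposal-prob}: since every $x \neq x^*$ has a neighbor with ratio $\ge \R\ge L > S$, we have $Z_h^S(x)\coloneqq\sum_{x'\in\cN_S(x)}h(\pi(x')/\pi(x))\ge L$, while the complementary mass satisfies $Z_h(x)-Z_h^S(x)\le \N\cdot S = L$; therefore $\bP_h(x,\cN_S(x))=\bK_h(x,\cN_S(x))\ge L/(L+L)=1/2$, so $\bP_h^{\mathrm{lazy}}(x,\cN_S(x))\ge 1/4$ and $A(E_S,w,\phi)\le \tfrac{c(\tilde\rho)}{2}\cdot 4=2c(\tilde\rho)$. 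With this replacement --- threshold $L/\N$ instead of $\N$, and the two-sided accounting $Z_h^S\ge L$, $Z_h-Z_h^S\le L$ --- your outline becomes the paper's proof.
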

\begin{proof}
    See Section~\ref{sec:appx-proof}. 
\end{proof}

\begin{remark}\label{rmk:mix-imh}
  Recall that $\mathrm{Gap}(\bP)^{-1}$ is called the relaxation time, which can be used to derive the mixing time bound by~\eqref{eq:ineq-gap-mix}.     
  If we consider an asymptotic regime where $L, \N^2 \rightarrow \infty$ and $\liminf L / \N^2 > 1$,  by Theorem~\ref{th:mix-imh}, the relaxation time for informed MH algorithms is bounded from below by a universal constant (independent of the problem dimension). 
  In particular, this relaxation time bound is improved by a  factor of $\N$,  compared to that for random-walk MH algorithms in Theorem~\ref{th:rwmh-1}. 
  Since the spectral gap of a transition matrix cannot be greater than 2, the order of the relaxation time bound in Theorem~\ref{th:mix-imh} is optimal, and we say it is ``dimension-free''.  For the dimension-free convergence result of MCMC schemes on continuous state spaces, see, e.g.,~\citet{qin2019convergence, ascolani2024dimension,jin2022dimension, ascolani2024scalability}.
\end{remark}

\begin{remark}\label{rmk:wall time}
The mixing time bound does not take into account the computational cost per iteration. 
When implementing informed MH algorithms in practice, the posterior evaluation of all neighboring states typically incurs a cost proportional to $M$ in each iteration. 
Theorem~\ref{th:mix-imh}   suggests that there exists an informed MH algorithm whose total computational complexity is of the same order as that of random walk MH  algorithms.   
When parallel computing resources are available, one can reduce the cost of informed proposals significantly by parallelizing the posterior evaluation.  
\end{remark}

\begin{remark}\label{rmk:mix-imh-L} 
Lemma~\ref{lm:accept-prob} and Theorem~\ref{th:mix-imh} provide useful guidance on the choice of $\ell, L$ in~\eqref{eq:def-optimal-clip}. 
For most problems, after specifying the proposal neighborhood $\cN$, we can simply set $\ell = \N(\cX, \cN)$, which is typically easy to calculate or bound. 
Regarding the choice of $L$, according to Theorem~\ref{th:mix-imh}, for unimodal targets one may use $L = M^{2 + \xi}$ or $L = (1 + \xi) M^2$ for some small $\xi > 0$; see Corollaries~\ref{cor:vs_thm3} and~\ref{cor:sl_thm3} for examples of usage. 
For multimodal targets, the simulation studies in~\cite{zhou2022dimension} suggest that one may want to choose $\ell, L$ such that the ratio $L/ \ell$ is smaller; in other words, one wants to use a more conservative informed proposal that is not overwhelmingly in favor of the best neighboring state. 
\end{remark}

The proof of Theorem~\ref{th:mix-imh} utilizes the multicommodity flow method~\citep{sinclair1992improved}, which generalizes the canonical path ensemble argument and allows us to select multiple likely transition routes between any two states.  
If one uses the canonical path ensemble, the resulting relaxation time bound for informed MH algorithms will still involve a factor of $\N$ as in Theorem~\ref{th:rwmh-1}.  
A detailed review of the multicommodity flow method is given in Section~\ref{sec:appx-prelim}.

\subsection{Better mixing time bounds via the drift condition}\label{subsec:imh-drift}
The mixing time analysis  we have carried out so far relies on employing path methods to establish bounds on the spectral gap---an approach that is predominant in the literature on finite-state Markov chains.  
One exception was the recent work of~\cite{zhou2022dimension}, who used drift condition to study the mixing time of an informed add-delete-swap sampler for Bayesian variable selection.  
In this section, we show that their method can also be applied to general discrete spaces, 
and it can be used to improve the mixing time bound obtained in Theorem~\ref{th:mix-imh} in an asymptotic setting. 

Recall $x^* = \argmax_{x \in \cX} \pi(x)$. 
The strategy of the proof is to establish a drift condition on $\cX \setminus \{x^*\}$, which means that for some $V \colon \cX \rightarrow [1, \infty)$ and $\lbda \in (0, 1)$, 
\begin{equation}\label{eq:def-drift}
(\bP_h V)(x) \leq \lbda V(x), \quad \text{ for any } x \in \cX \setminus \{x^*\},  
\end{equation}
where $(\bP_h V)(x) = \sum_{x' \in \cX} \bP_h(x, x') V(x').$ 
Then, one can apply Theorem 4.5 of~\cite{jerison2016drift} to obtain a mixing time bound of the order $(1 - \lbda)^{-1} \log (\max_{x \in \cX} V(x))$. 
The main challenge is to find an appropriate $V$ such that $\lbda$ is as small as possible. 
After several trials, we find that the choice, 
\begin{equation}\label{eq:optimal-V}
    V(x) = \pi(x)^{1 /\log  \pimin } = \exp\left( \frac{ \log \pi(x)  }{\log  \pimin} \right), 
\end{equation} 
yields the desired mixing rate given in Theorem~\ref{th:mix-imh-drift} below. 
This drift function is similar to but simpler than the one used in~\cite{zhou2022dimension} for variable selection. 
By definition, $V(x)$ always decreases as $\pi(x)$ increases, since $\pimin < 1$ and $V$ is bounded on $[1, e]$. 
 
\begin{theorem}\label{th:mix-imh-drift} 
Assume $\R > \N^2$.  
Choose $\ell = \N$ and $\N^2 < L \leq \R$. If  
\begin{equation}\label{eq:drift_pimin_condition}  
\log \frac{1}{\pi_\mathrm{min}} \leq 
\frac{L \log (L/M)}{8 (e - 1) M^2  }, 
\end{equation} 
then 
\begin{equation}\label{eq:mix-imh-drift}
     \tmix(\bP^{\rm{lazy}}_h, \epsilon) \leq  \frac{8 \log (2e / \epsilon)} {\log (L / M) }   \log \left( \frac{1 }{\pimin} \right). 
\end{equation}  
\end{theorem}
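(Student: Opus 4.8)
The plan is to establish the geometric drift condition~\eqref{eq:def-drift} for $\bP_h$ with the function $V$ of~\eqref{eq:optimal-V} and then feed it into Theorem 4.5 of~\cite{jerison2016drift}. Set $\beta = -1/\log\pimin > 0$, so that $V(x')/V(x) = (\pi(x)/\pi(x'))^{\beta}$ and $V$ takes values in $[1,e]$, with the maximum $e$ attained at the least likely state. Since $\sum_{x'}\bP_h(x,x') = 1$, the normalized drift can be written as
\begin{equation*}
\frac{(\bP_h V)(x)}{V(x)} = 1 - \underbrace{\sum_{x'\in\cN(x):\,\pi(x')\ge\pi(x)}\bP_h(x,x')\Big[1 - \big(\tfrac{\pi(x)}{\pi(x')}\big)^{\beta}\Big]}_{A(x)\,\ge\,0} + \underbrace{\sum_{x'\in\cN(x):\,\pi(x')<\pi(x)}\bP_h(x,x')\Big[\big(\tfrac{\pi(x)}{\pi(x')}\big)^{\beta} - 1\Big]}_{B(x)\,\ge\,0},
\end{equation*}
so it suffices to prove that $A(x) - B(x) \gtrsim \beta\log(L/\N)$ uniformly over $x\neq x^*$; this gives~\eqref{eq:def-drift} with $\lbda \le 1 - c\,\beta\log(L/\N)$ for a universal $c>0$.

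Two consequences of Lemma~\ref{lm:accept-prob} drive the estimates. First, for every $x\neq x^*$ the definition of $\R$ produces a neighbor $y$ with $\pi(y)/\pi(x)\ge\R\ge L$, whence $h(\pi(y)/\pi(x)) = L$ and $Z_h(x)\ge L$; consequently any neighbor $x'$ with $\pi(x')/\pi(x)\ge\ell=\N$ is accepted with probability one, so $\bP_h(x,x') = \bK_h(x,x') = h(\pi(x')/\pi(x))/Z_h(x)$, while any downhill neighbor with $\pi(x')/\pi(x)<\N$ has weight clipped to $\ell=\N$, giving $\bP_h(x,x')\le\N/Z_h(x)\le\N/L$. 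Second, $\pi(x)/\pi(x')\le\pimin^{-1}$ for every pair, hence $(\pi(x)/\pi(x'))^{\beta}\le\pimin^{-\beta}=e$, so each bracket in $B(x)$ is at most $e-1$. Combining these, $B(x)\le(e-1)\,\N\cdot(\N/L) = (e-1)\N^2/L$, which by hypothesis~\eqref{eq:drift_pimin_condition} is $o(\beta\log(L/\N))$ and therefore negligible against the target lower bound on $A(x)$.

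The crux is to lower bound $A(x)$ without losing the factor $\N$. Using $1 - e^{-t}\ge c_0 t$ for $t\in[0,1]$ (with $c_0 = 1-e^{-1}$) together with $\beta\log(\pi(x')/\pi(x))\le\beta\log\pimin^{-1}=1$, each uphill bracket obeys $1 - (\pi(x)/\pi(x'))^{\beta}\ge c_0\beta\log(\pi(x')/\pi(x))$. Restricting $A(x)$ to the accepted uphill moves (ratio $\ge\N$, probability $h/Z_h(x)$) reduces the task to bounding the $h$-weighted average of $\log(\pi(x')/\pi(x))$ from below by a multiple of $\log(L/\N)$. Here $L>\N^2$ is essential: every downhill weight equals $\N$ and there are at most $\N$ neighbors, so the total downhill weight is at most $\N^2<L$ and can never dominate $Z_h(x)$; a short case split on whether the total uphill weight is below or above $2L$ then shows that the uphill mass carried by ratios near $L$ cannot be diluted by many small-ratio uphill moves, whose weights equal their ratios and are thus small. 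This forces the weighted log-increment to be $\gtrsim\log L\ge\log(L/\N)$, and hence $A(x)\gtrsim\beta\log(L/\N)$.

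Combining the two bounds gives $A(x)-B(x)\ge(1-o(1))A(x)\gtrsim\beta\log(L/\N)$, establishing~\eqref{eq:def-drift} with $\lbda\le 1 - c\,\beta\log(L/\N)$. Passing to the lazy chain replaces $1-\lbda$ by $(1-\lbda)/2$, and feeding $\sup_x V(x)\le e$ together with $1-\lbda^{\mathrm{lazy}}\gtrsim\beta\log(L/\N)=\log(L/\N)/\log\pimin^{-1}$ into the explicit bound of Theorem 4.5 of~\cite{jerison2016drift} yields~\eqref{eq:mix-imh-drift}, the constant $4$ and the factor $\log(2e/\epsilon)$ arising from that theorem with a drift function bounded by $e$. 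I expect the main obstacle to be exactly the lower bound on $A(x)$: the naive argument uses only the single best neighbor $y$ and the crude estimate $Z_h(x)\le\N L$, which reintroduces a spurious factor $\N$ and makes the bound even worse than Theorem~\ref{th:mix-imh}; circumventing this needs the averaging/case-analysis above, which exploits $L>\N^2$ to guarantee that the productive uphill mass is $\Omega(1)$ with an average log-increment of order $\log(L/\N)$. Checking the remaining hypotheses of the single-element drift theorem at the excepted state $x^*$ (finiteness of $(\bP_h V)(x^*)$, automatic on a finite space) is routine.
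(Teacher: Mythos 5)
Your proof is correct and follows essentially the same route as the paper's: the same drift function $V$, the same uphill/downhill decomposition of $(\bP_h V)(x)/V(x)$, the identical $(e-1)\N^2/L$ bound on the downhill term (made negligible by condition~\eqref{eq:drift_pimin_condition}), and the same final appeal to the single-element drift bound (Lemma~\ref{lm:single-element-drift}) with $V \leq e$. The only cosmetic difference is that where the paper invokes Lemma~\ref{lm:proposal-prob} (transition mass at least $1/2$ on neighbors with ratio at least $L/\N$, each contributing drift of order $\beta\log(L/\N)$), you rederive the equivalent fact inline through a case split on the total uphill weight; both arguments rest on the same two observations, namely $Z_h(x)\geq L$ and the fact that neighbors with ratio below $L/\N$ carry total weight at most $L$ when $L>\N^2$.
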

\begin{proof}
    See Section~\ref{sec:appx-proof}. 
\end{proof}

\begin{remark}\label{rmk:comparison}
The order of the mixing time bound in Theorem~\ref{th:mix-imh-drift}  is typically better than that in Theorem~\ref{th:mix-imh}. 
To see this, let $p$ denote the complexity parameter as introduced in Section~\ref{subsec:setup} and assume $L = p^\omega$ and $M = p^\psi$ for constants $\omega, \psi$ with $\omega > 2 \psi$. 
Then, Theorem~\ref{th:mix-imh-drift} implies $\tmix(\bP^{\rm{lazy}}_h, \epsilon) = O( \log \pimin^{-1} /  \log p  ),$ 
which improves the bound in Theorem~\ref{th:mix-imh} by a factor of $1/\log p$. 
The additional condition given in~\eqref{eq:drift_pimin_condition} is very weak. It holds as long as $\pimin \geq \exp (  - c p^{\omega - 2 \psi} \log p )$ for some sufficiently small constant $c > 0$.  
\end{remark}

\begin{remark}\label{rmk:rwmh_drift}
The drift function given in~\eqref{eq:optimal-V} cannot be used to study the mixing times of random walk MH algorithms under the unimodal setting we consider. 
To see this, suppose $|\cN(x)| = \N$ for all $x \in \cN$ and that there exist $x \in \cX, z \in \cN(x)$ such that (i) 
$\pi(z) / \pi(x) = \R$, and (ii) for any $z' \in \cN(x) \setminus \{z\}$,  $\pi(z') / \pi(x) = 1 / 2$, which implies $\bP_0(x, z') = 1/ (2 \N)$. 
Then,  
\begin{align*}
    (\bP_0 V)(x)      
     & =  \bP_0(x,z)   V(z)    + \sum_{z' \in \cN(x) \setminus \{z\}} \bP_0(x,z')  V(x')     \\
     & =   \frac{1}{\N}V(x)  e^{ \frac{ \log \R }{ \log \pimin} }   + \frac{\N - 1}{  \N } V(x) e^{ \frac{ - \log 2}{ \log \pimin} }. 
\end{align*}
Assuming $\N \rightarrow \infty$ and  $\log \R /  \log \pimin^{-1}  = o(1)$, 
we can rewrite the above equation as 
\begin{align*}
    \frac{ (\bP_0 V)(x) }{V(x)} - 1  
 = & \frac{1}{\N} \left(  e^{ \frac{ \log \R }{ \log \pimin} } - 1 \right) + \frac{\N - 1}{  \N }  \left( e^{ \frac{ - \log 2}{ \log \pimin} } - 1 \right)  \\
 = &  - O\left( \frac{ \log \R }{ \N \log \pimin^{-1}} \right) 
 + O \left(\frac{  \log 2}{ \log \pimin^{-1} }  \right). 
\end{align*}
Hence, as long as $(\log \R) / \N = o(1)$, the right-hand side of the above equation is asymptotically positive. Consequently, $(\bP_0 V)(x)$ is asymptotically greater than $V(x)$, which means that there does not exist $\alpha \in (0, 1)$ satisfying~\eqref{eq:def-drift}.   
\end{remark}

\subsection{Application to variable selection and structure learning}\label{sec:vs-sl-two} 
We now apply Theorems~\ref{th:mix-imh} and~\ref{th:mix-imh-drift} to Bayesian variable selection and structure learning. 
For variable selection, the second mixing time bound in Corollary~\ref{cor:vs_thm3} has the same order as that in Theorem~1 of~\citet{zhou2022dimension}, but the latter was derived under a weaker condition via a more refined drift-condition argument.  
For structure learning, to our knowledge, this is the first result for the complexity of informed MH algorithms. 
\begin{corollary}\label{cor:vs_thm3}
  Consider the variable selection problem described in Section~\ref{sec:cor-settings-vs}.  Suppose Assumptions (A1) to (A5) hold with (A3) replaced by (A3').   For the informed MH algorithm on  $(\cV_s, \cN_{\rm{ads}})$ with parameters $\ell = ps$ and $L = \ell^{2 + \xi}$ for some universal constant $\xi > 0$, 
\begin{equation*}
        \tau(\mathsf{P}_h^{\text{lazy}}, \epsilon)   = O_p \left(( n +   \kappa s ) \log p  \right). 
\end{equation*}  
Further, if $n = o( (ps)^\xi )$, then $ \tau(\mathsf{P}_h^{\text{lazy}}, \epsilon)  = O_p \left( n +   \kappa s  \right). $
\end{corollary}  
\begin{proof}
See Section~\ref{sec:proof-vs}. 
\end{proof} 
\begin{corollary}\label{cor:sl_thm3}
Consider the structure learning problem described in Section~\ref{sec:cor-settings-sl}.  Suppose Assumptions (B1) to (B4) hold with (B3) replaced by (B3'). 
For the informed MH algorithm on  $(\mathcal{C}_p(d_{\text{in}}, d_{\text{out}}),  \cN_{\rm{ads}})$ with parameters $\ell = 3 t_0 p^{t_0 + 2} \log_2 p$ and $L = \ell^{2 + \xi}$ for some universal constant $\xi > 0$, 
\begin{equation*}
        \tau(\mathsf{P}_h^{\text{lazy}}, \epsilon)   = O_p \left(         n p   \kappa \right).   
\end{equation*}  
Further, if $n = o(p^{(t_0 + 2 )\xi - 1})$, 
\begin{equation*}
     \tau(\mathsf{P}_h^{\text{lazy}}, \epsilon)   = O_p \left(   \frac{n p  \kappa }{\log p}   \right).   
\end{equation*}
\end{corollary}
\begin{proof}
See Section~\ref{sec:proof-sl}. 
\end{proof}

\section{Mixing times in a multimodal setting}\label{sec:beyond} 

\subsection{Mixing time bounds via restricted spectral gaps}
As we have seen in Section~\ref{sec:bvs}, when we let $\cX$ be the unrestricted search space for a high-dimensional statistical problem, we usually do not expect that $\pi$ is unimodal on $\cX$.
In particular, for high-dimensional model selection problems, local modes can easily occur at some non-sparse models. 
However, we can often establish the unimodality on some subset $\cX_0 \subset \cX$. Moreover, in reality, we typically initialize the MH algorithm at some state in $\cX_0$ (e.g. a sparse model), and if $\cX_0$ is large enough, we probably will not see the chain leave $\cX_0$ during the entire run. 
This suggests that the behavior of the chain on $\cX \setminus \cX_0$ may not matter if we only care about the mixing of the chain given a good initialization.   
In this section, we measure the convergence of MH algorithms using the initial-state-dependent mixing time $\tmix_x$ defined in~\eqref{eq:def-mix-x} and generalize the previous results to a potentially multimodal setting where $\R|_{\cX_0} > \N$ or $\R|_{\cX_0} > \N^2$. 
Combining the restricted spectral gap argument of~\cite{atchade2021approximate} with the multicommodity flow method, we obtain the following theorem, which shows that the mixing will be fast if the chain has a warm start and $\pi(\cX \setminus \cX_0)$ is sufficiently small. 

\begin{theorem}\label{th:mix-rwmh-approx}
Let $\cX_0 \subset \cX, x_0 \in \cX_0, \eta \in (0, 1)$ be such that  
(i) $\rho = \R|_{\cX_0} / \N > 1$, (ii) $\pi(x_0) \geq \eta$, and (iii) $\pi(\cX_0) \geq 1 - \epsilon^2 \eta^2 / 5$.  
We have 
\begin{equation}\label{eq:mix-rwmh-restricted}
    \tmix_{x_0}(\bP^{\rm{lazy}}_0, \epsilon) \leq    c(\rho) \N \log \left\{\frac{1}{2 \epsilon^2 \eta^2  }\right\}, 
\end{equation}
where $c(\rho)$ is given in~\eqref{eq:def-c-rho}. 
\end{theorem}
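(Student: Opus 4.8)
The plan is to reduce the initial-state-dependent mixing time on the full space $\cX$ to a relaxation-time statement about a modified chain that is effectively confined to $\cX_0$, and then apply the unimodal bound of Theorem~\ref{th:rwmh-1} on that confined chain. The key tool is the restricted spectral gap machinery of~\citet{atchade2021approximate}, which is designed precisely to bound $\tmix_{x_0}$ in terms of the spectral gap of the chain restricted to a good set $\cX_0$, provided the chain starts warm ($\pi(x_0) \geq \eta$) and the complement $\cX \setminus \cX_0$ carries little mass (condition (iii)). I would first invoke the abstract restricted-spectral-gap bound reviewed in Appendix~\ref{sec:appx-spec-gap} to obtain an inequality of the form
\begin{equation}
  \tmix_{x_0}(\bP^{\rm{lazy}}_0, \epsilon) \leq \mathrm{Gap}_{\cX_0}(\bP^{\rm{lazy}}_0)^{-1} \cdot \log\left\{\frac{C}{\epsilon^2 \eta^2}\right\},
\end{equation}
where $\mathrm{Gap}_{\cX_0}$ denotes the restricted spectral gap on $\cX_0$ and the logarithmic factor absorbs the warmness parameter $\eta$ and the mass condition on $\pi(\cX_0)$; the constants should be arranged so that condition (iii), $\pi(\cX_0) \geq 1 - \epsilon^2\eta^2/5$, controls the error incurred by ignoring excursions outside $\cX_0$.

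Next I would bound the restricted spectral gap by the spectral gap of the random walk MH chain run on the triple $(\cX_0, \cN|_{\cX_0}, \pi(\cdot \mid \cX_0))$. Here the assumption $\rho = \R|_{\cX_0}/\N > 1$ does the heavy lifting: since $\pi$ is unimodal on $\cX_0$ with respect to $\cN$ with ratio parameter exceeding $\N$, the very same argument behind Theorem~\ref{th:rwmh-1} applies to the restricted problem. That is, the multicommodity flow (or canonical path) construction identifies, for every $x \in \cX_0 \setminus \{x_0^*\}$, an ascending path to the restricted mode that stays inside $\cX_0$, and the congestion bound yields $\mathrm{Gap}_{\cX_0}(\bP^{\rm{lazy}}_0)^{-1} \leq c(\rho)\N$ with the same $c(\rho)$ from~\eqref{eq:def-c-rho}. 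The only subtlety is that $\N$ is the maximum neighborhood size over all of $\cX$, not just $\cX_0$, so using $\N$ rather than $\N(\cX_0, \cN|_{\cX_0})$ is conservative but valid, and the unimodality ratio $\R|_{\cX_0}$ is measured using only the restricted neighborhoods, which is exactly the quantity the path argument needs.

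Combining the two displays gives the claimed bound~\eqref{eq:mix-rwmh-restricted}, with the logarithmic argument $\log\{1/(2\epsilon^2\eta^2)\}$ emerging once the constant $C$ and the slack from condition (iii) are tracked carefully. The main obstacle, I expect, is the bookkeeping in the restricted-spectral-gap step: one must verify that the flow-based gap bound established for a genuinely unimodal target transfers cleanly to the restricted spectral gap defined on $\cX_0$, and that the perturbation caused by the chain occasionally stepping into $\cX \setminus \cX_0$ is quantitatively dominated by the mass bound $\pi(\cX \setminus \cX_0) \leq \epsilon^2\eta^2/5$. In particular, the restricted spectral gap of~\citet{atchade2021approximate} is not literally the spectral gap of a sub-stochastic chain on $\cX_0$, so the reduction requires relating the two notions and confirming that the warm-start factor $\eta$ enters only logarithmically rather than polynomially; getting the precise numerical constants ($5$ in condition (iii), the $2$ inside the logarithm) to line up is where the real care is needed, whereas the unimodal gap estimate itself is essentially a restatement of Theorem~\ref{th:rwmh-1}.
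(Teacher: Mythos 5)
Your proposal follows essentially the same route as the paper: first invoke the restricted-spectral-gap lemma of \citet{atchade2021approximate} with the Dirac initial distribution at $x_0$ (so that $\pi(x_0)\geq\eta$ gives the warm-start bound $B=1/\eta$ and condition (iii) matches the required mass bound with $m=\infty$), and then bound $\mathrm{Gap}_{\cX_0}(\bP^{\rm{lazy}}_0)^{-1}\leq c(\rho)\N$ by running the multicommodity flow construction with paths confined to $\cX_0$, exactly as in the paper's Proposition~\ref{prop:restricted_gap_bound} and Lemma~\ref{lm:bounding_A0}. The only slight imprecision is your suggestion that $\N$ could in principle be replaced by $\N(\cX_0,\cN|_{\cX_0})$: the factor $\N$ here is not conservative slack but intrinsic, since the chain at $x\in\cX_0$ still proposes uniformly from the full neighborhood $\cN(x)$, so the transition probability to the ascending neighbor is only $\Omega(1/\N)$.
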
 
\begin{proof}
    See Section~\ref{sec:appx-proof}. 
\end{proof}

Similarly, we can also use the argument of~\cite{atchade2021approximate} to extend Theorem~\ref{th:mix-imh} to the multimodal setting. 
However, this time we need an additional assumption, given in~\eqref{eq:mix-imh-approx-condition} below, on the behavior of $\pi$ at the boundary of the set $\cX_0$.  

\begin{theorem}\label{th:mix-imh-approx} 
Let $\cX_0 \subset \cX, x_0 \in \cX_0, \eta \in (0, 1)$ be such that $  \R|_{\cX_0} > \N^2$,  $\pi(x_0) \geq \eta$, and  $\pi(\cX_0) \geq 1 - \epsilon^2 \eta^2 / 5$.  
Choose $\ell = \N$ and  $\N^2 < L \leq  \R|_{\cX_0}$. 
If 
\begin{equation}\label{eq:mix-imh-approx-condition}
  \max_{x \in \cX_0} \bK_h(x, \, \cX \setminus \cX_0) \leq \frac{1}{2},   
\end{equation}
we have 
\begin{equation}\label{eq:mix-imh-restricted}
    \tmix_{x_0}(\bP^{\rm{lazy}}_h, \epsilon) \leq    4 c(\tilde{\rho})  \log \left\{\frac{1}{2 \epsilon^2 \eta^2  }\right\}, 
\end{equation}
where $\tilde{\rho} = L / \N^2$ and $c(\rho)$ is given in~\eqref{eq:def-c-rho}. 
\end{theorem}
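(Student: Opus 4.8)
The plan is to follow exactly the two-step template already used for Theorem~\ref{th:mix-rwmh-approx}, replacing the random-walk relaxation-time input with its informed-MH counterpart from Theorem~\ref{th:mix-imh}. First I would invoke the restricted spectral gap machinery of~\cite{atchade2021approximate} (reviewed and extended to the multicommodity flow method in Appendix~\ref{sec:appx-spec-gap}), which converts the initial-state-dependent mixing time $\tmix_{x_0}(\bP^{\rm{lazy}}_h, \epsilon)$ into a bound of the form $\mathrm{Gap}_{\cX_0}(\bP^{\rm{lazy}}_h)^{-1} \log\{1/(2\epsilon^2\eta^2)\}$, where $\mathrm{Gap}_{\cX_0}$ is the restricted spectral gap on $\cX_0$. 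This conversion consumes the two distributional hypotheses: the warm-start bound $\pi(x_0) \geq \eta$ supplies the $\eta$ inside the logarithm, while the small-leak bound $\pi(\cX_0) \geq 1 - \epsilon^2\eta^2/5$ guarantees that the mass escaping $\cX_0$ is negligible on the relevant timescale. This step is verbatim the same as in the random-walk case and requires no change.

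The second step is to bound the restricted relaxation time by $\mathrm{Gap}_{\cX_0}(\bP^{\rm{lazy}}_h)^{-1} \leq 2c(\tilde{\rho})$ via the multicommodity flow method applied to the restricted triple $(\cX_0, \cN|_{\cX_0}, \pi)$. Here the assumption $\R|_{\cX_0} > \N^2$ plays exactly the role that $\R > \N^2$ plays in Theorem~\ref{th:mix-imh}: it ensures $(\cX_0, \cN)$ is connected and that every $x \in \cX_0 \setminus \{x^*_0\}$ has a neighbor $y \in \cN|_{\cX_0}(x)$ with $\pi(y)/\pi(x) \geq \R|_{\cX_0} > \N^2$. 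Such a neighbor forces $Z_h(x) \geq L$, and since $\pi(y)/\pi(x) \geq \ell = \N$, Lemma~\ref{lm:accept-prob} makes the corresponding informed move accept with probability one. I would then route, for each commodity in $\cX_0$, the flow along increasing paths toward $x^*_0$, precisely as in the proof of Theorem~\ref{th:mix-imh}, and read off the congestion bound $2c(\tilde{\rho})$.

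The one genuinely new ingredient—and the step I expect to be the \emph{main obstacle}—is reconciling the fact that the informed proposal $\bK_h(x, \cdot)$ is normalized over the full neighborhood $\cN(x)$ rather than over $\cN|_{\cX_0}(x)$. Unlike the uniform random-walk proposal, whose within-$\cX_0$ transition probabilities are insensitive to the landscape outside $\cX_0$, the normalizer $Z_h(x)$ could in principle be inflated by out-of-$\cX_0$ neighbors, which would shrink the in-$\cX_0$ proposal probabilities and spoil the flow bound. This is exactly what condition~\eqref{eq:mix-imh-approx-condition} rules out: each $x' \in \cN(x) \setminus \cX_0$ satisfies $\pi(x')/\pi(x) < L/\N$, so (using $\N^2 < L$, hence $\ell = \N < L/\N < L$) its clipped weight $h(\pi(x')/\pi(x))$ stays below $L/\N$, and the total out-of-$\cX_0$ contribution to $Z_h(x)$ is at most $\N \cdot (L/\N) = L$, of the same order as a single good in-$\cX_0$ neighbor. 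Thus $Z_h(x)$ computed over $\cN(x)$ is comparable to the one over $\cN|_{\cX_0}(x)$, the within-$\cX_0$ capacities $Q(e) = \pi(x)\bP_h(x,x')$ match those in the full-space analysis up to an absorbable constant, and the restricted congestion estimate $2c(\tilde{\rho})$ carries through. The delicate part will be checking that this constant is absorbed cleanly so that~\eqref{eq:mix-imh-restricted} holds with the stated constants, and that boundary states of $\cX_0$ do not divert enough flow out of $\cX_0$ to invalidate the restricted spectral gap bound.
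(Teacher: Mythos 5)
Your proposal is correct and follows essentially the same route as the paper: Lemma~\ref{lm:yves} for the warm-start reduction, the restricted multicommodity flow bound (Lemma~\ref{lm:bounding_A0} together with Proposition~\ref{prop:restricted_gap_bound}) with $S = L/\N$, and condition~\eqref{eq:mix-imh-approx-condition} used precisely to control the out-of-$\cX_0$ contribution to the informed normalizer. The constant you flag as delicate is in fact absorbed exactly: the condition implies $\cN_S(x) = \cN|_{\cX_0}^S(x)$ for every $x \in \cX_0$, so all neighbors outside $\cN_S(x)$ (including every out-of-$\cX_0$ neighbor) contribute clipped weight at most $\N \cdot (L/\N) = L \leq Z^S_h(x)$, and Lemma~\ref{lm:proposal-prob} carries over verbatim to give $\bP_h(x, \cN|_{\cX_0}^S(x)) \geq 1/2$, yielding the stated bound $2c(\tilde{\rho})$ with no loss.
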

\begin{proof}
    See Section~\ref{sec:appx-proof}. 
\end{proof}

\subsection{Application to variable selection}
 
We  use variable selection to illustrate how Theorems~\ref{th:mix-rwmh-approx} and~\ref{th:mix-imh-approx} are typically utilized in high-dimensional settings. Set $\cX = \cV$ and $\cX_0 = \cV_s$. 
Since it was already shown in~\cite{yang2016computational} that  $\R|_{\cX_0} > \N $ under certain assumptions, it only remains to verify conditions (ii) and (iii) in Theorem~\ref{th:mix-rwmh-approx}. 
If they can be established for some $x_0 \in \cX_0$ and $\eta > 0$ such that $| \log \eta |$ is polynomial in $p$, then $\tmix_{x_0}(\bP^{\rm{lazy}}_0, \epsilon)$ is also polynomial in $p$.  We prove in the following corollaries that this indeed can be achieved under the setting of~\cite{yang2016computational}.  
\begin{corollary}\label{cor:vs_thm4}
  Consider the variable selection problem described in Section~\ref{sec:cor-settings-vs} and   the random walk MH algorithm on the unrestricted space $(\cV, \cN_{\rm{ads}})$. 
  Suppose Assumptions (A1) to (A5) hold.   If $\delta_0 \in \cV_s$ satisfies  
    $\pi_n(\delta_0) \geq  \sqrt{5} \epsilon^{-1} p^{-\kappa s / 4},$ then 
\begin{equation*}
        \tau_{\delta_0}(\mathsf{P}_0^{\text{lazy}}, \epsilon)   = O_p \left( p \kappa  s^2 \log p  \right). 
\end{equation*}   
\end{corollary}  
\begin{proof}
See Section~\ref{sec:proof-vs}. 
\end{proof} 
\begin{corollary}\label{cor:vs_thm5}
  Consider the variable selection problem described in Section~\ref{sec:cor-settings-vs} and   the informed MH algorithm on the unrestricted space $(\cV, \cN_{\rm{ads}})$ with parameters $\ell = ps$ and $L = \ell^{2 + \xi}$. 
  Suppose Assumptions (A1) to (A5) hold with (A3) replaced by (A3'). If 
\begin{equation} \label{eq:vs-boundary}
\sum_{\delta' \in \cV_s} h\left( \frac{\pi_n(\delta')}{\pi_n(\delta)} \right) \geq \sum_{\delta' \in \cV \setminus \cV_s} h\left( \frac{\pi_n(\delta')}{\pi_n(\delta)} \right)   \; \text{ for every } \delta \in \cV_s \text{ s.t. } \cN_{\rm{ads}}(\delta)   \setminus \cV_s  \neq \emptyset, 
\end{equation}
where $h$ is defined in~\eqref{eq:def-optimal-clip},  
and $\delta_0 \in \cV_s$ satisfies  
    $\pi_n(\delta_0) \geq  \sqrt{5} \epsilon^{-1} p^{-\kappa s / 4},$ then   
\begin{equation*}
        \tau_{\delta_0}(\mathsf{P}_0^{\text{lazy}}, \epsilon)   = O_p \left(      \kappa s \log p  \right). 
\end{equation*}  
\end{corollary}  
\begin{proof}
See Section~\ref{sec:proof-vs}. 
\end{proof}  
The assumption on the initial model $\delta_0$ in both corollaries is relatively weak, since it only requires $\pi_n(\delta_0)$ to be of order $p^{-\kappa s / 4}$, where we recall $s$ is the sparsity parameter tending to infinity ($\kappa$ may also grow to infinity). For comparison, \citet{atchade2021approximate} analyzed a slightly different Bayesian variable selection model and required $\pi_n(\delta_0) \geq  p^{-c}$ for some fixed $c > 0$. The condition stated in~\eqref{eq:vs-boundary} corresponds to the assumption~\eqref{eq:mix-imh-approx-condition} in Theorem~\ref{th:mix-imh-approx}. 
Roughly speaking, \eqref{eq:vs-boundary} is satisfied if, for every $\delta$ on the boundary of $\cV_s$,  the number of desirable moves within $\cV_s$ exceeds those leaving $\cV_s$. Hence, though this condition
seems hard to verify theoretically,  it is expected to hold approximately in practice. \\ 
\indent Establishing analogous results for the structure learning problem appears  significantly more challenging. The restricted search space in~\citet{zhou2021complexity} is constructed by imposing both in-degree and out-degree constraints, making it technically difficult to compare its posterior mass with that of the unrestricted space.

\section{Numerical examples}\label{sec:simul} 

\subsection{Bayesian variable selection}\label{subsec:bvs} 
We consider  the Bayesian variable selection model studied in~\cite{yang2016computational}, for which the marginal posterior distribution $\pi(\delta)$ is given in~\eqref{eq:post}, a function of $\mathsf{r}^2(\delta), ||\delta||_1$ and prior hyperparameters. More details about this model are provided in Section~\ref{sec:cor-settings-vs}.   
We will work with the search space $\cV_n$ defined in~\eqref{eq:def-vs}, which is equivalent to setting $\pi(\delta) = 0$ for any $\delta$ such that $||\delta ||_1 > n$.

Throughout our simulation studies, we set $\kappa = 1$, $g = p^3$, $p = 500$, and sample size $n = 200$.  
We always generate $\by$ by $\by = \bX \bbeta^* + \bz $  with $\bz \sim \mathrm{MVN}(0, \bI_n)$. 
The first 5 elements of $\bbeta^*$ are set to
\begin{align*}
    \bbeta^*_{[5]} = \sqrt{\frac{\log p}{n}} (8,-12,8,8,-12),
\end{align*} 
where we use the notation $[k] = \{1, 2, \dots, k\}$;  all the other elements of $\bbeta$ are set to zero. 
Let $\delta^* $ denote the true model, which satisfies $\delta^*_j = \ind_{ \{ j \in [5] \} }$.   
We let all rows of the design matrix $\bX$ be i.i.d from $\mathrm{MVN}(0,\bm{\Sigma})$, and consider two choices of the covariance matrix $\bm{\Sigma}$: 
\begin{enumerate}[label=(\roman*)]
    \item (moderate correlation) $\bm{\Sigma}_{jk} = e^{-2|j-k|}$ for $j \neq k$; 
   \item (high correlation)  $\bm{\Sigma}_{jk} = e^{-|j-k|/4}$ for $j \neq k$. 
\end{enumerate} 
We set $\bm{\Sigma}_{jj} = 1$ for $j \in [p]$ in both cases.

In all the data sets $(\bX, \by)$ we have simulated, regardless of the simulation setting, we find that $\delta^*$ is always the global mode of $\pi$ with significant posterior mass.      
Hence, we use the number of iterations needed to reach $\delta^*$ as an indicator of the mixing of the chain~\citep{peres2015mixing}. 
We compare three different MCMC methods. 
\begin{enumerate}[label=(\alph*)]
    \item RWMH: the random walk MH algorithm described in Section~\ref{subsec:rwmh}. 
    \item IMH: the informed MH algorithm described in Section~\ref{sec:imh} with $\ell = p$ and $L = p^3$.
    \item IMH-unclipped: the informed MH algorithm described in Section~\ref{sec:imh} with $\ell = 0$ and $L = \infty$. 
\end{enumerate}
     For all three algorithms, we   use $\cN_1$ as the proposal neighborhood; that is, the algorithms only propose the next model by adding or deleting a variable. For each integer $m \in [n] \cup \{0\}$, we use
     \begin{align*}
         \Delta^m = \{\delta \in \cV:  ||\delta||_1 = m\},
     \end{align*}
     to denote the set of all models involving $m$ variables.
     The initial model will be sampled from $\Delta^m$ uniformly for some $m$. 
     We always run RWMH for 10,000 iterations 
     and run each informed MH algorithm for 1,500 iterations. 
We report the following metrics. 
\begin{itemize}  
    \item Success: the number of runs where the algorithm samples the true model. 
    \item $H_{\mathrm{true}}$: the median number of iterations needed to sample the true model for the first time. 
    \item Time: the median wall time measured in seconds.
    \item $T_{\mathrm{true}}$:  the median wall time (in seconds) needed to sample the true model for the first time. 
\end{itemize}   

We first generate one data set $(\bX, \by)$ where the covariance matrix of $\bX$ exhibits moderate correlation. 
We initialize RWMH and IMH at models uniformly sampled from $\Delta^m$ and study the effect of $m$ on the mixing. 
For each choice of $m$, we run the algorithms 100 times.  
Results are presented in Figure~\ref{fig:mixing_bvs} and Table~\ref{table:appendix.simul2}. 
We observe that the mixing of the chain depends on the initialization,  and local modes are likely to occur among non-sparse models. 
As illustrated by Figure~\ref{fig:mixing_bvs}, when the chain is initialized at some $\delta \in \Delta^n = \Delta^{200}$, it always gets stuck and barely moves. 
This is probably because a randomly generated $\delta \in \Delta^n$  is likely to be a local mode on $(\cV_n, \cN_1)$. 
As argued in~\cite{yang2016computational},  $\delta \in \Delta^n$ yields a perfect fit with $\mathsf{r}^2(\delta) = 1$, and for a neighboring model $\delta' \in \cN_1(\delta) \cap \cV_{n}$ to have a larger posterior probability, it must achieve a nearly perfect fit, which is unlikely. 
If we sample the initial model from $\Delta^{m}$ for smaller $m$, the performance of the algorithms gets improved dramatically, as shown in both Figure~\ref{fig:mixing_bvs} and Table~\ref{table:appendix.simul2}. 
This  aligns well with the theory developed in Section~\ref{sec:beyond}: $\pi$ seems to be (at least approximately) unimodal on  $(\cV_s, \cN_1)$ for some $s$ much smaller than $n$, and given a warm start, both RWMH and IMH mix quickly. 
Another observation from  Figure~\ref{fig:mixing_bvs} and Table~\ref{table:appendix.simul2} is that, compared to RWMH, IMH needs a better initialization  to achieve fast mixing. Specifically, for RWMH, we need $m \leq 180$ for most runs to be ``successful'' (i.e., find the true model $\delta^*$), and for IMH, we need $m \leq 170$. 
We believe this is because informed algorithms, which behave similarly to gradient-based  samplers on Euclidean spaces, have a stronger tendency to move to nearest local modes than RWMH.  
Consequently, the performance of IMH is more sensitive to the initialization. 
This observation also partially explains why the  condition~\eqref{eq:mix-imh-approx-condition} is required in Theorem~\ref{th:mix-imh-approx}, which is not needed when we analyze RWMH.

\begin{table}[!b]
\centering
\caption{Results for 100 MCMC runs for one simulated data set with moderately correlated design.  
}
{
\footnotesize

\begin{tabular}{ccccccccc}
\hline
\multicolumn{2}{c}{Initialization}  & $\Delta^{0}$ & $\Delta^{100}$ & $\Delta^{150}$ & $\Delta^{170}$ & $\Delta^{180}$ & $\Delta^{190}$ & $\Delta^{200}$   \\ \hline
\multirow{2}{*}{RWMH}  & Success &  99 & 100 & 100 & 100 & 99 & 36 & 0  \\
& $T_\mathrm{true}$   & 1548 & 2322 & 2584 & 2602 & 2620 &-- & -- \\ \hline
\multirow{2}{*}{IMH} & Success & 100 & 100 & 100 & 97 & 7 & 0 & 0   \\
 & $T_\mathrm{true}$    & 5 &   103 & 155 & 179    & -- &-- & --    \\\hline
\end{tabular}
}
\label{table:appendix.simul2}
\end{table}

\begin{figure}[!t]
    \centering 
\includegraphics[height=4.5cm]{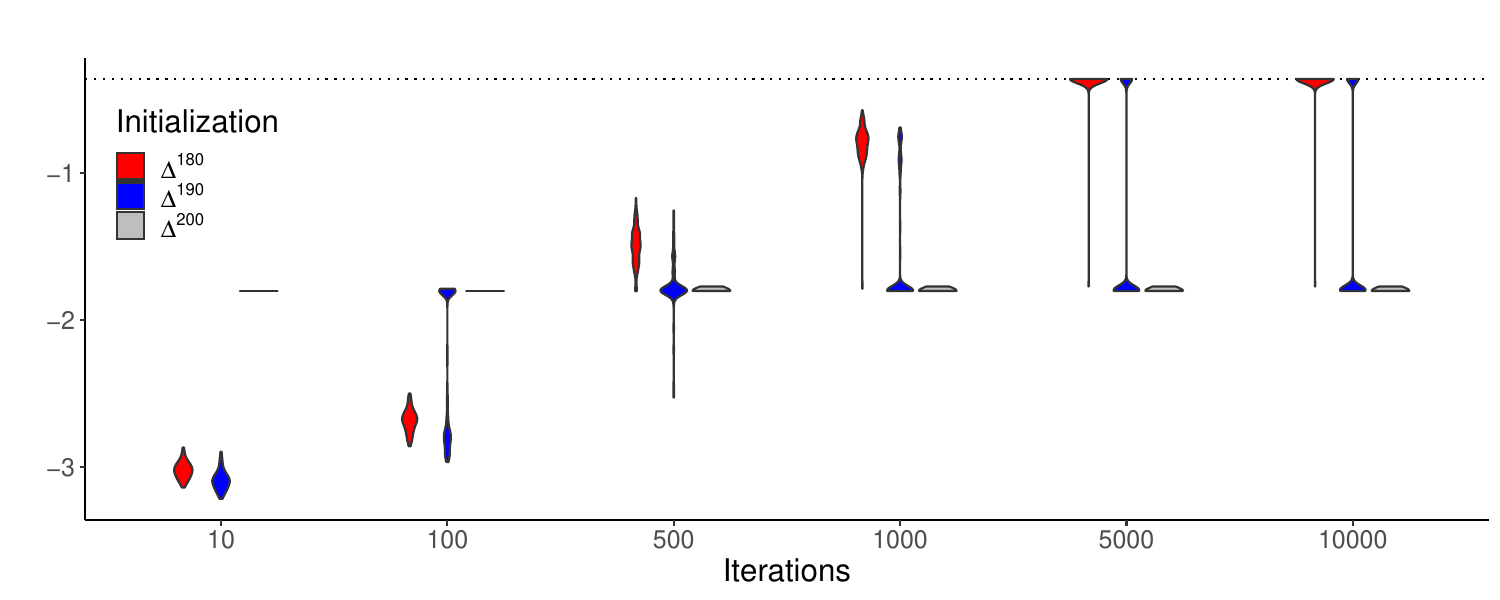}
\includegraphics[height=4.5cm]{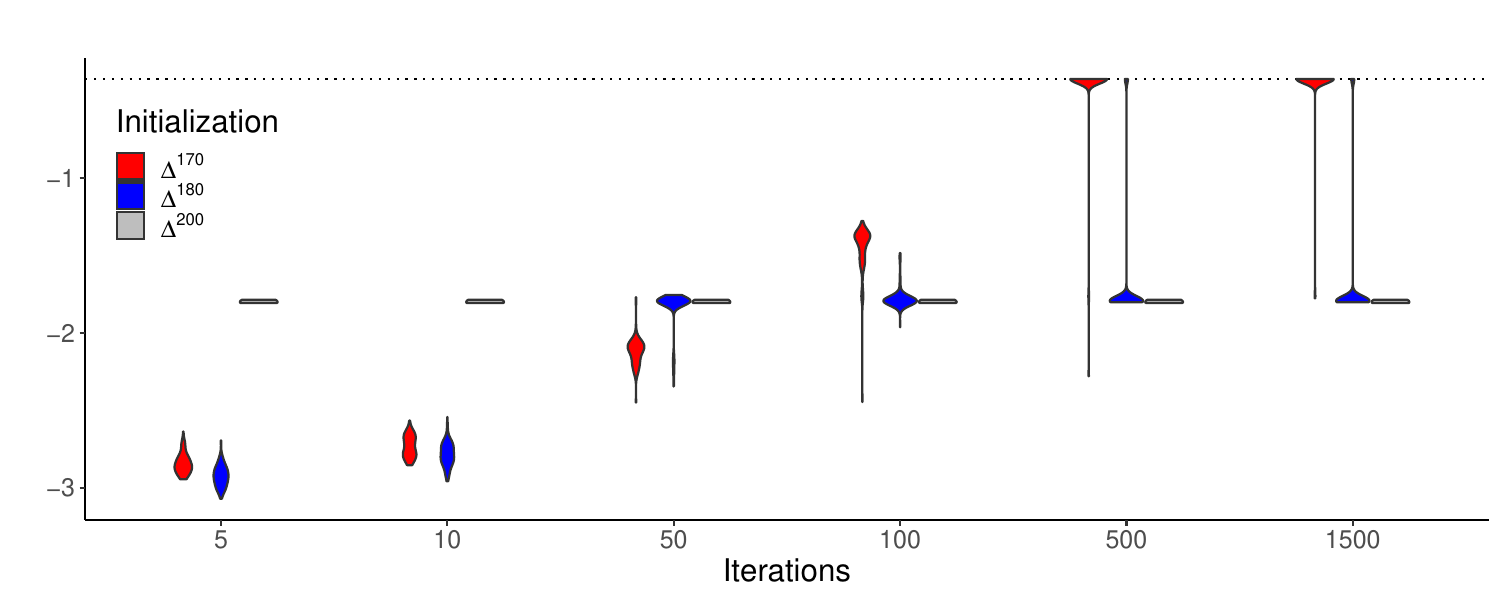}
    \caption{Violin plots visualizing MCMC trajectories for one simulated data set.  
    Each violin gives the distribution of the log-posterior probability (scaled by $10^{-3}$) of the sampled model across 100 runs. 
    The upper panel is for the random walk MH algorithm, and the lower is for the informed MH algorithm. Algorithms are initialized at models uniformly sampled from $\Delta^m$, and each color represents one choice of $m$.
    The black dotted line indicates the scaled log-posterior probability of the true model.
    }
    \label{fig:mixing_bvs}
\end{figure}

\begin{table}[!t]
\centering
\caption{Results for 100 replicates with moderately correlated design
and random initialization from $\Delta^{20}$.} 
{
\footnotesize
\begin{tabular}{cccc}
\hline
 & RWMH & IMH & IMH-unclipped \\
\hline
Success & 100 & 100 & 0 \\
$H_\mathrm{true}$ &1811 &27 & -- \\
Time & 13.6 & 9.3 & 35.3\\
$T_\mathrm{true}$ & 9.1 & 7.5 & -- \\
\hline
\end{tabular}
}
\label{table:rapid}
\end{table}

Next, we consider a more realistic initialization scheme, where all samplers  are started at some $\delta$ uniformly drawn from $\Delta^{20}$. 
We generate 100 replicates of $(\bX, \by)$ under the moderately correlated setting, and compare the performance of RWMH, IMH and IMH-unclipped. 
The result is summarized in Table~\ref{table:rapid}. 
Both RWMH and IMH hit $\delta^*$ within the specified total number of iterations, and IMH achieves this in much fewer iterations. This is expected, since Theorem~\ref{th:rwmh-1} and Theorem~\ref{th:mix-imh-drift} suggest that IMH mixes faster by a factor with the order of $\N$ (for this variable selection problem, $\N = p = 500$).  
We also note that  IMH-unclipped exhibits very poor performance in  Table~\ref{table:rapid}, which is expected according to the reasoning explained in Example~\ref{ex4:informed_fail}.  
It shows that selecting appropriate values for $\ell$ and $L$ in~\eqref{eq:def-optimal-clip} is crucial to the performance of informed MH algorithms in applications like variable selection. 

Simulation results for the highly correlated design are deferred to Section~\ref{sec:supp_bvs}. 
As in the moderately correlated setting, the behavior of the algorithms largely depends on the initialization. However, since $\pi$ tends to be highly multimodal when the design is highly correlated, stronger assumptions on the initialization are needed so that the samplers can quickly find the true model.

\subsection{Bayesian community detection}\label{subsec:sbm} 
Consider the  community detection problem with only two communities, where the goal is to estimate the community assignment vector $z  \in \{1,2\}^{p}$ from an observed undirected graph represented by a symmetric adjacency matrix $A \in \{0,1 \}^{p \times p}$. Here $z_j \in \{1, 2\}$ denotes the community assignment of the $j$-th node.  
We construct a posterior distribution by following the model and prior distribution used in~\cite{zhuo2021mixing}: 
\begin{equation}
    \begin{aligned}
     &  A_{ij} \mid Q, z \stackrel{\text{ind}}{\sim} \mathrm{Bernoulli}(Q_{z_iz_j}),  \; \forall\, i, j \in [p] \text{ and } i < j,  \\
    & Q_{uv} \stackrel{\text{iid}}{\sim} \mathrm{Uniform}(0,1), \quad \forall\, u, v \in \{1,2\} \text{ and } u \leq v, \\
    & \pi_0(z) \propto 1, \quad \forall\, z \in \{1,2\}^{p},
    \end{aligned}
\end{equation}
where $\pi_0(z)$ denotes the prior distribution of $z$ and $Q_{uv}$ is the edge connection probability between one node from community $u$ and another from community $v$. 
By integrating  out $Q$, we obtain a posterior distribution $\pi$ on the finite space $\{1,2\}^{p}$.
Note that due to label switching, we have $\pi(z) = \pi( \tilde{z}  )$, where $\tilde{z}$ is defined by $\tilde{z}_j = 3 - z_j$. 

In the simulation, we set $p = 1,000$ and let the true community assignment vector $z^*$ be given by $z^*_j = 1$ if $1 \leq j \leq p/2$ and $z^*_j = 2$ otherwise. 
We generate the observed graph by sampling each $A_{ij}$ (with $i < j$) independently from $\mathrm{Bernoulli}(p_{\mathrm{within}})$ when $z_i = z_j$ and from $\mathrm{Bernoulli}(p_{\mathrm{between}})$ when $z_i \neq z_j$. 
We use $p_{\mathrm{within}} = 10^{-1}$ and $p_{\mathrm{between}} = 10^{-8}$ so that the two communities are well separated in the observed graph.  
We consider both random walk MH (RWMH) and informed MH (IMH) algorithms, whose proposal neighborhood is defined as $\cN(z) = \{z' : d_{\mathrm{H}}(z, z') = 1\}$, where $d_{\mathrm{H}}$ is the Hamming distance. In words, the proposal only changes the community assignment of one node.  
For IMH, this time we use $\ell = p^{-1}$ and $L = p^3$ (reason will be explained later).  
We simulate 100 data sets, and for each data set we run RWMH for 20,000 iterations and IMH for 2,000 iterations.   
We consider two different initialization schemes, denoted by $z^\mathrm{bad}$ and $z^\mathrm{good}$.
The assignment vector $z^\mathrm{bad}$ is randomly generated such that half of the community assignments are incorrect, while $z^\mathrm{good}$ is generated with only one third of the assignments being incorrect. 
We summarize the results in Table~\ref{table:init_sbm} where the four metrics are defined similarly to those used in Section~\ref{subsec:bvs}. Note that we treat both $z^*$ and $\tilde{z}^*$ as the true model, where $\tilde{z}^*_j = 3 - z^*_j$ for each $j \in [p]$.  

\begin{table}[!h]
\centering
\caption{
Results for 100 replicates of the community detection problem.}  
{
\footnotesize
\begin{tabular}{ccccc}
\hline
&\multicolumn{2}{c}{RWMH} & \multicolumn{2}{c}{IMH} \\ 
Initialization &$z^{\mathrm{bad}}$ & $z^{\mathrm{good}}$ &$z^{\mathrm{bad}}$ & $z^{\mathrm{good}}$ \\
\hline 
Success   &  41 & 100 & 49  & 100\\
$H_\mathrm{true}$ & -- & 10901 & -- & 333 \\
Time  & 8.4 & 6.8 & 76.1 & 75.5\\
$T_\mathrm{true}$ & -- & 5.6  & -- & 13.4\\
\hline
\end{tabular}
}
\label{table:init_sbm}
\end{table}

\begin{figure}[!h]
    \centering
    \includegraphics[height=5cm]{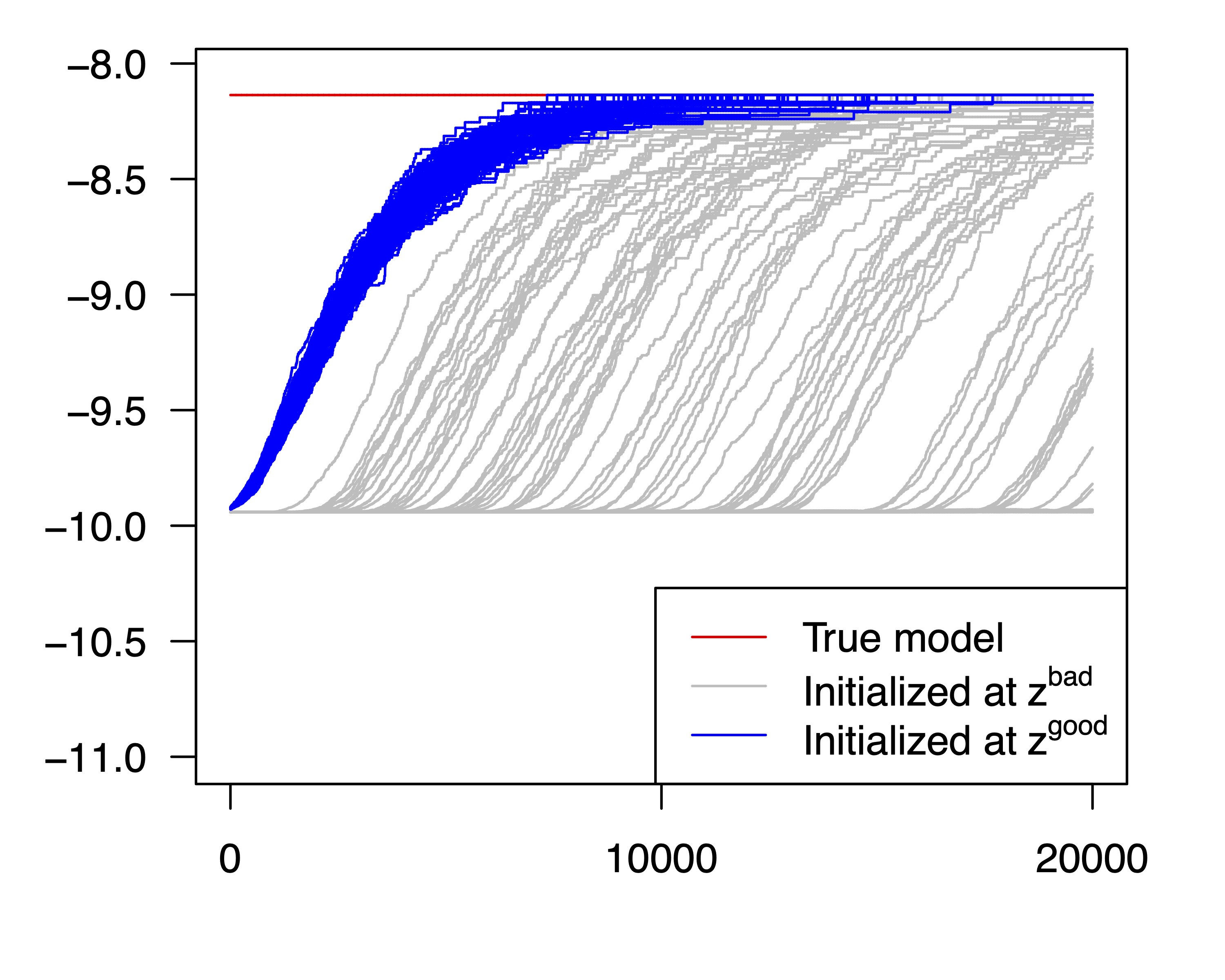}
    \includegraphics[height=5cm]{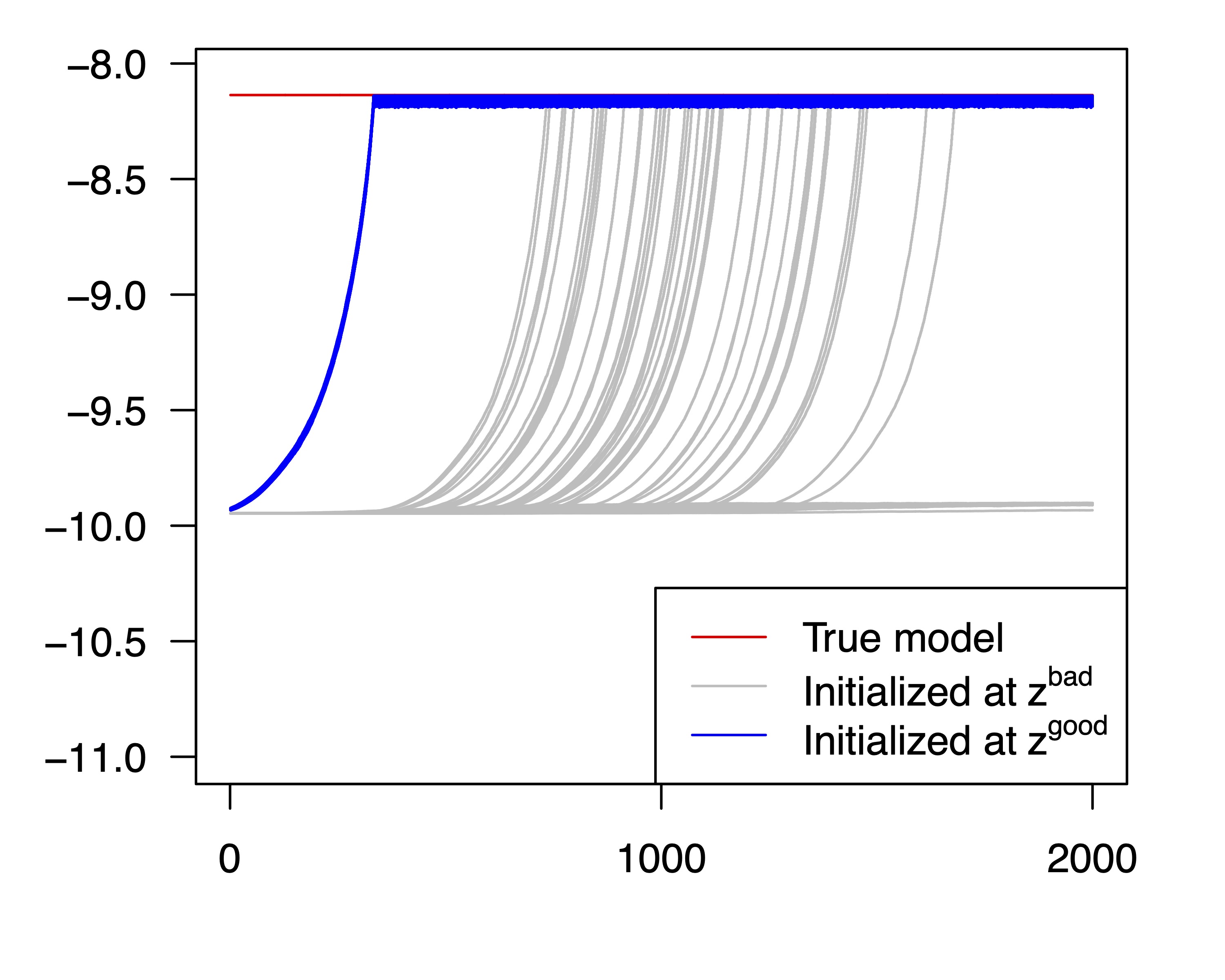} 
    \caption{Log-posterior probability $\times 10^{-4}$ versus the number of iterations in 100 MCMC runs, initialized at $z^{\mathrm{bad}}$  (gray) or $z^{\mathrm{good}}$ (blue). The red line indicates the log posterior probability of the true assignment. 
    The left panel is for RWMH, and the right is for IMH.  
    }
    \label{fig:mixing}
\end{figure} 

The result confirms our theoretical findings  in Section~\ref{sec:beyond} regarding the effect of initialization on the mixing. 
When we initialize the samplers at $z^\mathrm{bad}$, fewer than 50 \% of the runs reach the true assignment vector up to label switching,  while all chains find the true assignment within a reasonable number of iterations when $z^\mathrm{good}$ is used for initialization.  
In Figure~\ref{fig:mixing}, we plot the scaled log-posterior trajectories of 100 MCMC runs for one simulated data set. 
The figure shows that some runs initialized at $z^\mathrm{bad}$ get trapped near it, and when the chain is initialized at $z^\mathrm{good}$, it exhibits much faster mixing.  

We have also tried IMH with $\ell = p$ and $L = p^3$ and observed that in all 100 replicates with  initial state $z^{\mathrm{bad}}$, the sampler fails to find the true assignment vector and gets stuck around $z^{\mathrm{bad}}$. 
We have numerically examined the posterior landscape around $z^{\mathrm{bad}}$ and found that the ratio $\pi(z') / \pi(z^{\mathrm{bad}})$, 
where $z' = \argmax_{z \in \cN(z^{\mathrm{bad}}) } \pi(z)$, is always very small (often around $1$).  Hence, an informed proposal with $\ell = p = 1,000$ only proposes $z'$ with probability $1/p$ (and may get rejected due to small acceptance rates). 
For comparison, using $\ell = p^{-1}$ assigns larger proposal probability to $z'$ and sometimes does help  IMH move away from $z^{\mathrm{bad}}$, as shown in Figure~\ref{fig:mixing}.

\section{Discussion and concluding remarks}\label{sec:discussion} 
Our theory suggests that an MH algorithm converges fast if it is initialized within a set $\cX_0$ such that $\pi(\cX_0)$ is large and $\pi$ is unimodal on $\cX_0$ with respect to the given neighborhood relation $\cN$.   
In practice, it is difficult to determine if $\pi$ is indeed unimodal with respect to $\cN$, and thus a larger proposal neighborhood is sometimes preferred to prevent the chain from getting stuck at local modes. For example, in  variable selection, one can enlarge the proposal neighborhood by allowing adding or removing multiple variables simultaneously, which is a common practice in the literature~\citep{lamnisos2009transdimensional, guan2011bayesian, titsias2017hamming, zhou2019fast, liang2022adaptive, griffin2021search}.  
However, using a too large neighborhood is not desirable either, which is sometimes called the ``Goldilocks principle''~\citep{rosenthal2011optimal}.  
Our theory suggests that we should try to use the smallest neighborhood that  leaves $\pi$ unimodal on $\cX_0$. 
Developing diagnostics related to the unimodal condition could be an interesting direction for future work and provide guidance on choosing the neighborhood size in adaptive methods~\citep{griffin2021search, liang2022adaptive}.

One important technical contribution of this work is that we have demonstrated how to use multicommodity flow method and single-element drift condition to derive mixing time bounds in the context of discrete-space statistical models. 
To our knowledge, both methods have been rarely considered in the literature on high-dimensional Bayesian statistics. 
For random walk MH algorithms, the canonical path ensemble method may already provide a nearly optimal spectral gap bound, but for informed MH algorithms, the use of multicommodity flow is crucial to obtaining the dimension-free estimate of the relaxation time. 
Regarding the drift condition analysis, general theoretical results have been obtained for random walk or gradient-based MH algorithms on Euclidean spaces.  
In particular, the drift function $V(x) = \pi(x)^{-1/2}$ ($\pi$ denotes the Lebesgue density) has been used to prove geometric ergodicity~\citep{roberts1996geometric, jarner2000geometric} and  to derive explicit  convergence rate bounds in the recent work of~\citet{bhattacharya2023explicit}. 
This choice of $V$ is conceptually similar to the one we introduce in~\eqref{eq:optimal-V}, $V(x) = \pi(x)^{1 / \log \pimin}$, where the exponent is chosen deliberately to optimize the mixing time bound. 
It would be interesting to investigate whether one can also improve the mixing time estimates on Euclidean spaces by using a different exponent (though our choice will not be applicable since $\pimin$ becomes zero on unbounded  spaces).  

Both our theory and simulation studies suggest that informed MH algorithms require stronger assumptions on the posterior distribution or initialization scheme to achieve fast mixing. An open question is whether informed samplers still have theoretical guarantees under the regime $\N < \R < \N^2$, which is not covered by our results such as Theorem~\ref{th:mix-imh}. Note that by Theorem~\ref{th:rwmh-1}, random walk MH algorithms mix rapidly in this case.  

Finally, we point out that MH sampling schemes may not be the most efficient way to utilize informed proposals. 
The informed importance tempering algorithm proposed in~\citet{zhou2022rapid} and \citet{li2023importance}, which generalizes the tempered Gibbs sampler of~\cite{zanella2019scalable}, allows one to always accept informed proposals and use importance weighting to correct for the bias. 
It was shown in~\citet{li2023importance} that this technique can also be applied to multiple-try Metropolis algorithms~\citep{liu2000multiple, chang2022rapidly, gagnon2023improving}, making them rejection-free.  
These multiple-try methods enable the calculation of an informed proposal only on a random subset of $\cN(x)$, with the subset's size adjustable according to available computing resources, which makes them highly flexible and easily applicable to general state spaces.

\section*{Acknowledgement} 
The authors were supported by NSF grants DMS-2311307 and DMS-2245591. 
They would like to thank Yves Atchad\'{e} for helpful discussions about restricted spectral gaps, Changwoo J. Lee for comments on the stochastic block model, and the anonymous reviewers for very helpful suggestions. 

\clearpage 
\newpage
\begin{center}
\LARGE{Appendix}
\end{center}

\begin{appendix}

\section{Review of path methods}\label{sec:appx-prelim}

\subsection{Spectral gap and path methods}\label{sec:appx-path}
Let  $\bP \in [0,1]^{|\cX| \times |\cX|}$ denote the transition probability matrix of an irreducible, aperiodic and reversible Markov chain on the finite state space $\cX$. 
Let $\pi$ denote the stationary distribution and define $\pi(f) = \sum_x \pi(x) f(x)$ for any $f \colon \cX \rightarrow \bbR$. 
Recall that we define the \textit{spectral gap} of $\bP$ as 
$$ \mathrm{Gap} (\bP) = 1-\max \{ \lambda_2, |\lambda_{|\cX|} |\},$$ 
and the relation between spectral gap and mixing time is described by the inequality~\eqref{eq:ineq-gap-mix}.   
In words, if $ \mathrm{Gap} (\bP)$ is close to zero, the chain requires a large number of steps to get close to the stationary distribution in total variation distance. 
Path methods are a class of techniques for finding bounds on $\mathrm{Gap} (\bP)$ by examining likely trajectories of the Markov chain. 
They are based on the following variational definition of $\mathrm{Gap}(\bP)$~\citep[Lemma 13.6]{levin2017markov}:  
\begin{equation}\label{eq:def-gap-variational}
    \mathrm{Gap} (\bP) = \min_{  \substack{f \colon \cX \rightarrow \bbR \\  \textit{ s.t. }\mathrm{Var}_{\pi}(f) \neq 0}}  \frac{\cE(f)}{\mathrm{Var}_{\pi}(f)},
\end{equation}
where $\mathrm{Var}_{\pi}(f) = \sum_{x\in \cX} (f(x) - \pi(f) )^2 \pi(x)$, and $\mathcal{E}$ is the \textit{Dirichlet form} associated to the pair $(\bP, \pi)$, defined as
\begin{align}
    \cE(f ) = \langle(\mathsf{I} - \bP)f, f \rangle_{\pi}, 
\end{align}
where the inner product is defined by 
$$\langle f_1, f_2 \rangle_{\pi} = \sum_{x \in \mathcal{X}} f_1(x)f_2(x)\pi(x).$$
We have the following equivalent characterizations of $\cE(f)$ and $\mathrm{Var}_{\pi}(f)$~\citep[Lemma 13.6]{levin2017markov}, 
\begin{align}
    &\cE(f)  = \frac{1}{2} \sum_{x, x' \in \cX} |f(x) - f(x')|^2 \bP(x,x')\pi(x), \label{eq:dirichlet-alt}
    \\ &\mathrm{Var}_\pi(f)  = \frac{1}{2} \sum_{x, x' \in \cX} |f(x) - f(x')|^2 \pi(x)\pi(x'). \label{eq:var-alt}
\end{align}

Given a sequence $\gamma =(x_0 = x, x_1, \dots, x_{k-1}, x_k = x')$, we write $e \in \gamma$ with $e = (x_{j - 1}, x_j)$ for $j = 1, \dots, k$, and we say $e$ is an edge of $\gamma$. 
We say $\gamma$ is a path if $\gamma$ does not contain duplicate edges. 
Let $\barE$ be the edge set for $\bP$, which is defined by  
\begin{equation}\label{eq:def-E}
    \barE =\{  (x, x') \in \cX^2 \colon  x \neq x', \bP(x, x') >0 \}. 
\end{equation} 
Given $E \subset \barE$, let 
\begin{equation}\label{eq:def-E-paths}
    \Gamma_E(x, x') = \{ \text{paths from $x$ to $x'$ with all edges in $E$}\}, 
\end{equation} 
and let $\Gamma_E = \cup_{x \neq x'} \Gamma_E(x, x')$.  
Define $Q \colon \barE \rightarrow (0, \infty)$ by 
\begin{equation}\label{eq:def-Q}
      Q(e) =  \pi(x) \bP(x, x'), \text{ if } e = (x, x'). 
\end{equation} 
We now present Proposition~\ref{prop:gap_bound}, which provides a bound on the spectral gap using paths and generalizes the multicommodity flow method of~\citet{sinclair1992improved}.   

\begin{proposition}[Theorem 3.2.9 in~\cite{gine1996lectures}]\label{prop:gap_bound} 
Let  $E\subset \barE$, $w \colon E \rightarrow (0, \infty)$, and for $\gamma \in \Gamma_E$, define its $w$-length  by 
\begin{equation}\label{eq:def-length}
    |\gamma|_w = \sum_{e \in \gamma} w(e). 
\end{equation} 
Let a function $\phi \colon \Gamma_E \rightarrow [0, \infty)$ satisfy that 
\begin{equation}\label{eq:sum-flow-property}
    \sum_{\gamma \in \Gamma_E(x, x')} \phi(\gamma) = \pi(x) \pi(x'), \quad \text{ for any } x \neq x'. 
\end{equation}
Then $\mathrm{Gap}(\bP) \geq 1/A(E, w, \phi)$, where
\begin{equation}\label{eq:def-congestion-A}
\begin{aligned}
      A(E, w, \phi)= \max_{e\in E} \left\{ \frac{ \sum_{ \gamma \in \Gamma_E\colon e \in \gamma} |\gamma|_w \phi(\gamma)  }{Q(e)w(e)}  
    \right\}. 
\end{aligned}
\end{equation}  
\end{proposition}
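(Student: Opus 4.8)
The plan is to argue directly from the variational characterization of the spectral gap in~\eqref{eq:def-gap-variational}, so that it suffices to establish the comparison inequality $\mathrm{Var}_\pi(f) \le A(E,w,\phi)\,\cE(f)$ for every $f\colon\cX\to\bbR$ with $\mathrm{Var}_\pi(f)\neq 0$. I would begin from the alternative expression~\eqref{eq:var-alt} for the variance and use the flow normalization~\eqref{eq:sum-flow-property} to replace the factor $\pi(x)\pi(x')$ by $\sum_{\gamma\in\Gamma_E(x,x')}\phi(\gamma)$, converting the double sum over states into a sum over paths:
\begin{equation}
\mathrm{Var}_\pi(f) = \tfrac{1}{2}\sum_{x\neq x'}\,\sum_{\gamma\in\Gamma_E(x,x')} |f(x)-f(x')|^2\,\phi(\gamma).
\end{equation}

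For each path $\gamma=(x_0=x,\dots,x_k=x')$ the increment telescopes along its directed edges, $f(x)-f(x')=\sum_{e=(u,v)\in\gamma}\bigl(f(u)-f(v)\bigr)$, and I would apply a weighted Cauchy--Schwarz inequality with the weights $w(e)$, writing each summand as $\sqrt{w(e)}\cdot\bigl(f(u)-f(v)\bigr)/\sqrt{w(e)}$, to obtain
\begin{equation}
|f(x)-f(x')|^2 \le |\gamma|_w \sum_{e=(u,v)\in\gamma}\frac{|f(u)-f(v)|^2}{w(e)}.
\end{equation}
Substituting this into the variance and interchanging the order of summation so that the outer sum runs over edges $e\in E$ and the inner over paths $\gamma\ni e$, the quantity $\sum_{\gamma\colon e\in\gamma}|\gamma|_w\phi(\gamma)$ emerges, which is precisely what the congestion controls via $\sum_{\gamma\colon e\in\gamma}|\gamma|_w\phi(\gamma)\le A(E,w,\phi)\,Q(e)w(e)$ from the definition~\eqref{eq:def-congestion-A}. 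This yields
\begin{equation}
\mathrm{Var}_\pi(f) \le \tfrac{1}{2}A(E,w,\phi)\sum_{e=(u,v)\in E}|f(u)-f(v)|^2\,Q(e).
\end{equation}

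Since $E\subset\barE$ and every summand is nonnegative, the right-hand side is at most $A(E,w,\phi)\,\cE(f)$ by reading off $\cE(f)=\tfrac{1}{2}\sum_{e=(u,v)\in\barE}|f(u)-f(v)|^2 Q(e)$, which is just~\eqref{eq:dirichlet-alt} rewritten as a sum over $\barE$; dividing and invoking~\eqref{eq:def-gap-variational} then gives $\mathrm{Gap}(\bP)\ge 1/A(E,w,\phi)$. The telescoping and the summation interchange are routine; the one step demanding care is the Cauchy--Schwarz application, where the weights must be split so that the surviving factor is exactly the path length $|\gamma|_w$ and the residual edge factor is $|f(u)-f(v)|^2/w(e)$, matching the denominator $Q(e)w(e)$ in~\eqref{eq:def-congestion-A}. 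The main obstacle, more bookkeeping than conceptual, is keeping the directionality and the factors of $1/2$ consistent across the variance, the Dirichlet form, and the congestion: the telescoping runs over \emph{directed} edges, so $Q(e)=\pi(u)\bP(u,v)$ and both~\eqref{eq:var-alt} and~\eqref{eq:dirichlet-alt} must be read as sums over ordered pairs, with the flow $\phi$ supported on paths in $\Gamma_E(x,x')$ for ordered $(x,x')$.
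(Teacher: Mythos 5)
Your proof is correct and follows essentially the same route as the paper's: telescoping $f$ along each path, applying weighted Cauchy--Schwarz with weights $w(e)$, interchanging the sum over paths and edges, invoking the congestion bound from~\eqref{eq:def-congestion-A}, and then identifying the two sides with $\mathrm{Var}_\pi(f)$ and $\cE(f)$ via~\eqref{eq:var-alt} and~\eqref{eq:dirichlet-alt} before concluding with the variational characterization~\eqref{eq:def-gap-variational}. The only structural difference is that the paper isolates the comparison inequality as a standalone lemma (Lemma~\ref{lemma:gap_bound}) so it can be reused for the restricted spectral gap in Proposition~\ref{prop:restricted_gap_bound}, whereas you carry out the same argument inline.
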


In Proposition~\ref{prop:gap_bound},  $w$ is called a weight function, and $\phi$ is called a flow function.   
Proposition~\ref{prop:gap_bound} directly follows from~\eqref{eq:dirichlet-alt}, ~\eqref{eq:var-alt} and the following lemma, which will be used later for proving a similar result for restricted spectral gap in Section~\ref{sec:appx-spec-gap} (see Proposition~\ref{prop:restricted_gap_bound}).   

\begin{lemma}\label{lemma:gap_bound} 
Let  $E\subset \barE$, $w \colon E \rightarrow (0, \infty)$ and $\phi \colon \Gamma_E \rightarrow [0, \infty)$. 
For any $f \colon \cX \rightarrow \bbR$, we have 
\begin{align*}
      \sum_{x,x' \in \cX} |f(x') - f(x)|^2 \Phi(x, x')   
    \leq \;  A(E, w, \phi) \sum_{(x, x') \in E}  Q( (x, x') )   |f(x') - f(x)|^2, 
\end{align*}  
where $A(E, w, \phi)$ is given by~\eqref{eq:def-congestion-A} and 
\begin{align*}
    \Phi(x, x') = \begin{cases}
         \sum_{\gamma \in \Gamma_E(x, x')} \phi(\gamma), & \text{ if } \,  \Gamma_E(x, x') \neq \emptyset,  \\
        0, & \text{ otherwise}.
    \end{cases} 
\end{align*} 
\end{lemma}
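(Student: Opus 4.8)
The plan is to prove the inequality by the standard multicommodity flow technique: first establish a per-path estimate, then aggregate over paths and interchange the order of summation. I would fix an arbitrary $f \colon \cX \rightarrow \bbR$ and, for each ordered pair $x \neq x'$ with $\Gamma_E(x,x') \neq \emptyset$ and each path $\gamma = (x_0 = x, \dots, x_k = x') \in \Gamma_E(x,x')$, telescope the increment of $f$ along the path, writing each edge as $e = (u_e, v_e)$ so that $f(x') - f(x) = \sum_{e \in \gamma}\big(f(v_e) - f(u_e)\big)$. The key analytic step is a weighted Cauchy--Schwarz inequality, inserting $w(e)^{1/2}w(e)^{-1/2}$ into each summand:
\begin{equation}
|f(x') - f(x)|^2 \leq \Big(\sum_{e \in \gamma} w(e)\Big)\Big(\sum_{e \in \gamma} w(e)^{-1}|f(v_e) - f(u_e)|^2\Big) = |\gamma|_w \sum_{e \in \gamma} w(e)^{-1}|f(v_e) - f(u_e)|^2,
\end{equation}
where the last equality is just the definition~\eqref{eq:def-length} of the $w$-length.

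Next I would multiply this per-path bound by $\phi(\gamma) \geq 0$ and sum over $\gamma \in \Gamma_E(x,x')$; since the left-hand side is independent of $\gamma$, the prefactor collapses to $\Phi(x,x') = \sum_{\gamma \in \Gamma_E(x,x')} \phi(\gamma)$. Summing over all pairs $x, x'$ (the pairs with $\Gamma_E(x,x') = \emptyset$ contribute nothing, matching $\Phi = 0$ there) and using that each path in $\Gamma_E$ joins a unique ordered pair, the double sum over pairs-and-paths becomes a single sum over $\gamma \in \Gamma_E$:
\begin{equation}
\sum_{x,x' \in \cX}|f(x') - f(x)|^2 \, \Phi(x,x') \leq \sum_{\gamma \in \Gamma_E} \phi(\gamma)\,|\gamma|_w \sum_{e \in \gamma} w(e)^{-1}|f(v_e) - f(u_e)|^2.
\end{equation}

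Finally I would interchange the order of summation on the right, grouping by edges: for each $e \in E$, the coefficient of $w(e)^{-1}|f(v_e) - f(u_e)|^2$ is $\sum_{\gamma \in \Gamma_E \colon e \in \gamma}\phi(\gamma)\,|\gamma|_w$, which by the definition~\eqref{eq:def-congestion-A} of $A(E,w,\phi)$ is bounded by $A(E,w,\phi)\,Q(e)\,w(e)$. The factors $w(e)^{-1}$ and $w(e)$ then cancel, leaving precisely $A(E,w,\phi)\sum_{(x,x') \in E} Q((x,x'))|f(x') - f(x)|^2$, which is the claim. The main obstacle is not conceptual but careful bookkeeping in the Fubini-type interchange between the sum over pairs/paths and the sum over edges: I must ensure each edge occurrence is counted once (which uses that a \emph{path} carries no repeated edges, per the definition in the excerpt) and track the orientation of edges so that the signed increment $f(v_e) - f(u_e)$ matches the symmetric quantity $|f(x') - f(x)|^2$ appearing on both sides.
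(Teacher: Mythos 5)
Your proposal is correct and follows essentially the same argument as the paper's proof: telescoping $f$ along each path, applying weighted Cauchy--Schwarz with the weights $w(e)$, summing over paths to collapse the flow to $\Phi(x,x')$, and then interchanging the sum to group by edges so that the congestion quantity $A(E,w,\phi)$ bounds each edge coefficient $\sum_{\gamma \ni e} |\gamma|_w \phi(\gamma)$ by $A(E,w,\phi)\,Q(e)\,w(e)$. The extra bookkeeping you flag (no repeated edges within a path, edge orientation) is handled implicitly in the paper's proof in exactly the same way.
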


\begin{proof}
Given $e = (z, z')$, let $\d f (e) = f(z') - f(z)$ denote the increment of $f$ along $e$. 
For any $\gamma \in \Gamma_E(x, x')$,   we can write
\begin{align*}
    f(x') -f(x)  = \sum_{e   \in \gamma} \d f(e) 
       =  \sum_{e   \in \gamma} \sqrt{w(e)} \frac{ \d f(e)}{ \sqrt{w(e)} }. 
\end{align*} 
By the Cauchy-Schwarz inequality,
\begin{align*}
       |f(x') -f(x)|^2   
    \leq |\gamma|_w \sum_{e  \in \gamma} \frac{ (\d f(e) )^2}{w(e)}. 
\end{align*}
Hence, 
\begin{align*}
    \sum_{\gamma \in \Gamma_E(x,x')} \phi(\gamma) |\gamma|_w \sum_{e \in \gamma} \frac{(\d f(e) )^2 }{w(e)}  \geq |f(x') -f(x)|^2    \Phi(x, x'). 
\end{align*}
We sum over $x,x' \in \cX$, which yields
\begin{align*} 
          \sum_{x,x' \in \cX} |f(x') - f(x)|^2 \Phi(x, x') 
        \leq \;&   \sum_{\gamma \in \Gamma_E} \phi(\gamma) |\gamma|_w \sum_{e \in \gamma} \frac{(\d f(e) )^2 }{w(e)} \\ 
            = \;& \sum_{e   \in E}  \frac{(\d f(e) )^2}{ w(e)} \sum_{\substack{\gamma\,:\,\gamma \in \Gamma_E \\ \textit{ s.t. } e \in \gamma}}  
       |\gamma|_w \phi(\gamma)       \\
        = \;& \sum_{e   \in E}\left\{ \frac{1}{Q(e)w(e)} \sum_{\substack{          \gamma\,:\,\gamma \in \Gamma_E \\ \textit{ s.t. } e \in \gamma}} 
       |\gamma|_w \phi(\gamma) \right\}     Q(e)(\d f(e) )^2 \\
  \leq\;& A(E, w, \phi)  \sum_{e  \in E}  Q(e)(\d f(e) )^2, 
    \end{align*} 
which completes the proof. 
\end{proof}

\begin{remark} \label{rmk:general-Q-flow-bound}
Observe that in Lemma~\ref{lemma:gap_bound}, one can  replace $Q$ with any function that maps from $E$ to $(0, \infty)$.  
The conclusion still holds and the proof is identical.  
\end{remark}

\begin{proof}[Proof of Proposition~\ref{prop:gap_bound}]
By~\eqref{eq:var-alt}, 
\begin{align*}
    \sum_{x,x' \in \cX} |f(x') - f(x)|^2 \Phi(x, x')   = 2 \mathrm{Var}_\pi(f), 
\end{align*}
and by~\eqref{eq:dirichlet-alt}, 
\begin{align*}
    \sum_{(x, x') \in E}  Q( (x, x') )   |f(x') - f(x)|^2  \leq 2 \cE(f). 
\end{align*}
The result then follows from~\eqref{eq:def-gap-variational}. 
\end{proof}

\subsection{Review of the mixing time bound in~\texorpdfstring{\cite{zhou2021complexity}}{Zhou et al. 2021}}\label{subsec:review}

In this section, we review the proof techniques used in~\cite{zhou2021complexity} to derive the mixing time bound under the unimodal condition. Similar arguments will be used later in Section~\ref{sec:appx-flow-restricted-gap-unimodal} for proving some results of this work.   
Let the triple $(\cX, \cN, \pi)$ satisfy $\R (\cX, \cN, \pi) > \N(\cX, \cN),$
where $\R$ and $\N$ are defined in the main text, 
and recall that we always assume 
\begin{align*}
   \cN(x)  = \{ x' \in \cX \colon \bP(x, x') > 0\}. 
\end{align*}
We begin by constructing the functions $\phi$ and $w$ for utilizing Proposition~\ref{prop:gap_bound}.   

\subsubsection{Constructing a flow \texorpdfstring{$\phi$}{}  }\label{subsubsec:flow} 
Assume that $\pi$ is unimodal on $(\cX, \cN)$.  
We first present a general method for constructing flows by identifying likely paths leading to $x^* = \arg \max_{x \in \cX} \pi(x)$, and derive a simple upper bound  on the maximum load of any edge. 
For this result, we only need to require $\R (\cX, \cN, \pi) > 1$. 

\begin{lemma}[Lemma~B2 of~\cite{zhou2021complexity}] \label{lm:def_flow}    
Suppose   $(\cX, \cN, \pi)$ satisfies  $\R = \R (\cX, \cN, \pi) > 1$. 
Let $S \in (1, \R]$.   
Define 
\begin{equation}\label{eq:condition1}
      \cN_S(x) = \left\{ x' \in \cN(x): \pi(x') / \pi(x) \geq S \right\}, 
\end{equation}
and let 
\begin{equation}\label{eq:def-ES}
    E_S = \{ (x, x') \colon x \in \cN_S(x') \text{ or } x' \in \cN_S(x) \}
\end{equation}
be the resulting edge set. 
Write $\Gamma_S = \Gamma_{E_S}$. 
There exists a flow $\phi\colon  \Gamma_S \rightarrow [0, \infty)$ such that for any $x \neq x'$, 
\begin{equation}\label{eq:flow-sum}
     \sum_{\gamma \in \Gamma_S(x, x')} \phi(\gamma) = \pi(x) \pi(x'),  
\end{equation}
and for any $x \neq x'$ and $(z,z') \in E_S$ with $z' \in \cN_S(z)$,  
    \begin{align}\label{eq:prop_flow}
    \sum_{\substack{\gamma\colon \gamma  \in \Gamma_S(x, x')  \\ \text{ s.t. }  (z,z') \in \gamma} } \phi(\gamma)  \leq \frac{\bP(z, \,z')}{\bP(z, \, \cN_S(z))} \pi(x) \pi(x'). 
    \end{align}   
\end{lemma}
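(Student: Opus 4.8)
The plan is to build $\phi$ by routing the required mass $\pi(x)\pi(x')$ along \emph{uphill} paths that first climb from $x$ to the global mode $x^*$ and then descend to $x'$. First I would record that the hypothesis $S \le \R$ forces $\cN_S(x) \neq \emptyset$ for every $x \neq x^*$: by the definition of $\R$ in~\eqref{eq:def-R}, each such $x$ has a neighbor $y$ with $\pi(y)/\pi(x) \ge \R \ge S$. This lets me introduce a forward kernel on $\cX \setminus \{x^*\}$,
$$F(x, x') = \frac{\bP(x, x')}{\bP(x, \cN_S(x))}\, \ind_{\cN_S(x)}(x'), \qquad \bP(x, \cN_S(x)) = \sum_{\tilde x \in \cN_S(x)} \bP(x, \tilde x),$$
so that $F(x, \cdot)$ is a probability distribution supported on $\cN_S(x)$.

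Next I would argue that the $F$-walk started at any $x$ reaches $x^*$ in finitely many steps and traces a simple path. Since each $F$-step moves to a state with strictly larger $\pi$ (because $\pi(x')/\pi(x) \ge S > 1$), the walk can neither revisit a vertex nor cycle, and as $\cX$ is finite it must reach the unique state $x^*$ at which $F$ is no longer defined. Writing $\mu_x$ for the induced distribution over these uphill paths $\gamma_1$ from $x$ to $x^*$, I would then define, for each ordered pair $x \neq x'$, the flow on the concatenated path $\gamma = \gamma_1 \oplus \overline{\gamma_2}$ (climb $\gamma_1$ from $x$ to $x^*$, then descend the reverse of a climb $\gamma_2$ from $x'$ to $x^*$) by $\phi(\gamma) = \pi(x)\pi(x')\, \mu_x(\gamma_1)\,\mu_{x'}(\gamma_2)$, with $\gamma_2$ empty when $x' = x^*$. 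Identity~\eqref{eq:flow-sum} is then immediate since $\mu_x$ and $\mu_{x'}$ are probability measures. A point worth verifying is that each such $\gamma$ is a genuine path in the sense of Section~\ref{sec:appx-path}: the directed edges of $\gamma_1$ are all uphill while those of $\overline{\gamma_2}$ are all downhill, so the two halves share no directed edge, and each half is itself simple; hence $\gamma$ has no duplicate directed edge even if the two halves meet at intermediate vertices.

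The crux, and the step I would emphasize, is the congestion bound~\eqref{eq:prop_flow}. Fix a directed edge $(z, z')$ with $z' \in \cN_S(z)$, which is therefore uphill. Because every edge of the descending half $\overline{\gamma_2}$ is downhill, the edge $(z,z')$ can be used only by the climbing half $\gamma_1$; consequently the total $(x,x')$-flow through $(z,z')$ equals $\pi(x)\pi(x')$ times the $\mu_x$-probability that the uphill walk from $x$ traverses $(z,z')$. By the Markov property of the $F$-walk, and since a simple uphill path visits $z$ at most once, that probability factorizes as $q_x(z)\, F(z, z')$, where $q_x(z) \le 1$ is the probability that the $F$-walk from $x$ ever visits $z$. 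Hence the flow through $(z,z')$ is at most $\pi(x)\pi(x')\,F(z,z') = \pi(x)\pi(x')\,\bP(z,z')/\bP(z, \cN_S(z))$, which is exactly~\eqref{eq:prop_flow}. I expect the only delicate bookkeeping to be the edge case $x' = x^*$ (empty $\gamma_2$) and the no-duplicate-edge check above; the clean separation of uphill and downhill directed edges is precisely what secures both the path validity and the one-line visit-probability bound.
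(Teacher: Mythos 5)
Your proposal is correct and follows essentially the same route as the paper: your forward kernel $F$ is exactly the paper's auxiliary matrix $\bP_S$ in~\eqref{eq:aux}, and your flow $\phi(\gamma) = \pi(x)\pi(x')\,\mu_x(\gamma_1)\mu_{x'}(\gamma_2)$ on climb-then-descend paths through $x^*$ coincides with the paper's $f^S(\gamma)\pi(x)\pi(x')$ construction. The only difference is that the paper delegates the verification of~\eqref{eq:flow-sum} and~\eqref{eq:prop_flow} to a citation of Lemma~B2 in \cite{zhou2021complexity}, whereas you prove them directly via the Markov property and the uphill/downhill edge separation --- a sound and self-contained filling-in of the same argument.
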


\begin{proof} 
Define an auxiliary transition matrix $\bP_S$ on $\cX \times \cX$ by
\begin{equation}\label{eq:aux}
\begin{aligned}
\bP_S(x,x') 
    =\;& 
    \begin{cases}
        \frac{\bP(x, \,x')}{\bP(x, \, \cN_S(x))} \ind_{\cN_S(x)}(x'), &  \text{if  } \cN_S(x) \neq \emptyset,  \vspace{0.2cm} \\
        0, & \text{if  }\cN_S(x) = \emptyset, \vspace{0.2cm}  \\
        1 - \sum_{\tilde{x} \neq x} \bP_S(x,\tilde{x}), & \text{if  } x = x'.
    \end{cases}
\end{aligned}
\end{equation} 
Given $\gamma=\left(x_0, x_1, \ldots, x_k\right)$, let $\stackrel{\leftarrow}{\gamma} =(x_k, x_{k-1}, \dots, x_0)$ be the reversed path of $\gamma$.
We first construct a normalized flow function, denoted by $f^S: \Gamma_S \rightarrow [0,1]$, as follows.   
\begin{enumerate}[label=(\arabic*), ref=(\arabic*)]
 \item If $x^*$ does not occur in $\gamma$ or $x^*$ occurs at least twice in $\gamma$, let $f^S(\gamma)=0$. 
 \item If $x_k=x^*$, let $f^S(\gamma)=\prod_{i=1}^k \bP_S\left(x_{i-1}, x_i\right)$.
 \item If $x_0=x^*$, let $f^S(\gamma)=f^S(\stackrel{\leftarrow}{\gamma})$.
 \item Define $f^S(\gamma)=f^S\left(\gamma_1\right) f^S\left(\gamma_2\right)$ if $x_j=x^*$ for some $1 \leq j \leq k-1$, where $\gamma_1=\left(x_0, \ldots, x_{j-1}, x^*\right)$ and $\gamma_2=\left(x^*, x_{j+1}, \ldots, x_k\right)$.
\end{enumerate}

Then, the following properties hold for $f^S$ by Lemma~B2 in~\cite{zhou2021complexity}.
For any $x,x' \in \cX$ with $x \neq x'$, we have
\begin{align}\label{eq:def_flow}
    \sum_{\gamma \in \Gamma_S(x, x')} f^S(\gamma) = 1,
\end{align}
and for $e = (z,z')$ with $z' \in \cN_S(z)$, 
\begin{align}
\sum_{\substack{\gamma\colon \gamma \in \Gamma_S(x, x')    \\ \textit{ s.t. } e = (z,z') \in \gamma} } f^S(\gamma)  \leq \bP_S(z, z'). 
\end{align} 
The flow $\phi$ is then obtained by 
\begin{equation}\label{eq:normalize-flow}
    \phi(\gamma) = f^S(\gamma)\pi(x)\pi(x'),  \text{ if } \gamma \in \Gamma_S(x,x'), 
\end{equation} 
which satisfies~\eqref{eq:sum-flow-property} and~\eqref{eq:prop_flow}.
\end{proof}

\begin{remark}\label{rmk:flow.generalize}
This result can be generalized by replacing $\cN_S$ with any $\tilde{\cN}$ such that 
\begin{enumerate}[label=(\roman*)]
    \item $\tilde{\cN}(x) \neq \emptyset$ for any $x \neq x^*$, 
    \item $\pi (y) > \pi(x)$ whenever $y \in \tilde{\cN}(x)$. 
\end{enumerate}
One can then define a transition matrix $\tilde{\bP}$ analogously to $\bP_S$, which is a Markov chain with absorbing state $x^*$. 
The same argument then completes the proof.  
\end{remark}
 
\subsubsection{Defining a weight function 
\texorpdfstring{$w$}{w} }\label{subsubsec:weight}
Consider $E_S$ defined in Lemma~\ref{lm:def_flow}. 
For $e = (z,z') \in E_S$ such that $z' \in \cN_S(z)$, we define 
\begin{equation}\label{eq:def-weight-q}
    w(e) = w(\stackrel{\leftarrow}{e}) = \pi(z)^{-q},
\end{equation} 
where $q \in (0,1)$ is a constant and $\stackrel{\leftarrow}{e} = (z', z)$.  
Let $\Gamma_S$ be as defined in Lemma~\ref{lm:def_flow} with $S > 1$. 
For any $\gamma \in \Gamma_S(x,x')$ with $\phi(\gamma) > 0$, we have
\begin{align}\label{eq:length}
    |\gamma|_w \leq \frac{\pi(x)^{-q} + \pi(x')^{-q}}{1 - S^{-q}}.
\end{align}
To obtain~\eqref{eq:length}, we first decompose $\gamma$ to two sub-paths $\gamma_1, \gamma_2$ such that $\gamma_1 \in \Gamma(x,x^*)$ and $\gamma_2 \in \Gamma(x^*, x')$, which exists by our construction of $\phi$. 
Letting $\gamma_1 = (x = x_0, x_1, \dots, x_k = x^*)$, we obtain by the geometric series calculation that 
\begin{align*}
    |\gamma_1|_w = \sum_{j=0}^{k-1} \pi(x_j)^{-q} \leq  \sum_{j=0}^{k-1} \pi(x)^{-q} S^{-q} \leq \frac{\pi(x)^{-q}}{1 - S^{-q}}.
\end{align*}
Calculating $|\gamma_2|_{w}$ in the same manner and using $|\gamma|_{w} = |\gamma_1|_{w}+|\gamma_2|_{w}$, we get~\eqref{eq:length}. 

\subsubsection{Spectral gap bound }\label{subsub:gap} 
Using $\phi$ and $w$ constructed above, we can now derive a spectral gap bound. 
\begin{lemma}\label{lm:bounding_A}
    Suppose the triple $(\cX, \cN, \pi)$ satisfies  
    $$ \R = \R (\cX, \cN, \pi) > \N(\cX, \cN) = \N , $$ 
    and let $x^* = \argmax_{x \in \cX} \pi(x)$. 
    Let $E_S, \phi$ be as given in Lemma~\ref{lm:def_flow} with $S \in (\N, \R]$ 
    and $w$ be given by~\eqref{eq:def-weight-q}. 
    Then,  
    \begin{equation}\label{eq:final_bound_standard}
         A(E_S, w, \phi) \leq \frac{ c(S/M)}{2 }  \max_{z \in \cX \setminus \{x^*\}} \frac{1}{\bP(z,\cN_S(z))}, 
    \end{equation}
    where $A$ is given by~\eqref{eq:def-congestion-A} and 
    $c(u) = 4(1 - u^{-1/2})^{-3}.$
\end{lemma}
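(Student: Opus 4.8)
The plan is to apply the congestion bound of Proposition~\ref{prop:gap_bound}, so that it suffices to bound, for each directed edge $e=(z,z')\in E_S$ with $z'\in\cN_S(z)$, the ratio $\mathrm{Load}(e)/(Q(e)w(e))$, where $\mathrm{Load}(e)=\sum_{\gamma\in\Gamma_S:\,e\in\gamma}|\gamma|_w\,\phi(\gamma)$ and, by~\eqref{eq:def-Q} and~\eqref{eq:def-weight-q}, $Q(e)w(e)=\pi(z)^{1-q}\bP(z,z')$. First I would combine the length estimate~\eqref{eq:length}, namely $|\gamma|_w\le(\pi(x)^{-q}+\pi(x')^{-q})/(1-S^{-q})$ for every flow-carrying $\gamma\in\Gamma_S(x,x')$, with the edge-load estimate~\eqref{eq:prop_flow}. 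The latter carries a factor $\bP(z,z')/\bP(z,\cN_S(z))$ that, upon dividing by $Q(e)w(e)$, cancels $\bP(z,z')$ and produces the target denominator $\bP(z,\cN_S(z))$; what remains is to control a weighted sum of $\pi(x)\pi(x')$ over the source–target pairs whose flow uses $e$.

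The point that makes this sum manageable is a structural observation about $\phi$. Because $z'\in\cN_S(z)$ forces $\pi(z')\ge S\,\pi(z)$, the directed edge $e$ is $\pi$-increasing, and the construction in Lemma~\ref{lm:def_flow} routes all flow through $x^*$ along $\cN_S$-increasing edges; hence $e$ can appear only on the ascending segment (from the source to $x^*$) of any flow-carrying path. Consequently $\phi(\gamma)>0$ and $e\in\gamma$ force the source to lie in the descendant set $D(z)=\{a\in\cX:\text{some }\cN_S\text{-increasing path from }a\text{ reaches }z\}$, so the source variable may be restricted to $D(z)$ while the other endpoint ranges over all of $\cX$. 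This restriction is essential: summing unrestrictedly would leave a factor $\sum_x\pi(x)^{1-q}/\pi(z)^{1-q}$ that blows up for small $\pi(z)$, whereas confining the source to $D(z)$ reduces the task to estimating $\sum_{a\in D(z)}\pi(a)^{1-q}$, $\pi(D(z))$, and $\Sigma:=\sum_{b\in\cX}\pi(b)^{1-q}$, each relative to $\pi(z)^{1-q}$.

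These three quantities succumb to a level decomposition: a state reaching $z$ in $k$ steps of $\cN_S$ has $\pi(a)\le\pi(z)S^{-k}$, and since each backward step offers at most $|\cN(\cdot)|\le M$ choices there are at most $M^{k}$ such states (the same count with root $x^*$ controls $\Sigma$). Summing geometric series gives $\sum_{a\in D(z)}\pi(a)^{1-q}\le\pi(z)^{1-q}(1-MS^{-(1-q)})^{-1}$, $\pi(D(z))\le\pi(z)(1-M/S)^{-1}$, and $\Sigma\le(1-MS^{-(1-q)})^{-1}$. Convergence of the first and third needs $MS^{-(1-q)}<1$, and this is where the exponent $q$ is calibrated: choosing $q$ with $S^{-q}=\sqrt{M/S}$, i.e. $q=\log(S/M)/(2\log S)\in(0,1)$, makes both $S^{-q}$ and $MS^{-(1-q)}$ equal to $u^{-1/2}$ where $u=S/M>1$, and also $M/S=u^{-1}\le u^{-1/2}$. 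Collecting the factor $(1-u^{-1/2})^{-1}$ from the length bound with the bracket bound $2(1-u^{-1/2})^{-2}$ (the $2$ coming from the two symmetric endpoint terms, after using $\pi(z)^{q}\le1$ and $1-M/S\ge1-u^{-1/2}$) yields $2(1-u^{-1/2})^{-3}=c(S/M)/2$; maximizing over $e$ then replaces $\bP(z,\cN_S(z))^{-1}$ by $\max_{z\ne x^*}\bP(z,\cN_S(z))^{-1}$, which is~\eqref{eq:final_bound_standard}.

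The main obstacle is precisely the interplay in the last two steps: one must rigorously verify that $e$ lies only on the ascending half of each flow path—so that the source sum genuinely restricts to $D(z)$ and the $\pi(z)^{1-q}$ in the denominator is absorbed rather than amplified when $\pi(z)$ is small—and then calibrate $q$ so that all three geometric series converge simultaneously and collapse to the single ratio $u^{-1/2}$ under the sole hypothesis $S>M$. The naive choice $q=\tfrac12$ would instead force the requirement $S>M^2$, so balancing the two ratios via $S^{-q}=\sqrt{M/S}$ is what delivers the stated constant $c(S/M)=4(1-u^{-1/2})^{-3}$.
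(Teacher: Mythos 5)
Your proposal is correct and follows essentially the same route as the paper's proof: the same flow and weight, the length bound~\eqref{eq:length} combined with the edge-load bound~\eqref{eq:prop_flow} so that $\bP(z,z')$ cancels against $Q(e)w(e)$, restriction of the source to the ancestor set of $z$ (the paper's $\Lambda(z)$, your $D(z)$), the same geometric-series counts with ratio $\N S^{-(1-q)}$, and the same calibration $q=\log(S/\N)/(2\log S)$ making $S^{-q}=\N S^{-(1-q)}=(S/\N)^{-1/2}$. The only cosmetic difference is that you bound the two endpoint terms $\pi(x)^{1-q}\pi(x')$ and $\pi(x)\pi(x')^{1-q}$ separately (via $\sum_{x'}\pi(x')\le 1$, $\pi(D(z))$ and $\Sigma$), whereas the paper symmetrizes them to $2\pi(x)^{1-q}\pi(x')^{1-q}$ using $\pi\le 1$ before summing; both yield the identical constant $c(S/\N)/2$.
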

\begin{proof}
We first fix $e = (z,z') \in E_S$ with $z' \in \cN_S(z)$. The case with $z \in \cN_S (z')$ can be analyzed in the same manner by symmetry.  
Observe that 
\begin{align*}
     \frac{1}{Q(e)w(e)} \sum_{\substack{\gamma \colon \gamma \in \Gamma_S \\ \textit{s.t. } e \in \gamma} } \phi(\gamma) |\gamma|_w  
    =\; \frac{1}{Q(e)w(e)} 
    \sum_{x \neq x'} \sum_{\substack{ \gamma \colon \gamma \in \Gamma_S(x, x') \\ \textit{s.t. } e \in \gamma} }\phi(\gamma) |\gamma|_w. 
\end{align*}
For fixed $x, x' \in \cX$ with $x\neq x'$, 
\begin{align*}
     \frac{1}{Q(e)w(e)} \sum_{ \substack{ \gamma \colon \gamma \in \Gamma_S(x, x') \\ \textit{s.t. } e \in \gamma} } \phi(\gamma) |\gamma|_w 
    \leq \; & \frac{\pi(x)^{-q} + \pi(x')^{-q}}{\pi(z)^{1 -q}(1 - S^{-q})\bP(z,z')} \sum_{\substack{\gamma\colon \gamma \in \Gamma_S(x, x')  \\ \textit{ s.t. } e   \in \gamma} }\phi(\gamma) \\
    \leq \; &  \frac{\pi(x)^{1-q}\pi(x') +\pi(x) \pi(x')^{1-q}}{\pi(z)^{1 -q}(1 - S^{-q})}   \frac{1}{\bP(z,\cN_S(z))},
\end{align*}
where the first inequality is from~\eqref{eq:length} and the last inequality is due to~\eqref{eq:prop_flow} with the definition in~\eqref{eq:aux}. 

Let 
$\Lambda(z) = \{x \colon \pi(x) < \pi(z),\; \exists \gamma \in \Gamma_S(z, x) \text{ s.t.  } 
\phi(\gamma) > 0  \} \cup \{z\}$
denote the set of ancestors of $z$ in the graph $(X, E_S)$ (each edge is directed towards the state with larger posterior probability).  
Note that $$\sum_{\gamma \in \Gamma_S(x, x')\colon e \in \gamma } \phi(\gamma) > 0$$ 
only if $x$ or $x'$ belongs to $\Lambda(z)$.  
Hence, summing over distinct $(x, x')$ and using $\pi(x) \leq 1$ for any $x$, we get  
\begin{align*}
      \frac{1}{Q(e)w(e)} \sum_{ \substack{ \gamma \colon \gamma \in \Gamma_S \\ \textit{s.t. } e \in \gamma}} \phi(\gamma) |\gamma|_w  
    \leq\;&  \sum_{x \in \Lambda(z)} \sum_{x' \in \cX}  \frac{ 2\pi(x)^{1-q}\pi(x')^{1-q}  }{\pi(z)^{1 -q}(1 - S^{-q})}   \frac{1}{\bP(z,\cN_S(z))}. 
\end{align*}
Using $S > \N$, a routine geometric series calculation gives 
\begin{align*} 
 \sum_{x \in \Lambda(z)} \sum_{x' \in \cX}\pi(x)^{1-q}\pi(x')^{1-q} 
\leq \frac{ \pi(z)^{1-q} }{ (1 - \N S^{-(1-q)})^2}.
\end{align*}
(See also Lemma 3 of \citet{zhou2022rapid}.)
Finally, set $q = \log(S/\N)/ (2\log S)$ so that $S^{-q} = \N S^{-(1 - q)}$; note that $q \in (0, 1)$ since $S > \N \geq 1$. 
We then obtain the asserted result by taking maximum over $e\in E_S$. 
\end{proof}

\section{Path methods for restricted spectral gaps}\label{sec:appx-spec-gap}
\subsection{Restricted spectral gap}\label{sec:appx-gap-atchade}
If there are isolated local modes, the spectral gap of the Markov chain can be extremely close to 0. This remains true even when all local modes (other than the global one) have negligible probability mass are highly unlikely to be visited by the chain if it is properly initialized. 
As a result, calculating the standard spectral gap bound yields an overly conservative estimate of the mixing time.   
To address this issue, we  work with the restricted spectral gap, which was considered in~\cite{atchade2021approximate}.  

We still use the notation introduced in Appendix~\ref{sec:appx-prelim}. 
Given a function $f \colon \cX \rightarrow \bbR$, let 
\begin{equation}\label{eq:def-Lp-norm-pi}
    ||f||_{L^m(\pi)} = \left( \sum_{x \in \cX} |f(x)|^m \pi(x) \right)^{1/m}, \text{ for } m \in (2, \infty].
\end{equation}  
Given $\cX_0 \subset \cX$, define the $\cX_0$-restricted spectral gap by 
\begin{equation}\label{eq:def-gap-restricted}
\begin{aligned}
   \mathrm{Gap}_{\cX_0}(\bP)   
=\;&  \inf_{f \colon \mathrm{Var}_\pi(f) > 0}  \frac{\sum_{x,y \in \cX_0} (f(x)-f(y))^2 \bP(x,y) \pi(x)}{\sum_{x,y \in \cX_0}(f(x)-f(y))^2 \pi(x) \pi(y) }. 
\end{aligned}
\end{equation}
We note that if $\cX_0 = \cX$, $\mathrm{Gap}_{\cX_0}(\bP) $ coincides with $\mathrm{Gap}(\bP)$. 
The following lemma is from \cite{atchade2021approximate},
which shows that we can use restricted spectral gap to bound the mixing time given a sufficiently good initial distribution.  

\begin{lemma}\label{lm:yves}
Let $\epsilon \in (0, 1/2)$ and $\pi_0$ be another distribution on $\cX$. 
Define $f_0 = \pi_0 / \pi$. 
Choose any $m \in (2, \infty]$, and let $B \in [1, \infty)$ be such that $||f_0 ||_{L^m(\pi)} \leq B$.  
If $\cX_0 \subset \cX$ satisfies 
\begin{align*}
    \pi(\cX_0) \geq 1 - \left( \frac{\epsilon^2}{5B^2} \right)^{1 + \frac{2}{m - 2} }, 
\end{align*}
then $|| \pi_0 P^t - \pi ||_{\rm{TV}} \leq \epsilon$ for any $t$ such that 
\begin{align*}
    t \geq \frac{ \log (B^2 / 2 \epsilon^2) }{ \mathrm{Gap}_{\cX_0} (\bP) }. 
\end{align*}
\end{lemma}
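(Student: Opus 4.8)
The plan is to track the density ratio $h_t = (\pi_0 \bP^t)/\pi$ and control its $\chi^2$-divergence from $\pi$, that is $V_t := \mathrm{Var}_\pi(h_t) = ||h_t - 1||_{L^2(\pi)}^2$, using the restricted spectral gap to drive $V_t$ down to a floor of order $\epsilon^2$ set by $\pi(\cX \setminus \cX_0)$. The reduction to $V_t$ is immediate from $||\pi_0 \bP^t - \pi||_{\mathrm{TV}} = \tfrac12 ||h_t - 1||_{L^1(\pi)} \le \tfrac12 \sqrt{V_t}$, so it suffices to produce $V_t \le 4\epsilon^2$. By reversibility the density evolves as $h_{t+1} = \bP h_t$, and since $\bP$ is an averaging operator with stationary $\pi$, Jensen's inequality shows $\bP$ is an $L^m(\pi)$-contraction; hence $||h_t||_{L^m(\pi)} \le ||f_0||_{L^m(\pi)} \le B$ for every $t$, a uniform moment bound I will use in the error estimate. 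In particular $V_0 = ||f_0||_{L^2(\pi)}^2 - 1 \le B^2$.

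Next I would establish a one-step decay. Because the chains we analyze are lazy (all eigenvalues of $\bP$ lie in $[0,1]$, so $\bP(\mathsf{I}-\bP) \succeq 0$), self-adjointness gives
\[
  V_t - V_{t+1} = \langle (\mathsf{I}-\bP^2) h_t,\, h_t \rangle_\pi \ge \langle (\mathsf{I}-\bP) h_t,\, h_t \rangle_\pi = \cE(h_t).
\]
Discarding from the Dirichlet form every edge that leaves $\cX_0$ and then invoking the definition~\eqref{eq:def-gap-restricted} of $\mathrm{Gap}_{\cX_0}(\bP)$ yields
\[
  \cE(h_t) \ge \tfrac12 \sum_{x,y \in \cX_0} (h_t(x)-h_t(y))^2 \bP(x,y)\pi(x) \ge \tfrac12 \mathrm{Gap}_{\cX_0}(\bP) \sum_{x,y \in \cX_0} (h_t(x)-h_t(y))^2 \pi(x)\pi(y).
\]
The last sum equals $V_t$ minus the contribution of pairs meeting $\cX \setminus \cX_0$, so the remaining task is to bound that missing contribution.

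The crux is to show this missing contribution is at most a floor $\beta$ of order $\epsilon^2$. Using $(a-b)^2 \le 2a^2 + 2b^2$, the symmetry of the summand, and Hölder's inequality with conjugate exponents $m/2$ and $m/(m-2)$, the error is bounded by $2B^2\big(a^{1-2/m} + a\big)$ with $a := \pi(\cX \setminus \cX_0)$, where the moment bound $||h_t||_{L^m(\pi)} \le B$ is precisely what makes Hölder applicable. The hypothesis $\pi(\cX_0) \ge 1 - (\epsilon^2/(5B^2))^{1 + 2/(m-2)}$, together with the identity $1 + 2/(m-2) = m/(m-2)$, is calibrated so that $a^{1-2/m} = a^{(m-2)/m} \le \epsilon^2/(5B^2)$; since $a \le a^{(m-2)/m}$ this gives an error of at most $\beta := 4\epsilon^2/5 \le 2\epsilon^2$, uniformly in $t$.

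Combining the three steps and writing $g = \mathrm{Gap}_{\cX_0}(\bP)$, I obtain $V_t - V_{t+1} \ge g(V_t - \beta)$, hence $V_{t+1} - \beta \le (1-g)(V_t - \beta)$ and $V_t \le \beta + e^{-gt} V_0 \le \beta + e^{-gt} B^2$. For $t \ge \log(B^2/(2\epsilon^2))/g$ the second term is at most $2\epsilon^2$, so $V_t \le 2\epsilon^2 + \beta \le 4\epsilon^2$ and therefore $||\pi_0 \bP^t - \pi||_{\mathrm{TV}} \le \tfrac12\sqrt{V_t} \le \epsilon$, as claimed. The main obstacle is the error estimate of the third step: one must simultaneously verify that the $L^m(\pi)$ norm of $h_t$ does not grow along the chain and match the Hölder exponent $(m-2)/m$ against the reciprocal exponent $m/(m-2)$ appearing in the hypothesis on $\pi(\cX_0)$, so that the mass outside $\cX_0$ contributes only $O(\epsilon^2)$; the remaining geometric recursion is routine.
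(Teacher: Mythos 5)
Your proof is correct, but it is not the paper's proof: the paper establishes this lemma purely by citation, pointing to Lemmas 1 and 2 of Atchad\'e (2021) and the argument following them, with no self-contained derivation. What you have written is, in effect, a reconstruction of that cited argument: the reduction $\|\pi_0\bP^t-\pi\|_{\mathrm{TV}}\le\tfrac12\sqrt{V_t}$ for $h_t=(\pi_0\bP^t)/\pi$, the one-step decay $V_t-V_{t+1}=\langle(\mathsf{I}-\bP^2)h_t,h_t\rangle_\pi\ge\cE(h_t)$, the restriction of both the Dirichlet form and the variance to $\cX_0$ so that $\mathrm{Gap}_{\cX_0}(\bP)$ can be invoked, and the H\"older estimate $2B^2(a^{1-2/m}+a)$ with $a=\pi(\cX\setminus\cX_0)$, made possible by the uniform moment bound $\|h_t\|_{L^m(\pi)}\le B$ coming from the $L^m(\pi)$-contraction of $\bP$, and calibrated against the exponent $m/(m-2)=1+2/(m-2)$ in the hypothesis. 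Your bookkeeping checks out: the floor is $\beta=4\epsilon^2/5$, so $V_t\le\beta+e^{-gt}B^2\le\tfrac{14}{5}\epsilon^2\le 4\epsilon^2$ at the stated time, giving total variation at most $\epsilon$. What your route buys is self-containedness; what the paper's buys is brevity.

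Two points to tighten. First, your decay step requires $\bP(\mathsf{I}-\bP)\succeq 0$, i.e.\ a non-negative spectrum. The lemma as stated does not assume laziness, and it is in fact false without some such hypothesis (a two-state chain with transition probability $1-\delta$ of swapping states, $\delta$ small, is reversible, irreducible and aperiodic, yet converges far more slowly than its restricted gap with $\cX_0=\cX$ would suggest). So your appeal to laziness is necessary and consistent with how the paper uses the lemma --- it is only ever applied to $\bP^{\mathrm{lazy}}$ --- but it should be stated as an explicit hypothesis on $\bP$ rather than a remark about ``the chains we analyze.'' Second, in the final recursion the factor $(1-g)^t$ with $g=\mathrm{Gap}_{\cX_0}(\bP)$ is only a valid decay factor when $g\le 1$, and nothing in the definition of the restricted gap forces this; the fix is to invoke the monotonicity of $V_t$ (which you already have, since $\cE(h_t)\ge 0$), so that once $V_t$ drops below the floor it stays there and the conclusion holds in all cases.
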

\begin{proof}
    This follows from Lemma 1 and Lemma 2 of~\cite{atchade2021approximate} and the argument after Lemma 2 in~\cite{atchade2021approximate}. 
    Note that \cite{atchade2021approximate} used a different scaling in the definition of the total variation distance. 
\end{proof}

\subsection{Multicommodity flow bounds}\label{sec:appx-flow-bound} 
Given a subset $\cX_0 \subseteq \cX$, recall that we write 
\begin{align}
    \cN|_{\cX_0}(x)  = \{ x' \in \cX_0 \colon \bP(x, x') > 0\}, 
\end{align}
for each $x \in \cX_0$.   
Similarly, let 
\begin{equation}\label{eq:def-E0}
    \barE_0 = \left\{ (x, x') \in \cX_0^2 \colon   x \neq x' \text{ and } \bP(x, x') > 0 \right\}
\end{equation}
denote the edge set of $\bP$ restricted to $\cX_0$.  
Given $E_0 \subset \barE_0$,  let $\Gamma_0(x, x') = \Gamma_{E_0}(x, x')$ and $\Gamma_0 = \Gamma_{E_0}$  denote the corresponding path sets. 
We provide the following proposition that extends the argument used in Proposition~\ref{prop:gap_bound} to the restricted spectral gap.

\begin{proposition}\label{prop:restricted_gap_bound}
Let  $E_0\subset \barE_0$, $w \colon E_0 \rightarrow (0, \infty)$, and for $\gamma \in \Gamma_0$, define $|\gamma|_w$ as in~\eqref{eq:def-length}. 
Let $\phi \colon \Gamma_{0} \rightarrow [0, \infty)$ satisfy that 
\begin{equation}\label{eq:sum-flow-property-restricted}
    \sum_{\gamma \in \Gamma_{0}(x, x')} \phi(\gamma) = \pi(x) \pi(x'), \; \forall x, x' \in \cX_0,\;  x \neq x'. 
\end{equation} 
Then $\mathrm{Gap}_{\cX_0}(\bP) \geq 1/A(E_0, w, \phi)$, where $A$ is defined by~\eqref{eq:def-congestion-A}.
\end{proposition}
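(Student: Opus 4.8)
The plan is to apply Lemma~\ref{lemma:gap_bound} almost verbatim, taking $E = E_0$, and then match the two sides of the resulting inequality to the numerator and denominator of the restricted Rayleigh quotient in~\eqref{eq:def-gap-restricted}. The crucial structural observation is that every path $\gamma \in \Gamma_0 = \Gamma_{E_0}$ has all of its edges in $E_0 \subset \barE_0 \subset \cX_0^2$, and hence all of its vertices lie in $\cX_0$. Consequently $\Gamma_0(x,x') = \emptyset$ whenever $x \notin \cX_0$ or $x' \notin \cX_0$, so the quantity $\Phi(x,x') = \sum_{\gamma \in \Gamma_0(x,x')} \phi(\gamma)$ appearing in Lemma~\ref{lemma:gap_bound} vanishes outside $\cX_0 \times \cX_0$. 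For distinct $x, x' \in \cX_0$, the flow hypothesis~\eqref{eq:sum-flow-property-restricted} forces $\Phi(x,x') = \pi(x)\pi(x')$, while the diagonal terms contribute nothing since $|f(x')-f(x)|^2 = 0$ there.

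With this in hand, I would fix any $f \colon \cX \to \bbR$ and apply Lemma~\ref{lemma:gap_bound} with $E = E_0$. The left-hand side collapses to
\begin{equation*}
\sum_{x,x' \in \cX} |f(x')-f(x)|^2 \Phi(x,x') = \sum_{x,x' \in \cX_0} |f(x')-f(x)|^2 \pi(x)\pi(x'),
\end{equation*}
which is exactly the denominator in~\eqref{eq:def-gap-restricted}. For the right-hand side, since $E_0 \subseteq \barE_0$ and every summand $Q((x,x'))|f(x')-f(x)|^2$ is nonnegative, I would enlarge the edge sum to all of $\barE_0$:
\begin{equation*}
\sum_{(x,x') \in E_0} Q((x,x')) |f(x')-f(x)|^2 \leq \sum_{(x,x') \in \barE_0} Q((x,x')) |f(x')-f(x)|^2 = \sum_{x,y \in \cX_0} (f(x)-f(y))^2 \bP(x,y)\pi(x),
\end{equation*}
using $Q((x,y)) = \pi(x)\bP(x,y)$ from~\eqref{eq:def-Q} and the fact that pairs with $\bP(x,y) = 0$ or $x=y$ do not contribute. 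This last expression is precisely the numerator in~\eqref{eq:def-gap-restricted}.

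Chaining these two displays through Lemma~\ref{lemma:gap_bound} yields, for every $f$ whose restricted denominator $\sum_{x,y \in \cX_0}(f(x)-f(y))^2\pi(x)\pi(y)$ is positive,
\begin{equation*}
\frac{\sum_{x,y \in \cX_0} (f(x)-f(y))^2 \bP(x,y)\pi(x)}{\sum_{x,y \in \cX_0}(f(x)-f(y))^2 \pi(x)\pi(y)} \geq \frac{1}{A(E_0, w, \phi)},
\end{equation*}
and taking the infimum over such $f$ gives $\mathrm{Gap}_{\cX_0}(\bP) \geq 1/A(E_0,w,\phi)$, as claimed. I do not expect a serious obstacle here: the argument is a bookkeeping reduction to Lemma~\ref{lemma:gap_bound}, and the factors match cleanly because the restricted quotient~\eqref{eq:def-gap-restricted} carries no $1/2$ in either numerator or denominator, unlike~\eqref{eq:dirichlet-alt} and~\eqref{eq:var-alt}. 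The one point requiring care is the identification of $\Phi$ on $\cX_0 \times \cX_0$ together with its vanishing elsewhere, which is what makes the left-hand side reproduce the \emph{restricted} denominator rather than the full variance; a second, milder subtlety is that the Rayleigh quotient must be restricted to $f$ for which its denominator is strictly positive, since functions constant on $\cX_0$ give a $0/0$ ratio and are harmlessly excluded from the infimum.
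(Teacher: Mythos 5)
Your proposal is correct and follows essentially the same route as the paper's own proof: apply Lemma~\ref{lemma:gap_bound} with $E = E_0$, observe that $\Phi$ vanishes outside $\cX_0 \times \cX_0$ and equals $\pi(x)\pi(x')$ there by~\eqref{eq:sum-flow-property-restricted}, enlarge the nonnegative edge sum from $E_0$ to the full restricted sum to recover the numerator of~\eqref{eq:def-gap-restricted}, and conclude from the definition of $\mathrm{Gap}_{\cX_0}(\bP)$. Your additional remarks on the diagonal terms and on excluding $f$ with vanishing restricted denominator are harmless refinements of the same argument.
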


\begin{proof}
By Lemma~\ref{lemma:gap_bound}, for any $f \colon \cX \rightarrow \bbR$,  
\begin{align*}
      \sum_{x,x' \in \cX} |f(x') - f(x)|^2 \Phi(x, x')  
    \leq \;&  A(E_0, w, \phi) \sum_{(x, x') \in E_0}  Q( (x, x') )   |f(x') - f(x)|^2, 
\end{align*} 
where  $\Phi(x, x') = \sum_{\gamma \in \Gamma_E(x, x')} \phi(\gamma)$.
If $x \notin \cX_0$ or $x' \notin \cX_0$, $\Gamma_0(x, x') = \emptyset$ and thus $\Phi(x,x') = 0$. Hence, 
\begin{align*}
     \sum_{x,x' \in \cX} |f(x) - f(x')|^2 \Phi(x, x')   
    =\;& \sum_{x, x' \in \cX_0} |f(x) - f(x')|^2 \Phi(x, x') \\
    =\;& \sum_{x, x' \in \cX_0} |f(x) - f(x')|^2 \pi(x)\pi(x').  
\end{align*}
Further, since $E_0 \subset \cX_0^2$, 
\begin{align*}
    \sum_{(x, x') \in E_0}  Q( (x, x') )   |f(x) - f(x')|^2  \leq 
     \sum_{x, x' \in \cX_0} |f(x) - f(x')|^2 \bP(x,x')\pi(x). 
\end{align*}
The result then follows from the definition of the restricted spectral gap $\mathrm{Gap}_{\cX_0}(\bP)$.  
\end{proof}

\subsection{Application of the multicommodity flow method
under the restricted unimodal condition} \label{sec:appx-flow-restricted-gap-unimodal} 

In this section, we assume $\R|_{\cX_0} > \N$ and show that the multicommodity flow method reviewed in Section~\ref{subsec:review}  is also applicable to bounding the restricted spectral gap. 
For $S \in (\N, \R|_{\cX_0}]$, we define
\begin{align}\label{eq:condition_restricted}
    \quad \cN|_{\cX_0}^S(x) = \left\{ x' \in \cN|_{\cX_0}(x): \pi(x') / \pi(x) \geq S \right\},
\end{align}
and 
\begin{equation}\label{eq:restricted_edge_set}
    E_S^0 = \{(x,x'): x \in \cN|^S_{\cX_0}(x') \text{ or } x' \in \cN|^S_{\cX_0}(x)\}.
\end{equation}
Let $\Gamma_S^0 = \Gamma_{E_S^0}$ denote the resulting path set. 

\begin{lemma}\label{lm:bounding_A0}
    Suppose the triple $(\cX_0, \cN|_{\cX_0}, \pi)$ satisfies  
    $$ \R|_{\cX_0} = \R (\cX_0, \cN|_{\cX_0}, \pi) > \N(\cX, \cN) = \N , $$ 
    and let $x^*_0 = \argmax_{x \in \cX_0} \pi(x)$. 
    Let $E_S^0$ be given by~\eqref{eq:restricted_edge_set}
    with $S \in (\N, \R|_{\cX_0}]$.    
    Then,  there exist $w \colon E_S^0 \rightarrow (0, \infty)$, 
    $\phi \colon \Gamma_S^0 \rightarrow [0, \infty)$ such that $\phi$ satisfies~\eqref{eq:sum-flow-property-restricted} and 
    \begin{equation}\label{eq:final_bound_restricted}
         A(E_S^0, w, \phi) \leq \frac{ c(S/M)}{2 }  \max_{z \in \cX_0 \setminus \{x^*_0\}} \frac{1}{\bP(z,\cN|_{\cX_0}^S(z))}, 
    \end{equation}
    where $A$ is given by~\eqref{eq:def-congestion-A} and 
    $c(u) = 4(1 - u^{-1/2})^{-3}.$
\end{lemma}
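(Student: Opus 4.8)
The plan is to mirror the proof of Lemma~\ref{lm:bounding_A} essentially verbatim, systematically replacing the triple $(\cX, \cN, \pi)$ and its global mode $x^*$ by the restricted objects $(\cX_0, \cN|_{\cX_0}, \pi)$ and $x^*_0$, and then verifying that none of the intermediate estimates degrades upon passing to the subset. The restricted unimodal condition $\R|_{\cX_0} > \N$ together with the choice $S \in (\N, \R|_{\cX_0}]$ plays exactly the role that $\R > \N$ and $S \in (\N, \R]$ play in the original argument.

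First I would construct the flow $\phi$ on $\Gamma_S^0$. Rather than invoking Lemma~\ref{lm:def_flow} directly, I would appeal to Remark~\ref{rmk:flow.generalize}, applied on the state space $\cX_0$ with $\tilde{\cN} = \cN|_{\cX_0}^S$. Its two hypotheses are readily checked: condition (ii), that $\pi(y) > \pi(x)$ for $y \in \cN|_{\cX_0}^S(x)$, is immediate from $S > \N \geq 1$; condition (i), that $\cN|_{\cX_0}^S(x) \neq \emptyset$ for every $x \in \cX_0 \setminus \{x^*_0\}$, follows from $S \leq \R|_{\cX_0}$, since by the definition~\eqref{eq:def-R0} each such $x$ has a neighbor in $\cN|_{\cX_0}(x)$ whose posterior ratio is at least $\R|_{\cX_0} \geq S$. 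This yields a flow $\phi \colon \Gamma_S^0 \to [0,\infty)$ satisfying the normalization~\eqref{eq:sum-flow-property-restricted} and the restricted analog of the load bound~\eqref{eq:prop_flow}, with all flow paths absorbed at $x^*_0$.

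Next I would take the same weight function $w(e) = \pi(z)^{-q}$ from~\eqref{eq:def-weight-q} with the optimal exponent $q = \log(S/\N)/(2\log S)$, so that $S^{-q} = \N S^{-(1-q)} = (\N/S)^{1/2} < 1$. The path-length bound~\eqref{eq:length} transfers unchanged, since its derivation only uses that each flow path splits at $x^*_0$ into two legs along which $\pi$ grows by a factor of at least $S$ per step. Then I would run the congestion computation of Lemma~\ref{lm:bounding_A}: for a fixed edge $e = (z, z')$ with $z' \in \cN|_{\cX_0}^S(z)$, combine~\eqref{eq:length} with the load bound to reduce $A(E_S^0, w, \phi)$ to the double sum $\sum_{x \in \Lambda(z)} \sum_{x' \in \cX_0} \pi(x)^{1-q}\pi(x')^{1-q}$ over the restricted ancestor cone $\Lambda(z) \subset \cX_0$, and bound this geometric series by $\pi(z)^{1-q} (1 - \N S^{-(1-q)})^{-2}$. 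Taking the maximum over $e$ and collecting the factors $(1-S^{-q})^{-1}$ and $(1 - \N S^{-(1-q)})^{-2}$, which coincide by the choice of $q$, produces the constant $c(S/\N)/2$ and hence the asserted bound~\eqref{eq:final_bound_restricted}.

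The step I expect to require the most care is the geometric series bound within the restricted cone. Two facts must be checked. The branching factor of the flow tree at each level is at most $|\cN|_{\cX_0}^S(x)| \le |\cN(x)| \le \N$, so the full-space neighborhood size $\N = \N(\cX, \cN)$ remains a valid bound on the number of children even though we only count moves that stay inside $\cX_0$; this is precisely why the hypothesis is stated with $\N(\cX,\cN)$. Moreover, summing over $\cX_0$ rather than all of $\cX$ can only shrink the relevant sums, so the common ratio $\N S^{-(1-q)} = (\N/S)^{1/2} < 1$ guaranteed by $S > \N$ still controls convergence. Everything else is a transcription of Appendix~\ref{subsec:review}, and the resulting estimate on $A(E_S^0, w, \phi)$ is what will feed into Proposition~\ref{prop:restricted_gap_bound} to bound $\mathrm{Gap}_{\cX_0}(\bP)$ in the subsequent application.
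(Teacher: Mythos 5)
Your proof is correct and follows essentially the same route as the paper: the paper constructs the flow by applying Lemma~\ref{lm:def_flow} to the restricted transition matrix $\bP|_{\cX_0}$ on $(\cX_0, \cN|_{\cX_0})$, while you invoke Remark~\ref{rmk:flow.generalize} with $\tilde{\cN} = \cN|_{\cX_0}^S$, but these yield the identical flow and load bound. Both proofs then finish by rerunning the congestion argument of Lemma~\ref{lm:bounding_A} on the restricted space, with your observations (branching factor still bounded by $\N$, sums over $\cX_0$ still geometrically controlled) matching the paper's remark that the argument only needs $\max_x \pi(x) \leq 1$ rather than $\pi$ being a probability measure on $\cX_0$.
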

\begin{proof} 
Let $\bP|_{\cX_0}$ denote the restriction of $\bP$ to $\cX_0$; that is, $\bP|_{\cX_0}(x, x') = \bP(x, x')$ if $x, x' \in \cX_0$ and $x \neq x'$. 
Apply Lemma~\ref{lm:def_flow} to $\bP|_{\cX_0}$ on $(\cX_0, \cN|_{\cX_0})$, which is allowed since $\R|_{\cX_0} > 1$.  
We then obtain   a flow $\phi\colon \Gamma_S^0 \rightarrow [0, \infty)$ such that~\eqref{eq:sum-flow-property-restricted} is satisfied and  
\begin{align} 
\sum_{\substack{\gamma\colon \gamma \in \Gamma_S^0(x,x') \\ \textit{s.t. } (z,z') \in \gamma} }
    \phi(\gamma)  \leq \;& \frac{\bP|_{\cX_0}(z,z')}{\bP|_{\cX_0}(z, \,\cN|_{\cX_0}^S(z))} \pi(x) \pi(x') \\ 
     = \;& \frac{\bP(z,z')}{\bP(z, \,\cN|_{\cX_0}^S(z))} \pi(x) \pi(x')
\end{align} 
for any $x, x' \in \cX_0$ with $x \neq x'$ and  $ (z,z') \in E_S^0$ with $z' \in \cN|_{\cX_0}^S(z)$. 

The asserted bound on $A(E_S^0, w, \phi)$ can be derived by repeating the argument used for proving Lemma~\ref{lm:bounding_A} on the restricted space $(\cX_0, \cN|_{\cX_0})$. 
Note that the only difference is that $\pi$ is no longer a probability measure on $(\cX_0, \cN|_{\cX_0})$. But this does not affect the proof since the proof of Lemma~\ref{lm:bounding_A} only relies on $\max_x \pi(x) \leq 1$. 
\end{proof}

\section{Review of single-element drift condition}\label{sec:appx-drift}

The drift-and-minorization method  is one of the most frequently used techniques for deriving the convergence rates of Markov chains on general state spaces. 
It directly bounds the total variation distance from the stationary distribution without using the spectral gap. 
For our purpose, we only need to use the single-element drift condition considered in~\cite{jerison2016drift}, which is particularly useful on discrete spaces. 

\begin{definition}[single-element drift condition]\label{def:single}
    The transition matrix $\bP$  satisfies a \textit{single-element drift condition} if there exist an element $x^* \in \cX$, a function $V: \cX \rightarrow [1, \infty)$, and constants $0 <\alpha < 1$ such that $(\bP V) (x) \leq \alpha V (x)$ for all $x \in \cX \setminus \{x^*\}$. 
\end{definition}

Note that in~\cite{jerison2016drift}, the single-element drift condition   is proposed for general-state-space Markov chains, and one needs to further require that $(\bP V)(x^*) < \infty$ (this holds trivially in our case since $\cX$ is finite).  
Given this drift condition, we can bound the mixing time using the following result of~\cite{jerison2016drift}. 
Recall that we assume $\bP$ is reversible and thus its eigenvalues are all real. 

\begin{lemma}[Theorem 4.5 of~\cite{jerison2016drift}] 
\label{lm:single-element-drift}
Suppose $\bP$ has non-negative eigenvalues and satisfies the single-element drift condition given in Definition~\ref{def:single}.  
Then, for all $t \geq 0, x \in \mathcal{X}$,
    \begin{align*}
        \left\|\bP^t(x, \cdot)-\pi\right\|_{\mathrm{TV}} \leq 2 V(x) \alpha^{t+1}. 
    \end{align*}
Therefore, we have    
\begin{align*}
    \tmix_x(\bP, \epsilon) \leq \frac{ 1 }{ 1 - \lbda}\log \left( \frac{2V(x)}{\epsilon}\right). 
\end{align*}
\end{lemma}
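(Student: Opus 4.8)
The plan is to decouple the two assertions: the total-variation bound is the substantive claim, while the mixing-time estimate follows from it by elementary algebra. So I would first establish $\|\bP^t(x,\cdot)-\pi\|_{\mathrm{TV}}\le 2V(x)\alpha^{t+1}$, and then, requiring the right-hand side to be at most $\epsilon$, solve $2V(x)\alpha^{t+1}\le\epsilon$ for $t$: this gives $t+1\ge \log(2V(x)/\epsilon)/\log(1/\alpha)$, and since $\log(1/\alpha)\ge 1-\alpha$ on $(0,1)$ we obtain $\tmix_x(\bP,\epsilon)\le (1-\alpha)^{-1}\log(2V(x)/\epsilon)$, the stated corollary.

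For the TV bound, the cleanest first step is to convert the drift condition into a tail bound for the hitting time $T_{x^*}=\min\{t\ge 0: X_t=x^*\}$. Let $\mathsf{Q}$ be the sub-stochastic matrix obtained from $\bP$ by deleting the row and column indexed by $x^*$ (the taboo kernel that kills the chain upon reaching $x^*$), so that $\bbP_x(T_{x^*}>t)=(\mathsf{Q}^t\mathbf{1})(x)$ for $x\ne x^*$. Because $\mathsf{Q}(x,y)\le\bP(x,y)$ entrywise and $V(x^*)\ge 1$, the single-element drift condition gives $(\mathsf{Q}V)(x)\le(\bP V)(x)\le\alpha V(x)$ for every $x\ne x^*$; as $\mathsf{Q}$ has non-negative entries, iterating this pointwise inequality yields $\mathsf{Q}^tV\le\alpha^tV$, and hence $\bbP_x(T_{x^*}>t)\le(\mathsf{Q}^tV)(x)\le\alpha^t V(x)$ using $\mathbf{1}\le V$.

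The harder step is to pass from this hitting-time tail to a total-variation bound, and this is exactly where the hypothesis that $\bP$ has non-negative eigenvalues enters. The idea is a regeneration/coupling argument at the single state $x^*$: excursions of the stationary chain away from and back to $x^*$ furnish regeneration epochs, and one couples the chain started at $x$ with a stationary copy so that they coalesce no later than the first regeneration after the $x$-chain reaches $x^*$. Non-negativity of the spectrum plays the role that a minorization condition plays in the usual drift-and-minorization theory: it guarantees that reaching the single element $x^*$ is enough to synchronize the two chains (there is no periodicity obstruction), so that a genuine minorization set is unnecessary. Carrying this out—which is the content of Theorem 4.5 of~\cite{jerison2016drift}—bounds the coalescence time in terms of $T_{x^*}$ and produces both the factor of two and the extra power of $\alpha$, giving $\|\bP^t(x,\cdot)-\pi\|_{\mathrm{TV}}\le 2\alpha\,\bbP_x(T_{x^*}>t)\le 2V(x)\alpha^{t+1}$.

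I expect this coupling/regeneration step to be the main obstacle, both because the accounting that turns the spectral non-negativity into a valid single-point coalescence is delicate, and because it is what implicitly controls $\|\bP^t(x^*,\cdot)-\pi\|_{\mathrm{TV}}$, the piece not directly seen by the hitting time from $x$. By contrast, the hitting-time reduction via the taboo kernel $\mathsf{Q}$ and the final algebra are routine. The non-negative-eigenvalue hypothesis is indispensable: without it a chain may hit $x^*$ quickly yet fail to mix because of near-periodic behavior, so the step from $T_{x^*}$ to total-variation distance would break down.
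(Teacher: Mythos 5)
Your proposal matches the paper's proof: both treat the total-variation bound as exactly the content of Theorem~4.5 of \citet{jerison2016drift} (cited as a black box, not re-proved) and then obtain the mixing-time bound by the same elementary step, requiring $2V(x)\alpha^{t+1}\leq\epsilon$ and using $\log(1/\alpha)\geq 1-\alpha$ (the paper's $\log\alpha<\alpha-1$), so the argument is correct and essentially identical. One minor caveat: the intermediate inequality $\left\|\bP^t(x,\cdot)-\pi\right\|_{\mathrm{TV}}\leq 2\alpha\,\bbP_x(T_{x^*}>t)$ in your heuristic sketch of Jerison's internal argument cannot be literally right (at $x=x^*$ the right-hand side vanishes for all $t\geq 0$ while the left-hand side need not), but since you explicitly defer that step to Jerison it does not affect the validity of your proof.
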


\begin{proof}
This directly follows from Theorem 4.5 of~\cite{jerison2016drift} and the inequality $\log \alpha < \alpha -1$ for $\alpha \in (0, 1)$
\end{proof}

\section{Proofs for general mixing time bounds}\label{sec:appx-proof}

\begin{proof}[Proof of Lemma~\ref{lm:accept-prob}] \label{proof:lm-accept-prob}
Since $Z_h(x)\geq  L$, we have 
\begin{align*}
\bK_h\left(x, x'\right) &= \frac{ h (\pi(x') / \pi(x))}{ Z_h(x) }  \leq  \frac{h(\pi(x')/\pi(x))}{ L }, \\
\bK_h\left(x', x\right) &= \frac{ h (\pi(x) / \pi(x'))}{ Z_h(x') } \geq  \frac{ \ell }{ \N L }, 
\end{align*}
where the second equality follows from $h \in [\ell, L]$ and $|\cN(x')| \leq \N$. 
Combining the two inequalities, we get
\begin{align*}
    \frac{\pi\left(x'\right) \bK_h \left(x', x\right)}{\pi(x) \bK_h \left(x, x'\right)} & \geq \frac{\pi(x')/\pi(x) }{h(\pi(x')/\pi(x))} \frac{\ell}{\N}. 
\end{align*}
Since $\pi(x')/\pi(x) \geq \ell$ and $h(u) \leq u$ for any $u \geq \ell$, we have 
\begin{align*}
    \frac{\pi\left(x'\right) \bK_h \left(x', x\right)}{\pi(x) \bK_h \left(x, x'\right)} & \geq  \frac{\ell}{\N} \geq 1, 
\end{align*}  
where the last inequality  follows by assumption. 
\end{proof}

\begin{proof}[Proof of Theorem~\ref{th:mix-imh}]\label{proof:th-mix-imh}
Applying Lemma~\ref{lm:bounding_A} with  $S =L/\N$, we get 
\begin{align*}
A(E_S, w,\phi) \leq \frac{c(\rho)}{2} \left\{\min_{x \in \cX \setminus \{x^*\}}\bP_h^\mathrm{lazy}(x, \cN_S(x))\right\}^{-1},
\end{align*}
where $E_S$ is defined by~\eqref{eq:def-ES}. 
We prove in Lemma~\ref{lm:proposal-prob} below that for $x \neq x^*$, 
$\bP_h(x, \cN_S(x))  \geq 1/2$, and thus
$\bP_h^{\mathrm{lazy}}(x, \cN_S(x))  \geq 1/4$. 
We conclude the proof by applying Proposition~\ref{prop:gap_bound} and inequality~\eqref{eq:ineq-gap-mix}.
\end{proof}

\begin{lemma}\label{lm:proposal-prob}
Let $S = L/M$. Under the setting of Theorem~\ref{th:mix-imh}, we have $\bP_h(x, \cN_S(x))  \geq 1/2$ for each $x \neq x^*$,
where $\cN_S(x)$ is defined in~\eqref{eq:condition1}. 
\end{lemma}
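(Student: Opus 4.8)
The plan is to show, for any fixed $x \neq x^*$, that the informed chain both \emph{proposes} a state in $\cN_S(x)$ with probability at least $1/2$ and \emph{accepts} any such proposal with probability exactly $1$; multiplying these gives $\bP_h(x, \cN_S(x)) \geq 1/2$.

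First I would verify the two ingredients needed to invoke Lemma~\ref{lm:accept-prob}. Since $x \neq x^*$ and $\R > \N^2 > 1$, the definition of $\R$ in~\eqref{eq:def-R} furnishes a neighbor $z \in \cN(x)$ with $\pi(z)/\pi(x) \geq \R \geq L$; hence $h(\pi(z)/\pi(x)) = L$, which already yields $Z_h(x) \geq L$, and moreover $z \in \cN_S(x)$ because $\pi(z)/\pi(x) \geq L > S$. For every $x' \in \cN_S(x)$ we have $\pi(x')/\pi(x) \geq S = L/\N > \N = \ell$ (using $L > \N^2$), so with the choice $\ell = \N$ the hypotheses $Z_h(x) \geq L$ and $\pi(x')/\pi(x) \geq \ell \geq \N$ of Lemma~\ref{lm:accept-prob} hold, giving acceptance probability $1$ and therefore $\bP_h(x, x') = \bK_h(x, x')$ for all $x' \in \cN_S(x)$.

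The crux is the proposal bound $\bK_h(x, \cN_S(x)) \geq 1/2$, obtained by splitting $Z_h(x)$ into the weight on $\cN_S(x)$ and the weight on $\cN(x) \setminus \cN_S(x)$. The numerator $\sum_{x' \in \cN_S(x)} h(\pi(x')/\pi(x))$ is at least the single term $h(\pi(z)/\pi(x)) = L$. For the complementary sum, each $x'$ satisfies $\pi(x')/\pi(x) < S < L$, so its clipped weight is $h(\pi(x')/\pi(x)) = \max\{\ell, \pi(x')/\pi(x)\} < S = L/\N$; since $z \in \cN_S(x)$ leaves at most $\N - 1$ competing neighbors, the complementary weight is strictly below $(\N - 1)L/\N < L$. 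As the numerator is at least $L$ while the complementary weight is strictly below $L$, the numerator exceeds the complementary weight and hence constitutes more than half of $Z_h(x)$, so $\bK_h(x, \cN_S(x)) > 1/2$. Combining with the acceptance statement completes the argument; note that $S = L/\N \in (\N, \R]$, as required for the downstream use of this lemma in Lemma~\ref{lm:bounding_A}.

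The main obstacle is this quantitative proposal bound, which closes only because the three parameters lock together: the clip floor $\ell = \N$ and the threshold $S = L/\N$ force every state outside $\cN_S(x)$ to carry individual weight below $L/\N$, while $L > \N^2$ is exactly what makes $S > \ell$, so that the full weight $L$ contributed by the single guaranteed good neighbor $z$ dominates the at-most-$(\N-1)$ competing terms. The delicate point is that one lone neighbor must simultaneously secure the bound $Z_h(x) \geq L$ needed for acceptance and the numerator bound needed for the proposal probability; verifying that these inequalities are mutually consistent under the standing choices $\ell = \N$ and $\N^2 < L \leq \R$ is the heart of the proof.
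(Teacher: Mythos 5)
Your proof is correct and follows the same structure as the paper's: extract a neighbor $z$ with $\pi(z)/\pi(x) \geq \R \geq L$ to get $Z_h(x) \geq L$, invoke Lemma~\ref{lm:accept-prob} to make every proposal into $\cN_S(x)$ accepted with probability one, and then lower-bound $\bK_h(x, \cN_S(x))$ by splitting $Z_h(x)$ into the weight on $\cN_S(x)$ (at least $L$) and the weight on the complement. The one place you diverge is in bounding the complementary weight, and there your version is the careful one: you bound each clipped weight of a state $x' \in \cN(x) \setminus \cN_S(x)$ by $S = L/\N$ (since $h(u) = \max\{\ell, u\} < S$ for $u < S$, using $S > \ell$), giving a complementary total strictly below $(\N-1)L/\N < L$. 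The paper instead asserts $Z_h(x) - Z^S_h(x) \leq \N \ell = \N^2$, which implicitly treats every off-$\cN_S$ weight as if it were $\ell$; this overlooks neighbors with $\pi(x')/\pi(x) \in (\ell, S)$, whose clipped weight equals the ratio itself and can approach $L/\N > \ell$. The paper's conclusion survives because all that is actually needed is that the complementary weight not exceed $Z^S_h(x) \geq L$, which your bound delivers; so your write-up is, in effect, the corrected form of the paper's own argument rather than a different one.
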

\begin{proof} 
Fix any $x \neq x^*$. By the definition of $\R$, there exists some $x' \in \cN(x)$ such that $\pi(x')/\pi(x) \geq \R \geq L$. Hence, $Z_h(x) \geq L$.  Using $S = L / \N > \ell = \N$ and Lemma~\ref{lm:accept-prob}, we see that the acceptance probability for every $x' \in \cN_S(x)$ equals one.  
Therefore, it only remains to prove that  $\bK_h(x, \cN_S(x)) \geq 1/2$.  
On one hand,  $\R \geq L$ and $x \neq x^*$ implies that 
\begin{align*}
Z^S_h(x) \coloneqq \sum_{x' \in\, \cN_S(x)} h\left( \frac{\pi(x')}{\pi(x)} \right)    \geq L. 
\end{align*}
On the other hand, since $|\cN(x)| \leq \N$, we have 
\begin{align*}
Z_h(x) - Z^S_h(x)  =  \sum_{x' \in\, \cN(x) \setminus \cN_S(x)} h\left( \frac{\pi(x')}{\pi(x)} \right) \leq \N \ell = \N^2. 
\end{align*}
Since $L > \N^2$,  $\bK_h(x, \cN_S(x)) = Z^S_h(x) / Z_h(x) \geq 1/2$, which completes the proof.
\end{proof}

\begin{proof}[Proof of Theorem~\ref{th:mix-imh-drift}]
For $x \neq x^*$,
\begin{align*}
    (\bP_h V)(x) = \bP_h(x,x) V(x) +\sum_{x'\in\, \cN(x)} \bP_h(x,x') V(x').  
\end{align*}
Using $\bP_h(x, x) = 1 - \sum_{z\in\, \cN(x)} \bP_h(x,x')$, we can rewrite the above equation as 
\begin{equation}\label{eq:PV}
    \frac{(\bP_h V)(x)}{V(x)} = 1 +\sum_{x'\in\, \cN(x)} \bP_h(x,x') D(x, x'),  
\end{equation}
where we define
\begin{equation}\label{eq:def-delta}
    D(x, x) = \frac{V(x')}{V(x)} - 1
    = \exp\left( \frac{\log \frac{\pi(x')}{\pi(x)} }{ \log \pimin} \right)  - 1.
\end{equation}
Since $D(x, x') \leq 0$ whenever $\pi(x') \geq \pi(x)$, we have 
\begin{align}
   & \frac{(\bP_h V)(x)}{V(x)} \leq 1 + A_1 + A_2,  \label{eq:drift-decomp}\\ 
 \text{ where }  A_1 \coloneqq \sum_{x'\in \,\cN_S(x)} \bP_h(x,x')& D(x, x'), \quad A_2 \coloneqq \sum_{ \substack{x': x'\in \,\cN(x) \\ \textit{s.t. } \pi(x') \leq \pi(x)}}  \bP_h(x,x') D(x, x'), 
\end{align}
with $S = L/M$ and $\cN_S$ is given by~\eqref{eq:condition1}. We now consider the two cases separately. 

\medskip
\noindent \textit{Case 1: $x' \in \cN_S(x)$.} 
Using $e^{-a} - 1 \leq -a/2$ for $a \in [0, 1]$ and the definition of $\cN_S(x)$, we obtain that 
\begin{align*}
    D(x, x') \leq \frac{\log (\pi(x') / \pi(x)) }{2  \log \pimin}
    \leq \frac{\log (L/\N) }{2  \log \pimin}. 
\end{align*}
By Lemma~\ref{lm:proposal-prob}, $\bP_h(x, \cN_S(x)) \geq 1/2$. Hence,
\begin{equation}\label{eq:A1}
    A_1  \leq \frac{\log (L/\N) }{4  \log \pimin}. 
\end{equation}

\medskip
\noindent \textit{Case 2: $\pi(x') \leq \pi(x)$.} For $D(x, x')$, we simply use the worst-case bound 
$D(x, x') \leq e - 1$. 
Since $\pi(x') \leq \pi(x)$ and $\ell = \N$, we have $h( \pi(x')/\pi(x) ) = \N$.
It follows that 
\begin{equation}
    \bP_h(x, x') \leq \bK_h(x, x') = \frac{ \N }{Z_h(x)} \leq \frac{\N}{L}, 
\end{equation}
where the last inequality follows from the assumption $L \leq \R$. 
Since $|\cN(x)| \leq \N$, we get 
\begin{equation}\label{eq:A2}
    A_2 \leq \frac{\N^2}{L} (e - 1). 
\end{equation}

\medskip 
Combining~\eqref{eq:drift-decomp},~\eqref{eq:A1} and~\eqref{eq:A2}, we  get 
\begin{align}
1 - \lbda \geq  - \frac{\log (L/\N) }{4  \log \pimin}  - \frac{\N^2}{L} (e - 1). 
\end{align}
By Lemma~\ref{lm:single-element-drift},  
\begin{align}
    \tmix(\bP_h, \epsilon) \leq \frac{ 1 }{ 1 - \lbda}\log \left( \frac{2 e}{\epsilon}\right), 
\end{align}
which yields the asserted result. 
\end{proof}

\begin{proof}[Proof of Theorem~\ref{th:mix-rwmh-approx}]\label{proof:mix-rwmh-approx}
Let $\delta_{x_0}$ be the Dirac measure which assigns unit probability mass to some $x_0 \in \cX$, and define $f_0 = \delta_{x_0} / \pi$.  Hence, we can write $f_0(x)=  \ind_{\{x = x_0\}}(x)/\pi(x)$. 
Applying Lemma~\ref{lm:yves} with $m = \infty$, we obtain that 
\begin{equation}\label{eq:approx-gap-mix}
    \tmix_{x_0}(\bP_0^\mathrm{lazy}, \epsilon) \leq \frac{ 1 }{ \mathrm{Gap}_{\cX_0} (\bP_0^\mathrm{lazy}) } \log \left\{\frac{1}{2 \epsilon^2 \, \eta^2 } \right\},
\end{equation}
if $\pi(\cX_0) \geq 1 - \epsilon^2 \eta^2 / 5$.
Now, we show that the restricted spectral gap is bounded as
\begin{align*}
\mathrm{Gap}_{\cX_0} (\bP_0^\mathrm{lazy}) \geq \frac{1}{Mc(\rho)}.
\end{align*}
We use Lemma~\ref{lm:bounding_A} with the restricted edge set $E_S^0$ defined in~\eqref{eq:restricted_edge_set} and Proposition~\ref{prop:restricted_gap_bound}.
By setting $S = \R|_{\cX_0}$ in~\eqref{eq:final_bound_restricted}, we obtain
\begin{align*}
    A(E_S^0, w,\phi) \leq \frac{c(\rho)}{2} \left\{\min_{x \in \cX_0 \setminus \{x^*_0\}}\bP^\mathrm{lazy}_0(x, \cN|_{\cX_0}^S(x))\right\}^{-1}.
\end{align*}
For any $x \in \cX_0 \setminus \{x^*_0\}$, define 
 $$\hat{x} = \arg \max_{\,x' \in \cN|_{\cX_0}^S(x)}\pi(x').$$ 
The definition of  $\R|_{\cX_0} $ implies that $\pi(\hat{x})/ \pi(x) \geq  \R|_{\cX_0} \geq \N$. Hence,  
$$\bP^\mathrm{lazy}_0(x, \cN|_{\cX_0}^S(x)) \geq \bP^\mathrm{lazy}_0(x, \hat{x})  \geq 1/(2M).$$ 
Application of Proposition~\ref{prop:restricted_gap_bound} yields the conclusion. 
\end{proof}

\begin{proof}[Proof of Theorem~\ref{th:mix-imh-approx}]
As in the proof of Theorem~\ref{th:mix-rwmh-approx}, we use Lemma~\ref{lm:yves} to get  
\begin{align*}
    \tmix_{x_0}(\bP_h^\mathrm{lazy}, \epsilon) \leq \frac{ 1 }{ \mathrm{Gap}_{\cX_0} (\bP_h^\mathrm{lazy}) } \log \left\{\frac{1}{2 \epsilon^2 \, \eta^2 } \right\},
\end{align*}
if $\pi(\cX_0) \geq 1 - \epsilon^2 \eta^2 / 5$. 
We use Lemma~\ref{lm:bounding_A} with $S = L/\N$, and the bound given in~\eqref{eq:final_bound_restricted} becomes 
\begin{align*}
A(E_S^0, w,\phi) \leq \frac{c(\Tilde{\rho})}{2} \cdot  \left\{\min_{x \in \cX_0 \setminus \{x^*_0\}}\bP^\mathrm{lazy}_h(x, \cN|_{\cX_0}^S(x))\right\}^{-1}.
\end{align*}
As in the proof of Lemma~\ref{lm:proposal-prob}, we can use Lemma~\ref{lm:accept-prob} to show that any proposal from $x$ to $x' \in \cN|_{\cX_0}^S(x)$ has acceptance probability equal to $1$. Hence, 
\begin{equation}
    \bP_h^\mathrm{lazy}(x, \cN|_{\cX_0}^S(x)) = \bK_h^\mathrm{lazy}(x, \cN|_{\cX_0}^S(x)) = \frac{\bK_h^\mathrm{lazy}(x, \cN|_{\cX_0}^S(x)) }{\bK_h^\mathrm{lazy}(x, \cX_0 ) } \bK_h^\mathrm{lazy}(x, \cX_0 ) \geq \frac{1}{4} \bK_h^\mathrm{lazy}(x, \cX_0 ),
\end{equation}
where the last inequality follows from Lemma~\ref{lm:proposal-prob} (note that an additional factor of $1/2$ arises from the laziness of the chain).   
By the condition given in~\eqref{eq:mix-imh-approx-condition}, $\bP_h^\mathrm{lazy}(x, \cN|_{\cX_0}^S(x)) \geq 1/8$. 
The conclusion then follows from Proposition~\ref{prop:restricted_gap_bound}.  
\end{proof}

\section{A slow mixing example with $\R < \N$}\label{sec:examples} 

Let $[p] = \{1, 2, \dots, p\}$. Define the state space by 
\begin{equation*}
    \mathcal{X} =
\bigcup\nolimits_{k = 0}^{ 2^p } [p]^k,  
\end{equation*}
where we set $[p]^0 = \emptyset$.  
The size of the state space is $$|\mathcal{X}| =  \sum_{k=0}^{  2^p  } p^k  = \frac{ p^{  2^p  +1} - 1}{p-1}. $$
For $x = (x_1, \dots, x_k) \in [p]^k$ with $0 < k <2^p$, its neighborhood is defined by 
\begin{equation*}
   \cN(x) = \left(\bigcup_{\ell = 1}^{p} \{(x_1, \dots, x_{k}, \ell)\} \right) \cup \{(x_1, \dots, x_{k-1})\}; 
\end{equation*}
In other words, $\cN(x)$ consists of  all states  obtained by either appending an element in $[p]$ to $x$ or removing the last element $x_k$. 
When $k = 2^p$, define $\cN(x) = \{(x_1, \dots, x_{k - 1})\}$, and when $k = 0$, define $\cN(\emptyset) = [p]$. 
Clearly, we have $M = \max |\cN(x)| = p + 1$.  
Let the target distribution be given by 
\begin{equation*}
    \pi(x) \propto  p^{-k}, \text{ if } x \in [p]^k,  
\end{equation*} 
which yields $x^* = \emptyset$ and $R = p$; thus $R$ is slightly less than $M$. 
Define $\mathcal{A}_j = \bigcup_{k = j}^{2^p} [p]^k$. It is easy to show that  
\begin{align*}
    \pi(\mathcal{A}_j) = \frac{ 2^p - j + 1 }{2^p + 1 }. 
\end{align*} 
Now suppose that the Markov chain  starts at $x^* = \emptyset$. 
Since $\mathbf{P}^t(\emptyset, \mathcal{A}_{t + 1} ) = 0$,  in order that  $||\mathbf{P}^t(0, \cdot) - \pi ||_{\mathrm{TV}}  \leq 1/4$ we need $\pi( \mathcal{A}_{t + 1} ) \leq 1 / 4$. But this condition implies that 
\begin{equation*}
    \frac{ 2^p - (t + 1) + 1 }{2^p + 1 } \leq \frac{1}{4} \; \Longrightarrow \;  t \geq \frac{3}{4}(2 ^p + 1) - 1.  
\end{equation*}
This proves that the mixing time grows exponentially  in $p$.

\section{Corollaries for variable selection}\label{sec:cor-settings-vs}

\subsection{Model and assumptions}\label{subsec:cor-vs}  
We consider the Bayesian variable selection model studied in~\cite{yang2016computational}, which assumes  
\begin{align*} 
    \by  = \bX_\delta \bbeta_\delta + \bz,  \quad \bz \sim \mathrm{MVN}(0, \phi^{-1} I_n),
\end{align*} 
and uses the following prior distribution on $(\bbeta, \phi, \delta)$: 
\begin{equation}\label{eq:prior}
    \begin{aligned}
    & \bbeta_\delta  \mid \delta \sim \mathrm{MVN}(0, g\phi^{-1} (\bX_\delta^{\T}\bX_\delta)^{-1} ),\\
    & \pi_0(\phi)  \propto \phi^{-1}, \\
    & \pi_0(\delta)  \propto p^{-\kappa ||\delta||_1}.
    \end{aligned}
\end{equation}
By integrating out the model parameters $(\bbeta_\delta, \sigma^2)$, we obtain the marginal posterior distribution $\pi(\delta)$ given in~\eqref{eq:post}:  
\begin{align*}
\quad \pi(\delta) \propto \frac{1}{p^{\kappa ||\delta||_1}} \cdot \frac{(1+g)^{-(||\delta||_1 / 2)}}{\left\{1+g\left( 1 - \mathsf{r}^2(\delta) \right)\right\}^{n / 2}},  
\end{align*}
where we recall $\mathsf{r}^2(\delta)= \by^\T (\bX_\delta(\bX_\delta^\T \bX_\delta)^{-1} \bX_\delta^\T) \by / (\by^\T \by)$. 
We still use $s$ to denote the maximum model size on the restricted search space $\cV_s$.  
Let $\cV, \cV_s$ be equipped with the neighborhood relation $\cN_{\rm{ads}}$ and consider the corresponding MH algorithms, which are also known as the add-delete-swap samplers.

Let the true data-generating model be 
\begin{equation}
    \by \sim \mathrm{MVN}( \bX \bbeta^*,  \sigma_0^2 \bI ), 
\end{equation}
where $\bbeta^*$ is the true regression coefficient vector and $\sigma_0^2$ is the true error variance.   
Consider a high-dimensional setting where $n, p, s$ all tend to infinity. 
Below are the assumptions made in~\cite{yang2016computational}  about the true data-generating process, design matrix and prior parameters.  
Note that  to simplify the mixing time bounds to be derived, we make one modification by letting the parameter $\alpha$ in (A3)  be fixed. 
Assumption (A5) is the beta-min condition used in Theorem~2 of~\cite{yang2016computational}.

\begin{enumerate}[label=(A\arabic*)]
    \item Let $S = \{ j \in [p] \colon |\bbeta^*_j| \geq C_{\bbeta} \}$ be the ``influential'' variables, where $C_{\bbeta} > 0$ denotes the minimal signal size.  
Then $\bbeta^*$ satisfies  
\begin{align*}
 \frac{1}{\sqrt{n}} \|\bX \bbeta^*\|_2^2 \leq g \sigma_0^2 \frac{\log p}{n}, 
 \quad \frac{1}{\sqrt{n}} \|\bX_{S^c} \bbeta^*_{S^c}\|_2^2 \leq \tilde{\mathcal{L}} \sigma_0^2 \frac{\log p}{n},  
\end{align*}
for some universal constant \( \tilde{\mathcal{L}} > 0 \).
\item    The design matrix has normalized columns, that is, $\|\bX_j\|_2^2 = n$ for  $j \in [p]$. Letting \( Z \sim \mathrm{MVN}(0, I_n) \), there exist constants \( \nu = \nu(n) > 0 \) and \( \mathcal{L} = \mathcal{L}(n) < \infty \) such that
\begin{align*}
    \min_{||\delta||_1 \leq s} \lambda_{\min} \left( \frac{1}{n} \bX_{\delta}^T \bX_{\delta} \right) &\geq \nu,  \\ 
     \mathbb{E}_Z \left[ \max_{||\delta||_1 \leq s} \max_{k \in [p] \setminus \delta} \frac{1}{\sqrt{n}} 
\left| \langle (I - \Phi_{\delta}) \bX_k, Z \rangle \right| \right] &\leq \frac{1}{2} \sqrt{\mathcal{L} \nu \log p}.
\end{align*}
\item The prior parameters satisfy that $\kappa \geq 2$,  $g \asymp p^{2\alpha}$ for some universal constant $\alpha \geq 1/2$, and 
\begin{equation}\label{eq:vs-prior-cond}
 \kappa + \alpha \geq 4(\mathcal{L} + \tilde{\mathcal{L}} ) + 2.   
\end{equation} 
\item      The sparsity parameter \( s \) satisfies 
\[ 
\max \left\{ 1, (2 \nu^{-2} \omega(\bX) + 1) s^* \right\} \leq s \leq \frac{1}{32} \left\{ \frac{n}{\log p} - 8 \tilde{\mathcal{L}}   \right\}, 
\]
where $s^* = |S|$ is the true sparsity and $\omega(\bX) = \max_{\delta \in  \cV_s } \| (\bX_{\delta}^T \bX_{\delta})^{-1} \bX_{\delta}^T \bX_{\delta^* \setminus \delta} \|_{\text{op}}^2.$  
\item The parameter $C_{\bbeta}$  defined in (A1) satisfies 
\begin{equation*} 
C_{\bbeta}^2  \geq \frac{ 
128 (\alpha + \kappa + \mathcal{L}) \sigma_0^2 \log p}{\nu^2 n }. 
\end{equation*} 
\end{enumerate}

To prove mixing time results for informed MH samplers, we consider a stronger version of Assumption (A3): 
\begin{enumerate}[label=(A\arabic*'), start=3]
\item  The prior parameters satisfy that $\kappa \geq 2$,  $g \asymp p^{2\alpha}$ for some universal constant $\alpha \geq 1/2$, and  
\begin{equation}
    \kappa + \alpha \geq 4 (\mathcal{L}+ \tilde{\mathcal{L}} + 1)  +  2 \xi \text{ for some universal constant } \xi > 0. 
\end{equation} 
\end{enumerate}

\subsection{Proofs}\label{sec:proof-vs}

\begin{proof}[Proof of Corollary~\ref{cor:vs_thm1}]
This is the setting of Theorem 2 in~\cite{yang2016computational}, where the authors obtain a mixing time upper bound of the form
\begin{equation}\label{eq:tmix-bound-YWJ}
    \tau(\mathsf{P}_0^{\text{lazy}}, \epsilon) \leq 12p s^2 \left\{ (\alpha n + \alpha s + 2 \kappa s) \log p + \log \epsilon^{-1} + \log 2 \right\}. 
\end{equation}
We show that Theorem~\ref{th:rwmh-1} can be used to improve this bound. 
Lemma~4 in \cite{yang2016computational} proves that with probability $1 - O(p^{-1})$, we have \( R > p^2 \), where $R$ is the parameter in our unimodal condition. Since the maximum neighborhood size is bounded by $M = p(s + 1)   < R$, by Theorem~\ref{th:rwmh-1}, 
\begin{align*}
    \tau(\mathsf{P}_0^{\text{lazy}}, \epsilon) = O_p \left( M \log \frac{1}{\epsilon \pimin } \right) = O_p \left( p s \log \frac{1}{\epsilon \pimin } \right), \text{ where } \pimin = \min_{\delta \in \cV_s} \pi_n(\delta). 
\end{align*}
In the proof of Theorem 2 in~\cite{yang2016computational}, it is shown that 
\begin{equation}\label{eq:vs-pmin}
    \pimin \geq  \frac{1}{2} p^{-(\kappa + \alpha/2)s} p^{-\alpha n /2}. 
\end{equation}
Hence, treating $\epsilon$ as a fixed constant 
\begin{equation*}
        \tau(\mathsf{P}_0^{\text{lazy}}, \epsilon)   = O_p \left( p s   ( \alpha n + \alpha s + 2 \kappa s )\log p  \right), 
\end{equation*}
which improves~\eqref{eq:tmix-bound-YWJ} by a factor of $s$.  The asserted bound follows by omitting the dependence on $\alpha$. 
\end{proof}

\begin{proof}[Proof of Corollary~\ref{cor:vs_thm3}]
By assuming $\alpha + \kappa \geq 4 (\mathcal{L}+ \tilde{\mathcal{L}} + 1) + 2 \xi$ in Assumption (A3),   the conclusion of  Lemma~4 in \cite{yang2016computational} can be strengthened to  $R > p^{4 + 2\xi}$. The proof is essentially the same, and a sketch of proof is provided in  Section S3 in~\cite{zhou2022dimension}. 
Since $L = \ell^{2 + \xi} \leq p^{4 +2 \xi}$, the conditions of Theorem~\ref{th:mix-imh} are satisfied. Hence, applying Theorem~\ref{th:mix-imh} we get 
\begin{align*}
    \tau(\mathsf{P}_h^{\text{lazy}}, \epsilon) = O_p \left(  \log \frac{1}{\epsilon \pimin } \right) = O_p \left(  \log \frac{1}{\epsilon \pimin } \right)  = O_p \left(    (  n +   \kappa s ) \log p  \right), 
\end{align*} 
where we have used~\eqref{eq:vs-pmin}.  

To apply Theorem~\ref{th:mix-imh-drift}, it remains to verify the additional condition~\eqref{eq:drift_pimin_condition}. By~\eqref{eq:vs-pmin}, this condition holds if $n = o( (ps)^\xi )$. The resulting mixing time bound is 
\begin{align*}
    \tau(\mathsf{P}_h^{\text{lazy}}, \epsilon) = O_p \left(  \frac{1}{\log (L / M)} \log \frac{1}{\epsilon \pimin } \right)    = O_p \left(  \frac{  n +   \kappa s }{1 + \xi}    \right). 
\end{align*}  
Omitting the dependence on $\xi$, we obtain the asserted bound. 
\end{proof}

\begin{proof}[Proof of Corollary~\ref{cor:vs_thm4}] 
We apply Theorem~\ref{th:mix-rwmh-approx} with $\cX = \cV$ and $\cX_0 = \cV_s$. Since the bound on $R|_{\cV_s}$ has already been addressed in the proof  of Corollary 1,  it only remains to bound the posterior mass on $\cV \setminus \cV_s$ by 
\begin{align*}
    \pi_n(\cV \setminus \cV_s  ) = \frac{
\sum_{\delta \in \cV \setminus \cV_s}
\int_{\theta, \phi} \frac{d\mathbb{P}_{\bbeta, \phi, \delta}}{d\mathbb{P}_0}(\by) \,
\pi_0(d\theta, d\phi, \delta)
}{
\sum_{\delta \in \{0,1\}^p}
\int_{\theta, \phi} \frac{d\mathbb{P}_{\bbeta, \phi, \delta}}{d\mathbb{P}_0}(\by) \,
\pi_0(d\theta, d\phi, \delta)
} \leq \frac{\epsilon^2\eta^2}{5}, 
\end{align*}
where $\mathbb{P}_{\bbeta, \phi, \delta}$ denotes the probability measure where $\by$ is generated from the linear regression model $\delta$ with regression coefficient vector $\bbeta$ and error variance $\phi^{-1}$, 
$\mathbb{P}_0$ denotes the true data-generating model with regression coefficient vector $\bbeta^*$ and error variance $\sigma_0^2$, and $\pi_0$ denotes the prior distribution.  
To this end, we calculate  an upper bound for the numerator, and a lower bound for the denominator by mimicking the proof arguments of Theorem 1 in~\cite{yang2016computational}.

For the numerator,  we use the fact 
\begin{align*}
 & \mathbb{E}_0 \left[ \sum_{\delta \in \cV \setminus \cV_s} \int_{\theta, \phi} \frac{d\mathbb{P}_{\bbeta, \phi, \delta}}{d\mathbb{P}_0}(\by) \,
\pi_0(d\theta, d\phi, \delta)\right] 
 = \pi_0(\cV \setminus \cV_s )   \\
& = \frac{1}{(1 + p^{-\kappa})^p} 
\sum_{t =  s+1 }^{p} \binom{p}{t} p^{-\kappa t} 
\leq \sum_{t =  s+1 }^{p}  p^{- (\kappa-1) t} \leq 2p^{-(\kappa - 1)(s+1)}, 
\end{align*}
where we have used $\kappa \geq 2$ to ensure the convergence of the geometric series. Hence, by Markov's inequality, 
\begin{align*}
    \mathbb{P}_0 \left[
\sum_{\delta \in \cV \setminus \cV_s }
\int_{\theta, \phi}
\frac{d\mathbb{P}_{\bbeta, \phi, \delta}}{d\mathbb{P}_0}(\by) \,
\pi_0(d\theta, d\phi, \delta)
\leq p^{-(\kappa-1)(s+1) + 1}
\right] \geq 1- O(p^{-1}).
\end{align*}

For the denominator, we can use the bound given in equation (39) in~\cite{yang2016computational}, which states that with high probability, 
\begin{align*}
    \sum_{\delta \in \{0,1\}^p}
\int_{\theta, \phi} \frac{d\mathbb{P}_{\bbeta, \phi, \delta}}{d\mathbb{P}_0}(\by) \,
\pi_0(d\theta, d\phi, \delta) \geq \int_{\theta, \phi}
\frac{d\mathbb{P}_{\bbeta, \phi, \delta^*}}{d\mathbb{P}_0}(\by) \,
\pi_0(d\theta, d\phi, \delta^*) \geq C p^{- (\kappa + \alpha + \tilde{\mathcal{L}} + 1)} 
\end{align*}
for some universal constant $C > 0 $. Since $\alpha, \tilde{\mathcal{L}}$ are universal constants, for sufficiently large $s$,  
\begin{align*}
    \pi(\cV \setminus \cV_s) =  O(p^{-(\kappa - 1)s + \alpha + \tilde{\mathcal{L}} +3}) \leq p^{-\kappa s / 2} 
\end{align*} 
holds with high probability. 

Apply Theorem~\ref{th:mix-rwmh-approx} with $\eta^2 = 5 p^{-\kappa s / 2} / \epsilon^2$ to conclude the proof. 
\end{proof}

\begin{proof}[Proof of Corollary~\ref{cor:vs_thm5}]  
The proof is analogous to that of Corollary 5. Apply Theorem~\ref{th:mix-imh-approx} with $\cX = \cV$ and $\cX_0 = \cV_s$.  
\end{proof}

\section{Corollaries for structure learning} \label{sec:cor-settings-sl}
 
\subsection{Model and assumptions}\label{subsec:cor-sl}

We first describe the notation and  setting for the Bayesian structure learning problem considered in~\cite{zhou2021complexity}.
Let \( X_{(i)} \) denote the \( i \)-th row of the data matrix \( X \). We model the conditional distribution of \( X \) given a DAG \( G \) and its Markov equivalence class (MEC) \( \mathcal{E}\) by
\begin{align*}
    X_{(1)}, \dots, X_{(n)}    \mid B, \Omega &\overset{\text{i.i.d.}}{\sim} \mathrm{MVN}(0, \Sigma(B, \Omega)), \\
    \Sigma(B, \Omega) &= (I - B^{\top})^{-1} \Omega (I - B)^{-1}, \\
    (B, \Omega) \mid G &\sim \pi_0(B, \Omega \mid G), \quad \forall (B, \Omega) \in \mathcal{D}_p(G), \\
    \pi_0(\mathcal{E}, G) & \propto   p^{- \kappa |G|} \pi_0(G \mid \mathcal{E})  \mathbf{1}_{\mathcal{C}_p(d_{\text{in}}, d_{\text{out}})}(\mathcal{E}), 
\end{align*} 
where $\kappa > 0$  is a hyperparameter, $|G|$ denotes the number of edges in $G$, 
\begin{align*}
    \mathcal{D}_p(G) &= \{ (B, \Omega) : B \in \mathbb{R}^{p \times p}, B_{ij} = 0 \text{ whenever the edge } i \to j \text{ is not in } G, \\
    &\Omega = \text{diag}(\omega_1, \dots, \omega_p) \text{ where } \omega_i > 0 \text{ for any } i \in [p] \}, 
\end{align*} 
and  $\mathcal{C}_p(d_{\text{in}}, d_{\text{out}})$ is the space of ``sparse MECs''. An MEC $\mathcal{E}$ is in $\mathcal{C}_p(d_{\text{in}}, d_{\text{out}})$ if and only if $\mathcal{E}$ contains a member DAG $G \in \mathcal{G}_p(d_{\text{in}}, d_{\text{out}})$, where   $\mathcal{G}_p(d_{\text{in}}, d_{\text{out}})$ is the space of all \( p \)-vertex DAGs with node in-degree bounded by $d_{\text{in}}$ and node out-degree bounded by $ d_{\text{out}}$.  We use $\mathcal{G}_p, \mathcal{C}_p$ to denote the corresponding unrestricted spaces. 
The prior distribution \( \pi_0(G \mid \mathcal{E}) \) satisfies that $\pi_0(G \mid \mathcal{E}) = 0$ if $G$ is not a member DAG  in $\mathcal{E}$. In~\citet{zhou2021complexity}, the term $p^{- \kappa |G|} $ in $\pi_0(\mathcal{E}, G)$ is replaced with $(c_1 p^{c_2})^{- |G|} $; we  omit $c_1$ and change the notation to align with that used in the variable selection problem. 
The conditional distribution of \( X \) given \( (B, \Omega) \) can also be expressed by the structural equation model (SEM),
\[
X_j = \sum_{i \neq j} B_{ij} X_i + \varepsilon_j, \quad \varepsilon_j \sim \mathrm{MVN}(0, \omega_j I),
\]
where \( \bbeta_j(G) \) is the subvector of the \( j \)-th column of \( B \) with entries indexed by  \( \text{Pa}_j(G) \), the parent set of node $j$;  the child set will be denoted by \( \text{Ch}_j(G) \). 
Let \( \pi_0(B, \Omega \mid G) \) be the empirical Bayes prior given by
\begin{align*}
& \pi_0(B, \Omega \mid \text{Pa}_1(G) = S_1, \dots, \text{Pa}_p(G) = S_p) 
\propto \\ \prod_{j=1}^{p} & \omega_j^{- \tilde{\kappa} /2 - 1} 
\mathrm{MVN}  \Bigg( \bbeta_j(G); (X_{S_j}^{\top} X_{S_j})^{-1} X_{S_j}^{\top} X_j, 
\frac{\omega_j}{\gamma} (X_{S_j}^{\top} X_{S_j})^{-1} \Bigg), 
\end{align*}
where $\gamma, \tilde{\kappa} > 0$ are hyperparameters.\footnote{
In~\citet{zhou2021complexity}, $\tilde{\kappa}$ is denoted by $\kappa$. 
}
By using the conjugacy of normal-inverse-gamma prior and  marginalizing out $G$ using the $\alpha$-fractional likelihood for some $\alpha \in (0, 1)$, we obtain the posterior distribution on $\mathcal{E}$: 
\[
\pi_n(\mathcal{E}) \propto e^{\psi(\mathcal{E})} \mathbf{1}_{\mathcal{C}_p(d_{\text{in}}, d_{\text{out}})}(\mathcal{E}),
\]
where $\psi(\mathcal{E}) = \sum_{j=1}^{p} \psi_j(\text{Pa}_j(G))$ for $ G \in \mathcal{E}$  with 
$$  e^{\psi_j(S)} =  p^{-\kappa |S|} (1 + \alpha \gamma^{-1})^{-|S|/2} \allowbreak
\left\{ X_j^{\top}  (I - X_S (X_S^{\top} X_S)^{-1} X_S^{\top}) X_j \right\}^{-(\alpha n + \tilde{\kappa})/2}.
$$

\noindent
\textbf{Neighborhood relation.} 
Define the neighborhood $\mathcal{N}_{\text{ads}}(G)$ on \( \mathcal{G}_p  \) as the union of three distinct neighborhoods corresponding to different types of edge modifications (addition, deletion, and swap operations). That is, $\mathcal{N}_{\text{ads}}(G) = \mathcal{N}_{\text{add}}(G) \cup \mathcal{N}_{\text{del}}(G) \cup \mathcal{N}_{\text{swap}}(G),
$ where  
\begin{align*}
    \mathcal{N}_{\text{add}}(G) &= \{ G' \in \mathcal{G}_p : G' = G \cup \{i \to j\} \text{ for some } i \to j \notin G \} ,\\
\mathcal{N}_{\text{del}}(G) &= \{ G' \in \mathcal{G}_p : G' = G \setminus \{i \to j\} \text{ for some } i \to j \in G \} ,\\
\mathcal{N}_{\text{swap}}(G) &= \{ G' \in \mathcal{G}_p : G' = (G \cup \{k \to j\}) \setminus \{\ell \to j\} \text{ for some } k \to j \notin G, \ell \to j \in G \}.
\end{align*} 
Define the neighborhood relation $\mathcal{N}_{\text{ads}}(\mathcal{E})$ on  \( \mathcal{C}_p  \) by  
$$\mathcal{N}_{\text{ads}}(\mathcal{E}) = \{  \mathcal{E}': 
\mathcal{E} \text{ has a member DAG } G \text{ and } \mathcal{E}' \text{ has a member DAG } G' \text{s.t.} G' \in \mathcal{N}_{\text{ads}}(G)  \}.$$ 
Our mixing time analysis considers MH algorithms on the restricted space  $( \mathcal{C}_p(d_{\text{in}}, d_{\text{out}}), \mathcal{N}_{\text{ads}}). $

\medskip 
\noindent
\textbf{True data-generating model.} Let $G^*$ and $\Sigma^* = \Sigma(B^*, \Omega^*)$ be the true DAG model and the true covariance matrix, respectively; assume $\Sigma^*$ is non-degenerate and $\mathrm{MVN}(0, \Sigma^*)$ is perfectly Markovian w.r.t.  $G^*$.      
We denote by $\bbS^p$ the set of all bijections from $[p]$ to $[p]$. An element $\sigma \in \bbS^p$ is said to be an ordering for a DAG $G$ if the following holds: for any $k < l$, the edge between the nodes $\sigma(k)$ and $\sigma(l)$, if it exists in $G$, is  directed as $\sigma(k) \rightarrow \sigma(l)$.   
Let $\mathcal{G}_p^\sigma = \{ G \in \mathcal{G}_p  \colon  \sigma \text{ is an ordering of } G   \}$. 
For any $\sigma \in \bbS^p$, define $(B_\sigma^*, \Omega_\sigma^*)$ to be the unique pair in $\mathcal{D}_p(\sigma) = \cup_{G \in \mathcal{G}_p^\sigma} \mathcal{D}_p(G)$  such that  
$$(I - (B_\sigma^*)^\top)^{-1} \Omega_\sigma^* (I - B_\sigma^*)^{-1} = \Sigma^*.$$ 
The DAG with weighted adjacency matrix $B_\sigma^*$ is called the minimal I-map of $G^*$ with respect to $\sigma$, which we denote by $G_\sigma^*$; that is, an edge $i \rightarrow j$ is in $ G_\sigma^*$ if and only if $(B_\sigma^*)_{ij} \neq 0$. 

\medskip 
\noindent
\textbf{Assumptions for high-dimensional analysis.} 
Below are the high-dimensional assumptions considered in Theorem 5 of~\citet{zhou2021complexity}, with slight modifications made to simplify the notation and the mixing time bounds to be derived. 
The model parameters are all implicitly indexed by $n$, and we assume $n, p, \din, \dout$ all tend to infinity. 

\begin{enumerate}[label=(B\arabic*)]
\item     There exist a constant $\nu > 1$  and a universal constant $\delta_0 > 0$ such that  
\begin{align*}
0 < \frac{1}{\nu (1 - \delta_0)^2} \leq \lmin(\Sigma^*) \leq \lmax(\Sigma^*) \leq  \frac{\nu}{(1 + \delta_0)^2}, 
\end{align*} 
where $\lmin, \lmax$ denote the smallest and largest eigenvalues, respectively.  
\item Define
\begin{equation}
d^*_{\mathrm{in}} = \max_{\sigma \in \bbS^p} \max_{j \in [p]} |\Pa_j (G^*_\sigma)  |, \quad 
    d^* = \max_{\sigma \in \bbS^p} \max_{j \in [p]} |\Pa_j (G^*_\sigma) \cup \Ch_j(G^*_\sigma)|.  
\end{equation}   
The sparsity parameters $\din$ and $\dout$ satisfy 
\begin{align*}
   & \max\left\{ \left(4   \nu^8 + 1\right)  d^*_{\mathrm{in}}, \;  d^*\right\}  \leq \din \leq \dout = o \left( \frac{n}{\log p} \right), \\ 
    & d^* \din + 1  \leq \dout \leq \frac{1}{2} t_0 \log_2 p, \text{ for some universal constant } t_0 >  0.  
\end{align*} 
\item  Prior parameters satisfy $\alpha \in (0, 1)$,  $\tilde{\kappa} \in (0, n)$, $1 \leq   \sqrt{ 1 + \alpha / \gamma} \leq p$,  and  
$$\kappa \geq \max\{ (\alpha + 1 )(4 \din + 6) + t_0 + 3, \; \log \nu \}.$$ 
\item   The weighted adjacency matrices $(B_\sigma^*)_{\sigma \in \bbS^p}$ satisfy 
\begin{equation}\label{eq:beta.min} 
\max_{\sigma \in \bbS^p} \min \left\{  | (B_\sigma^*)_{ij} |^2 \colon  (B_\sigma^*)_{ij} \neq 0   \right\} 
    \geq   35 \kappa  \frac{  \vmax^2 \log p }{\alpha \vmin^2 n}.
\end{equation}  
\end{enumerate}

For informed MH samplers, we will need the following strengthened version of (B3):

\begin{enumerate}[label=(B\arabic*'), start = 3]
\item  Prior parameters satisfy $\alpha \in (0, 1)$,  $\tilde{\kappa} \in (0, n)$, $1 \leq   \sqrt{ 1 + \alpha / \gamma} \leq p$,  and 
$$\kappa \geq  \max\left\{ (\alpha + 1 )(4 \din + 6) + (2 + t_0  )(2 + \xi) + 1, \; \log \nu \right\}$$
for some universal constant $\xi > 0$. 
\end{enumerate}

\subsection{Proofs}\label{sec:proof-sl}

\begin{proof}[Proof of Corollary~\ref{cor:sl_thm1}]
This result is essentially Theorem 6 in~\citet{zhou2021complexity}. 
To prove it, we apply Theorem~\ref{th:rwmh-1}. 
By Section F.4 in~\cite{zhou2021complexity},  we have, for sufficiently large $p$, 
\begin{equation}\label{eq:M-sl}
      \max_{\mathcal{E} \in \mathcal{C}_p(d_\mathrm{in},  d_{\mathrm{out}})}|\cN_\mathrm{ads}(\mathcal{E} )| \leq M = 3 t_0 p^{t_0 + 2} \log_2 p \leq p^{t_0 + 3},
\end{equation} 
and the unimodal condition parameter $R > p^{t_0 + 3}$ with probability $1 - O(p^{-1})$. Hence,  
\begin{align*}
    \tau(\mathsf{P}_0^{\text{lazy}}, \epsilon) = O_p \left(  p^{t_0 + 2} \log p \log \frac{1}{\epsilon \pimin } \right), \text{ where } \pimin =  \min_{\mathcal{E} \in \mathcal{C}_p(d_\mathrm{in}, d_{\mathrm{out}})} \pi_n(\mathcal{E}).  
\end{align*} 
By Corollary~4 in \cite{zhou2021complexity} and Assumptions (B2) and (B3), 
\begin{align}
\log\left( \frac{1}{ \pimin } \right)
& \leq 
p(d_{\mathrm{in}} + d^*) \log \left(  p^{\kappa} \sqrt{1 + \alpha/\gamma} \right)
+ \frac{p(\alpha n + \tilde{\kappa})}{2} \log \left( 2 \nu^2 \right)
+ \log 2\\
& \leq 2 p d_{\mathrm{in}}  \log  p^{\kappa + 1}  
+  np  \log \left( 2 \nu^2 \right)  
+ \log 2  
 =  O( n p \kappa ) ,  \label{eq:pimin-sl}
\end{align} 
where in the last step we have used $\din \log p = o(n)$. 
\end{proof}

\begin{proof}[Proof of Corollary~\ref{cor:sl_thm3}]
By Section F.4 in~\cite{zhou2021complexity} and~\eqref{eq:M-sl},  we have, the unimodal condition holds with high probability with $R = p^{ (2 + t_0  )(2 + \xi) + 1} > M^{2 + \xi}$. 
Hence, we can apply Theorem~\ref{th:mix-imh} with 
\begin{equation}
    \ell = 3 t_0 p^{t_0 + 2} \log_2 p, \quad L = \ell^{2 + \xi}. 
\end{equation}  
which yields the first asserted bound. 

To apply Theorem~\ref{th:mix-imh-drift}, we only need to verify the additional condition~\eqref{eq:drift_pimin_condition}. But this follows from the assumption $n = o(p^{(t_0 + 2 )\xi - 1})$ and~\eqref{eq:pimin-sl}.  Hence, 
\begin{equation}
        \tau(\mathsf{P}_h^{\text{lazy}}, \epsilon) = O_p \left(  \frac{\log 1/ \pimin  }{\log \ell }  \right)    = O_p \left(   \frac{n p  \kappa }{\log p}   \right), 
\end{equation}
which concludes the proof. 
\end{proof}

\section{Simulation for variable selection with highly correlated design}\label{sec:supp_bvs} 
When the design matrix exhibits a high degree of collinearity, we expect that $\pi$ can be highly multimodal, and local modes may occur on $(\cV_s, \cN_1)$ even for small $s$. 
Hence, unlike in the moderately correlated setting where we only need to initialize the sampler at a sufficiently sparse model, we may need to impose much stronger assumptions on the initialization so that the samplers can quickly find the true model. 

To investigate this, we generate 100 replicates of $(\bX, \by)$ under the highly correlated setting described in Section~\ref{subsec:bvs}, where  the covariance matrix $\bm{\Sigma}$ satisfies $\bm{\Sigma}_{jk} = e^{-|j-k|/4}$ for $j \neq k$, and  $\bm{\Sigma}_{jj} = 1$ for $j \in [p]$.  
We consider two initialization schemes studied in~\citet{atchade2021approximate}, and denote them by $\delta^{\mathrm{bad}}$ and $\delta^{\mathrm{good}}$. 
Both $\delta^{\mathrm{bad}}$ and $\delta^{\mathrm{good}}$ include 50 randomly sampled variables that are not in $\delta^*$ (i.e., false positives), and for each $j \in [5]$, $\delta^{\mathrm{good}}_j = 1$ while $\delta^{\mathrm{bad}}_j = 0$. Hence, $\delta^{\mathrm{bad}}$ satisfies $||\delta^{\mathrm{bad}}||_1 = 50$ and has 5 false negatives, and  $\delta^{\mathrm{good}}$ satisfies $||\delta^{\mathrm{good}}||_1 = 55$ with no false negative. 
Table~\ref{table:init_bvs} summarizes the results. When started at $\delta^{\mathrm{bad}}$, both samplers fail to find $\delta^*$ in most replicates, and IMH is still more sensitive to nearby local modes as in the simulation study with moderately correlated design. 
When started at  $\delta^{\mathrm{good}}$, both algorithms recover $\delta^*$ quickly, and IMH has a slightly higher success rate than RWMH.  
This result suggests that $\pi$ is probably highly multimodal among models with false negatives (recall that in Example~\ref{ex:vs3}, the posterior probability can decrease when we add some variable that is in the true model), and the identification of $\delta^*$ requires a warm start that already includes $\delta^*$ as a submodel. 

\begin{table}[!h]
\centering
\caption{
Results for 100 replicates with highly correlated design.  
}
{
\footnotesize

\begin{tabular}{ccccc}
\hline
&\multicolumn{2}{c}{RWMH} & \multicolumn{2}{c}{IMH} \\
Initialization & $\delta^{\mathrm{bad}}$ & $\delta^{\mathrm{good}}$ &$\delta^{\mathrm{bad}}$ & $\delta^{\mathrm{good}}$ \\
\hline

Success   & 27 & 95 & 3 &100\\
$H_\mathrm{true}$ & -- & 2198 & -- & 51\\
Time  & 13.2 & 13.5 & 13.1 &19.6\\
$T_\mathrm{true}$ & -- & 9.5 & -- & 8.7\\
\hline
\end{tabular}
}
\label{table:init_bvs}
\end{table}

\end{appendix}

\bibliographystyle{plainnat}
\bibliography{reference_arxiv}

\begin{thebibliography}{73}
\providecommand{\natexlab}[1]{#1}
\providecommand{\url}[1]{\texttt{#1}}
\expandafter\ifx\csname urlstyle\endcsname\relax
  \providecommand{\doi}[1]{doi: #1}\else
  \providecommand{\doi}{doi: \begingroup \urlstyle{rm}\Url}\fi

\bibitem[Anil~Kumar and Ramesh(2001)]{anil2001coupling}
VS~Anil~Kumar and Hariharan Ramesh.
\newblock Coupling vs. conductance for the {J}errum--{S}inclair chain.
\newblock \emph{Random Structures \& Algorithms}, 18\penalty0 (1):\penalty0
  1--17, 2001.

\bibitem[Ascolani and Zanella(2024)]{ascolani2024dimension}
Filippo Ascolani and Giacomo Zanella.
\newblock Dimension-free mixing times of {G}ibbs samplers for {B}ayesian
  hierarchical models.
\newblock \emph{The Annals of Statistics}, 52\penalty0 (3):\penalty0 869--894,
  2024.

\bibitem[Ascolani et~al.(2024)Ascolani, Roberts, and
  Zanella]{ascolani2024scalability}
Filippo Ascolani, Gareth~O Roberts, and Giacomo Zanella.
\newblock Scalability of {M}etropolis-within-{G}ibbs schemes for
  high-dimensional {B}ayesian models.
\newblock \emph{arXiv preprint}, 2024.
\newblock \doi{10.48550/arXiv.2403.09416}.

\bibitem[Atchad{\'e}(2021)]{atchade2021approximate}
Yves~F Atchad{\'e}.
\newblock Approximate spectral gaps for {M}arkov chain mixing times in high
  dimensions.
\newblock \emph{SIAM Journal on Mathematics of Data Science}, 3\penalty0
  (3):\penalty0 854--872, 2021.

\bibitem[Belloni and Chernozhukov(2009)]{belloni2009computational}
Alexandre Belloni and Victor Chernozhukov.
\newblock On the computational complexity of {MCMC}-based estimators in large
  samples.
\newblock \emph{The Annals of Statistics}, 37(4):\penalty0 2011--2055, 2009.

\bibitem[Bhattacharya and Jones(2023)]{bhattacharya2023explicit}
Riddhiman Bhattacharya and Galin~L Jones.
\newblock Explicit constraints on the geometric rate of convergence of random
  walk {M}etropolis-{H}astings.
\newblock \emph{arXiv preprint arXiv:2307.11644}, 2023.

\bibitem[Castillo and Ro{\v{c}}kov{\'a}(2021)]{castillo2021uncertainty}
Ismael Castillo and Veronika Ro{\v{c}}kov{\'a}.
\newblock Uncertainty quantification for {B}ayesian {CART}.
\newblock \emph{The Annals of Statistics}, 49\penalty0 (6):\penalty0
  3482--3509, 2021.

\bibitem[Chang et~al.(2022)Chang, Lee, Luo, Sang, and Zhou]{chang2022rapidly}
Hyunwoong Chang, Changwoo Lee, Zhao~Tang Luo, Huiyan Sang, and Quan Zhou.
\newblock Rapidly mixing multiple-try {M}etropolis algorithms for model
  selection problems.
\newblock \emph{Advances in Neural Information Processing Systems},
  35:\penalty0 25842--25855, 2022.

\bibitem[Chang et~al.(2023)Chang, Cai, and Zhou]{chang2022order}
Hyunwoong Chang, James Cai, and Quan Zhou.
\newblock Order-based structure learning without score equivalence.
\newblock \emph{Biometrika}, 2023.
\newblock \doi{https://doi.org/10.1093/biomet/asad052}.

\bibitem[Cheng et~al.(2018)Cheng, Chatterji, Bartlett, and
  Jordan]{cheng2018underdamped}
Xiang Cheng, Niladri~S Chatterji, Peter~L Bartlett, and Michael~I Jordan.
\newblock Underdamped {L}angevin {MCMC}: a non-asymptotic analysis.
\newblock In \emph{Conference on learning theory}, pages 300--323. PMLR, 2018.

\bibitem[Chewi et~al.(2021)Chewi, Lu, Ahn, Cheng, Le~Gouic, and
  Rigollet]{chewi2021optimal}
Sinho Chewi, Chen Lu, Kwangjun Ahn, Xiang Cheng, Thibaut Le~Gouic, and Philippe
  Rigollet.
\newblock Optimal dimension dependence of the {M}etropolis-adjusted {L}angevin
  algorithm.
\newblock In \emph{Conference on Learning Theory}, pages 1260--1300. PMLR,
  2021.

\bibitem[Cowles and Carlin(1996)]{cowles1996markov}
Mary~Kathryn Cowles and Bradley~P Carlin.
\newblock Markov chain {M}onte {C}arlo convergence diagnostics: a comparative
  review.
\newblock \emph{Journal of the American statistical Association}, 91\penalty0
  (434):\penalty0 883--904, 1996.

\bibitem[Dalalyan(2017)]{dalalyan2017theoretical}
Arnak~S Dalalyan.
\newblock Theoretical guarantees for approximate sampling from smooth and
  log-concave densities.
\newblock \emph{Journal of the Royal Statistical Society Series B: Statistical
  Methodology}, 79\penalty0 (3):\penalty0 651--676, 2017.

\bibitem[Diaconis and Shahshahani(1981)]{diaconis1981generating}
Persi Diaconis and Mehrdad Shahshahani.
\newblock Generating a random permutation with random transpositions.
\newblock \emph{Zeitschrift f{\"u}r Wahrscheinlichkeitstheorie und verwandte
  Gebiete}, 57\penalty0 (2):\penalty0 159--179, 1981.

\bibitem[Donoho(1997)]{donoho1997cart}
David~L Donoho.
\newblock {CART} and best-ortho-basis: a connection.
\newblock \emph{The Annals of Statistics}, 25\penalty0 (5):\penalty0
  1870--1911, 1997.

\bibitem[Duane et~al.(1987)Duane, Kennedy, Pendleton, and
  Roweth]{duane1987hybrid}
Simon Duane, Anthony~D Kennedy, Brian~J Pendleton, and Duncan Roweth.
\newblock Hybrid {M}onte {C}arlo.
\newblock \emph{Physics letters B}, 195\penalty0 (2):\penalty0 216--222, 1987.

\bibitem[Durmus and Moulines(2019)]{durmus2019high}
Alain Durmus and Eric Moulines.
\newblock High-dimensional {B}ayesian inference via the unadjusted {L}angevin
  algorithm.
\newblock \emph{Bernoulli}, 25:\penalty0 2854--2882, 2019.

\bibitem[Dwivedi et~al.(2019)Dwivedi, Chen, Wainwright, and Yu]{Dwivedi2019log}
Raaz Dwivedi, Yuansi Chen, Martin~J. Wainwright, and Bin Yu.
\newblock Log-concave sampling: {M}etropolis-{H}astings algorithms are fast.
\newblock \emph{Journal of Machine Learning Research}, 20\penalty0
  (183):\penalty0 1--42, 2019.

\bibitem[Fort et~al.(2003)Fort, Moulines, Roberts, and
  Rosenthal]{fort2003geometric}
G~Fort, E~Moulines, Gareth~O Roberts, and JS~Rosenthal.
\newblock On the geometric ergodicity of hybrid samplers.
\newblock \emph{Journal of Applied Probability}, 40\penalty0 (1):\penalty0
  123--146, 2003.

\bibitem[Gagnon et~al.(2023)Gagnon, Maire, and Zanella]{gagnon2023improving}
Philippe Gagnon, Florian Maire, and Giacomo Zanella.
\newblock Improving multiple-try {M}etropolis with local balancing.
\newblock \emph{Journal of Machine Learning Research}, 24\penalty0
  (248):\penalty0 1--59, 2023.

\bibitem[Gelman and Rubin(1992)]{gelman1992inference}
Andrew Gelman and Donald~B Rubin.
\newblock Inference from iterative simulation using multiple sequences.
\newblock \emph{Statistical Science}, 7\penalty0 (4):\penalty0 457--472, 1992.

\bibitem[George and McCulloch(1997)]{george1997approaches}
Edward~I George and Robert~E McCulloch.
\newblock Approaches for {B}ayesian variable selection.
\newblock \emph{Statistica Sinica}, pages 339--373, 1997.

\bibitem[Gin{\'e} et~al.(1996)Gin{\'e}, Grimmett, and
  Saloff-Coste]{gine1996lectures}
Evarist Gin{\'e}, Geoffrey~R Grimmett, and Laurent Saloff-Coste.
\newblock \emph{Lectures on probability theory and statistics: {\'E}cole
  d’{\'e}t{\'e} de {P}robabilit{\'e}s de {S}aint-{F}lour {XXVI}-1996}.
\newblock Springer, 1996.

\bibitem[Girolami and Calderhead(2011)]{girolami2011riemann}
Mark Girolami and Ben Calderhead.
\newblock Riemann manifold {L}angevin and {H}amiltonian {M}onte {C}arlo
  methods.
\newblock \emph{Journal of the Royal Statistical Society Series B: Statistical
  Methodology}, 73\penalty0 (2):\penalty0 123--214, 2011.

\bibitem[Gong and Flegal(2016)]{gong2016practical}
Lei Gong and James~M Flegal.
\newblock A practical sequential stopping rule for high-dimensional {M}arkov
  chain {M}onte {C}arlo.
\newblock \emph{Journal of Computational and Graphical Statistics}, 25\penalty0
  (3):\penalty0 684--700, 2016.

\bibitem[Grathwohl et~al.(2021)Grathwohl, Swersky, Hashemi, Duvenaud, and
  Maddison]{grathwohl2021oops}
Will Grathwohl, Kevin Swersky, Milad Hashemi, David Duvenaud, and Chris
  Maddison.
\newblock Oops {I} took a gradient: scalable sampling for discrete
  distributions.
\newblock In \emph{International Conference on Machine Learning}, pages
  3831--3841. PMLR, 2021.

\bibitem[Griffin et~al.(2021)Griffin, {\L}atuszy{\'n}ski, and
  Steel]{griffin2021search}
Jim~E Griffin, KG~{\L}atuszy{\'n}ski, and Mark~FJ Steel.
\newblock In search of lost mixing time: adaptive {M}arkov chain {M}onte
  {C}arlo schemes for {B}ayesian variable selection with very large $p$.
\newblock \emph{Biometrika}, 108\penalty0 (1):\penalty0 53--69, 2021.

\bibitem[Guan and Krone(2007)]{guan2007small}
Yongtao Guan and Stephen~M Krone.
\newblock Small-world {MCMC} and convergence to multi-modal distributions: from
  slow mixing to fast mixing.
\newblock \emph{The Annals of Applied Probability}, 17\penalty0 (1):\penalty0
  284--304, 2007.

\bibitem[Guan and Stephens(2011)]{guan2011bayesian}
Yongtao Guan and Matthew Stephens.
\newblock Bayesian variable selection regression for genome-wide association
  studies and other large-scale problems.
\newblock \emph{The Annals of Applied Statistics}, 5\penalty0 (3):\penalty0
  1780, 2011.

\bibitem[Guruswami(2000)]{guruswami2000rapidly}
Venkatesan Guruswami.
\newblock Rapidly mixing {M}arkov chains: {A} comparison of techniques.
\newblock \emph{arXiv preprint arXiv:1603.01512}, 2000.

\bibitem[Hastings(1970)]{hastings1970monte}
W~Keith Hastings.
\newblock {M}onte {C}arlo sampling methods using {M}arkov chains and their
  applications.
\newblock \emph{Biometrika}, 57\penalty0 (1):\penalty0 97--109, 04 1970.

\bibitem[Jarner and Hansen(2000)]{jarner2000geometric}
S{\o}ren~Fiig Jarner and Ernst Hansen.
\newblock Geometric ergodicity of {M}etropolis algorithms.
\newblock \emph{Stochastic processes and their applications}, 85\penalty0
  (2):\penalty0 341--361, 2000.

\bibitem[Jerison(2016)]{jerison2016drift}
Daniel Jerison.
\newblock \emph{The drift and minorization method for reversible {M}arkov
  chains}.
\newblock Stanford University, 2016.

\bibitem[Jerrum and Sinclair(1996)]{jerrum1996markov}
Mark Jerrum and Alistair Sinclair.
\newblock The {M}arkov chain {M}onte {C}arlo method: an approach to approximate
  counting and integration.
\newblock \emph{Approximation algorithms for NP-hard problems}, pages 482--520,
  1996.

\bibitem[Jerrum et~al.(2004)Jerrum, Son, Tetali, and
  Vigoda]{jerrum2004elementary}
Mark Jerrum, Jung-Bae Son, Prasad Tetali, and Eric Vigoda.
\newblock Elementary bounds on {P}oincar\'{e} and log-{S}obolev constants for
  decomposable {M}arkov chains.
\newblock \emph{The Annals of Applied Probability}, pages 1741--1765, 2004.

\bibitem[Jin and Hobert(2022)]{jin2022dimension}
Zhumengmeng Jin and James~P Hobert.
\newblock Dimension free convergence rates for {G}ibbs samplers for {B}ayesian
  linear mixed models.
\newblock \emph{Stochastic Processes and their Applications}, 148:\penalty0
  25--67, 2022.

\bibitem[Johndrow et~al.(2020)Johndrow, Orenstein, and
  Bhattacharya]{johndrow2020scalable}
James Johndrow, Paulo Orenstein, and Anirban Bhattacharya.
\newblock Scalable approximate {MCMC} algorithms for the horseshoe prior.
\newblock \emph{Journal of Machine Learning Research}, 21\penalty0
  (73):\penalty0 1--61, 2020.

\bibitem[Kim and Rockova(2023)]{kim2023mixing}
Jungeum Kim and Veronika Rockova.
\newblock On mixing rates for {B}ayesian {CART}.
\newblock \emph{arXiv preprint arXiv:2306.00126}, 2023.

\bibitem[Lamnisos et~al.(2009)Lamnisos, Griffin, and
  Steel]{lamnisos2009transdimensional}
Demetris Lamnisos, Jim~E Griffin, and Mark~FJ Steel.
\newblock Transdimensional sampling algorithms for {B}ayesian variable
  selection in classification problems with many more variables than
  observations.
\newblock \emph{Journal of Computational and Graphical Statistics}, 18\penalty0
  (3):\penalty0 592--612, 2009.

\bibitem[Levin and Peres(2017)]{levin2017markov}
David~A Levin and Yuval Peres.
\newblock \emph{Markov chains and mixing times}, volume 107.
\newblock American Mathematical Society, 2017.

\bibitem[Li et~al.(2023)Li, Smith, and Zhou]{li2023importance}
Guanxun Li, Aaron Smith, and Quan Zhou.
\newblock Importance is important: a guide to informed importance tempering
  methods.
\newblock \emph{arXiv preprint arXiv:2304.06251}, 2023.

\bibitem[Liang et~al.(2022)Liang, Livingstone, and Griffin]{liang2022adaptive}
Xitong Liang, Samuel Livingstone, and Jim Griffin.
\newblock Adaptive random neighbourhood informed {M}arkov chain {M}onte {C}arlo
  for high-dimensional {B}ayesian variable selection.
\newblock \emph{Statistics and Computing}, 32\penalty0 (5):\penalty0 84, 2022.

\bibitem[Liu et~al.(2000)Liu, Liang, and Wong]{liu2000multiple}
Jun~S Liu, Faming Liang, and Wing~Hung Wong.
\newblock The multiple-try method and local optimization in {M}etropolis
  sampling.
\newblock \emph{Journal of the American Statistical Association}, 95\penalty0
  (449):\penalty0 121--134, 2000.

\bibitem[Madigan et~al.(1995)Madigan, York, and Allard]{madigan1995bayesian}
David Madigan, Jeremy York, and Denis Allard.
\newblock Bayesian graphical models for discrete data.
\newblock \emph{International Statistical Review/Revue Internationale de
  Statistique}, pages 215--232, 1995.

\bibitem[Madras and Randall(2002)]{madras2002markov}
Neal Madras and Dana Randall.
\newblock Markov chain decomposition for convergence rate analysis.
\newblock \emph{The Annals of Applied Probability}, pages 581--606, 2002.

\bibitem[Mira(2001)]{mira2001ordering}
Antonietta Mira.
\newblock Ordering and improving the performance of {M}onte {C}arlo {M}arkov
  chains.
\newblock \emph{Statistical Science}, pages 340--350, 2001.

\bibitem[Peres and Sousi(2015)]{peres2015mixing}
Yuval Peres and Perla Sousi.
\newblock Mixing times are hitting times of large sets.
\newblock \emph{Journal of Theoretical Probability}, 28\penalty0 (2):\penalty0
  488--519, 2015.

\bibitem[Pollard and Yang(2019)]{pollard2019rapid}
David Pollard and Dana Yang.
\newblock Rapid mixing of a {M}arkov chain for an exponentially weighted
  aggregation estimator.
\newblock \emph{arXiv preprint arXiv:1909.11773}, 2019.

\bibitem[Qin and Hobert(2019)]{qin2019convergence}
Qian Qin and James~P Hobert.
\newblock Convergence complexity analysis of {A}lbert and {C}hib’s algorithm
  for bayesian probit regression.
\newblock \emph{The Annals of Statistics}, 47\penalty0 (4):\penalty0
  2320--2347, 2019.

\bibitem[Qin and Hobert(2022)]{qin2022wasserstein}
Qian Qin and James~P Hobert.
\newblock Wasserstein-based methods for convergence complexity analysis of
  {MCMC} with applications.
\newblock \emph{The Annals of Applied Probability}, 32\penalty0 (1):\penalty0
  124--166, 2022.

\bibitem[Robert et~al.(1999)Robert, Casella, and Casella]{robert1999monte}
Christian~P Robert, George Casella, and George Casella.
\newblock \emph{Monte {C}arlo statistical methods}, volume~2.
\newblock Springer, 1999.

\bibitem[Roberts and Stramer(2002)]{roberts2002langevin}
Gareth~O Roberts and Osnat Stramer.
\newblock Langevin diffusions and {M}etropolis-{H}astings algorithms.
\newblock \emph{Methodology and computing in applied probability}, 4:\penalty0
  337--357, 2002.

\bibitem[Roberts and Tweedie(1996)]{roberts1996geometric}
Gareth~O Roberts and Richard~L Tweedie.
\newblock Geometric convergence and central limit theorems for multidimensional
  {H}astings and {M}etropolis algorithms.
\newblock \emph{Biometrika}, 83\penalty0 (1):\penalty0 95--110, 1996.

\bibitem[Robinson(1977)]{robinson1977counting}
Robert~W Robinson.
\newblock Counting unlabeled acyclic digraphs.
\newblock In \emph{Combinatorial Mathematics V: Proceedings of the Fifth
  Australian Conference, Held at the Royal Melbourne Institute of Technology,
  August 24--26, 1976}, pages 28--43. Springer, 1977.

\bibitem[Rosenthal(1995)]{rosenthal1995minorization}
Jeffrey~S Rosenthal.
\newblock Minorization conditions and convergence rates for {M}arkov chain
  {M}onte {C}arlo.
\newblock \emph{Journal of the American Statistical Association}, 90\penalty0
  (430):\penalty0 558--566, 1995.

\bibitem[Rosenthal(2011)]{rosenthal2011optimal}
Jeffrey~S Rosenthal.
\newblock Optimal proposal distributions and adaptive {MCMC}.
\newblock In \emph{Handbook of Bayesian Variable Selection}, pages 93--111.
  Chapman \& Hall/CRC Boca Raton, FL, 2011.

\bibitem[Roy and Hobert(2007)]{roy2007convergence}
Vivekananda Roy and James~P Hobert.
\newblock Convergence rates and asymptotic standard errors for {M}arkov chain
  {M}onte {C}arlo algorithms for {B}ayesian probit regression.
\newblock \emph{Journal of the Royal Statistical Society Series B: Statistical
  Methodology}, 69\penalty0 (4):\penalty0 607--623, 2007.

\bibitem[Sinclair(1992)]{sinclair1992improved}
Alistair Sinclair.
\newblock Improved bounds for mixing rates of {M}arkov chains and
  multicommodity flow.
\newblock \emph{Combinatorics, Probability and Computing}, 1\penalty0
  (4):\penalty0 351--370, 1992.

\bibitem[Tadesse and Vannucci(2021)]{tadesse2021handbook}
M.G. Tadesse and M.~Vannucci.
\newblock \emph{Handbook of {B}ayesian variable selection}.
\newblock CRC Press, 2021.

\bibitem[Tang and Yang(2022)]{tang2022computational}
Rong Tang and Yun Yang.
\newblock On the computational complexity of {M}etropolis-adjusted {L}angevin
  algorithms for {B}ayesian posterior sampling.
\newblock \emph{arXiv preprint arXiv:2206.06491}, 2022.

\bibitem[Titsias and Yau(2017)]{titsias2017hamming}
Michalis~K Titsias and Christopher Yau.
\newblock The {H}amming ball sampler.
\newblock \emph{Journal of the American Statistical Association}, 112\penalty0
  (520):\penalty0 1598--1611, 2017.

\bibitem[Woodard et~al.(2009)Woodard, Schmidler, and
  Huber]{woodard2009conditions}
Dawn~B Woodard, Scott~C Schmidler, and Mark Huber.
\newblock Conditions for rapid mixing of parallel and simulated tempering on
  multimodal distributions.
\newblock \emph{The Annals of Applied Probability}, 19\penalty0 (2):\penalty0
  617--640, 2009.

\bibitem[Yang and Rosenthal(2023)]{yang2023complexity}
Jun Yang and Jeffrey~S Rosenthal.
\newblock Complexity results for {MCMC} derived from quantitative bounds.
\newblock \emph{The Annals of Applied Probability}, 33\penalty0 (2):\penalty0
  1459--1500, 2023.

\bibitem[Yang et~al.(2016)Yang, Wainwright, and Jordan]{yang2016computational}
Yun Yang, Martin~J Wainwright, and Michael~I Jordan.
\newblock On the computational complexity of high-dimensional {B}ayesian
  variable selection.
\newblock \emph{The Annals of Statistics}, 44\penalty0 (6):\penalty0
  2497--2532, 2016.

\bibitem[Zanella(2020)]{zanella2020informed}
Giacomo Zanella.
\newblock Informed proposals for local {MCMC} in discrete spaces.
\newblock \emph{Journal of the American Statistical Association}, 115\penalty0
  (530):\penalty0 852--865, 2020.

\bibitem[Zanella and Roberts(2019)]{zanella2019scalable}
Giacomo Zanella and Gareth Roberts.
\newblock Scalable importance tempering and {B}ayesian variable selection.
\newblock \emph{Journal of the Royal Statistical Society Series B: Statistical
  Methodology}, 81\penalty0 (3):\penalty0 489--517, 2019.

\bibitem[Zhang et~al.(2022)Zhang, Liu, and Liu]{zhang2022langevin}
Ruqi Zhang, Xingchao Liu, and Qiang Liu.
\newblock A {L}angevin-like sampler for discrete distributions.
\newblock In \emph{International Conference on Machine Learning}, pages
  26375--26396. PMLR, 2022.

\bibitem[Zhou and Chang(2023)]{zhou2021complexity}
Quan Zhou and Hyunwoong Chang.
\newblock Complexity analysis of {B}ayesian learning of high-dimensional {DAG}
  models and their equivalence classes.
\newblock \emph{The Annals of Statistics}, 51\penalty0 (3):\penalty0
  1058--1085, 2023.

\bibitem[Zhou and Guan(2019)]{zhou2019fast}
Quan Zhou and Yongtao Guan.
\newblock Fast model-fitting of {B}ayesian variable selection regression using
  the iterative complex factorization algorithm.
\newblock \emph{Bayesian analysis}, 14\penalty0 (2):\penalty0 573, 2019.

\bibitem[Zhou and Smith(2022)]{zhou2022rapid}
Quan Zhou and Aaron Smith.
\newblock Rapid convergence of informed importance tempering.
\newblock In \emph{International Conference on Artificial Intelligence and
  Statistics}, pages 10939--10965. PMLR, 2022.

\bibitem[Zhou et~al.(2022)Zhou, Yang, Vats, Roberts, and
  Rosenthal]{zhou2022dimension}
Quan Zhou, Jun Yang, Dootika Vats, Gareth~O Roberts, and Jeffrey~S Rosenthal.
\newblock Dimension-free mixing for high-dimensional {B}ayesian variable
  selection.
\newblock \emph{Journal of the Royal Statistical Society Series B: Statistical
  Methodology}, 84\penalty0 (5):\penalty0 1751--1784, 2022.

\bibitem[Zhou(2010)]{zhou2010thresholded}
Shuheng Zhou.
\newblock Thresholded {L}asso for high dimensional variable selection and
  statistical estimation.
\newblock \emph{arXiv preprint arXiv:1002.1583}, 2010.

\bibitem[Zhuo and Gao(2021)]{zhuo2021mixing}
Bumeng Zhuo and Chao Gao.
\newblock Mixing time of {M}etropolis-{H}astings for {B}ayesian community
  detection.
\newblock \emph{Journal of Machine Learning Research}, 22:\penalty0 10--1,
  2021.

\end{thebibliography}

\end{document}